\newcommand{\Questions[1]}{\ifthenelse{\boolean{ESA}}{\textcolor{magenta}{#1}}{#1}}
\newcommand\prob{\mathbb{P}}
\theoremstyle{plain}
\newtheorem{thm}{Theorem}[section]
\newtheorem{lemma}[thm]{Lemma}
\newtheorem{prop}[thm]{Proposition}
\newtheorem{cor}[thm]{Corollary}
\theoremstyle{definition}
\newtheorem{ex}[thm]{Example}
\theoremstyle{remark}
\newcommand{\PP}{\mathbb P}
\newcommand{\MRCA}{\operatorname{MRCA}}
\newcommand{\desc}{\operatorname{desc}_\mathcal X}
\newcommand{\tc}{\text{:}} 
\title[Split probabilities]{Split probabilities and species tree inference under the multispecies coalescent model}
\date{\today}                    
\author{E. S. Allman}
\address{Department of Mathematics and Statistics, University of Alaska Fairbanks, PO Box 756660, Fairbanks, AK 99775 USA}
\email{e.allman@alaska.edu}
\author{J. H. Degnan}
\address{Department
of Mathematics and Statistics, The University of New Mexico, Albuquerque, NM  87131}
\email{jamdeg@unm.edu }
\author{J. A. Rhodes}
\address{Department of Mathematics and Statistics, University of Alaska Fairbanks, PO Box 756660, Fairbanks, AK 99775 USA}
\email{j.rhodes@alaska.edu}
\begin{document}

\keywords{multispecies coalescent model, split probability, species tree identifiability}

\begin{abstract}
Using topological summaries of gene trees as a basis for species tree inference is a promising approach to obtain acceptable speed on genomic-scale datasets, and 
to avoid some undesirable modeling assumptions.  Here we study the probabilities of 
splits on gene trees under the multispecies coalescent model, and how their features 
might inform species tree inference. After investigating the behavior of split consensus 
methods, we investigate split invariants --- that is, polynomial relationships 
between split probabilities. These invariants are then used to show that, even though a split is an unrooted notion, split 
probabilities retain enough information to identify the rooted species tree topology for
trees of more than 5 taxa, with one possible 6-taxon exception.\end{abstract}

\maketitle

\section{Introduction}

As advances in technology have allowed for the collection of genomic
scale data across a collection of organisms, it has been frequently
observed that phylogenetic trees inferred from single genes for a
fixed taxon set often differ from one another. Improving inference of
species relationships requires addressing such gene tree discordance
in a principled way. While there are a number of biological processes
that might cause this discord, including hybridization or other forms
of horizontal gene transfer, \emph{incomplete lineage sorting} is an
especially common source of gene tree incongruence when times between
speciation events are short and/or population sizes are large. Incomplete lineage sorting is modeled by the
the \emph{multispecies coalescent model}, an extension of the standard coalescent model 
describing gene tree formation within a single population.

Many methods of species tree inference based on the multspecies
coalescent have been proposed. The Baysian approaches of the software
*BEAST \cite{heled2010} and Mr.Bayes/BEST \cite{Liu2007,MrBayes}
perform simultaneous gene tree and species tree inference under a
combined coalescent and sequence evolution model. The SVDquartets
method \cite{SVDquartets} bypasses inference of individual gene trees,
yet still gives a statistically consistent estimate of the species
tree from sequence data. Alternatively, gene trees inferred by
traditional phylogenetic methods can be used as input for a
subsequent inference of a species tree, in
methods such as Rooted triple Consensus \cite{RTcon2008}, STEM
\cite{kubatko2009}, STAR \cite{Liu2009},
$\text{NJ}_{st}$/U-STAR/ASTRID  \cite{Liu2011,adr2016,ASTRID}, MP-EST
\cite{Liu2010}, BUCKy \cite{Larget2010} and ASTRAL-II \cite{ASTRALII}. While theoretical
justification for these two-stage approaches generally ignores gene
tree inference error, they can be applied to much larger data sets
(more taxa and more genes) than the computationally intensive Bayesian
algorithms, and have exhibited strong performance in simulations. Such
scalability makes them highly attractive, and motivates further
exploration of their underpinnings.

A fundamental issue for any inference of a species tree is how to
relate the time scale used in the multispecies coalescent model on a
species tree to those in sequence evolution models used on gene
trees. The coalescent time scale can be measured in number of generations divided by
population size, while the sequence evolution model is generally in
number of substitutions per site. Assumptions such as a constant
mutation rate over the gene tree (implying all gene trees are
ultrametric) and a constant population size over the species tree are
sometimes made, despite their implausibility. While these assumptions
can be relaxed somewhat through more elaborate modeling, it is
difficult to test the robustness of inference when they are violated.

An alternative way of addressing this difficult time scale issue is to
simply discard all metric information inferred about gene
trees, and only use their topological features to infer a species
tree.  Although discarding such information is undesirable if one can
validly relate time scales, one can view it as a conservative approach
to avoid reliance on unjustified assumptions.  

Some methods go
further, and only consider summaries of the inferred topological gene
trees, such as displayed quartets (unrooted 4-taxon trees), rooted
triples (rooted 3-taxon trees), clades (all taxa descended from an
internal node of the rooted tree) or splits (bipartitions of taxa
induced by an edge in the tree). 
Among current methods, Rooted triple consensus, MP-EST, BUCKy, STAR, ASTRAL-II, and $\text{NJ}_{st}$/U-STAR/ASTRID  are all of this type. (While for Rooted triple Consensus, MP-EST, BUCKy and ASTRAL-II this is obvious from their formulations, for STAR and $\text{NJ}_{st}$ the connection to clades and splits was established in \cite{adr2013} and \cite{adr2016}.) 
These methods all use topological summaries of a sample of gene trees, rather than the full gene trees, in ways allowing for statistically consistent species tree inference when the gene trees are sampled from the multispecies coalescent model without error.

\medskip

In this work, we undertake a theoretical study of the probabilities of
splits on gene trees arising from the multispecies coalescent, with
the aim of better understanding how species tree inference may be
performed from gene tree split information. This parallels several
previous works, in which we have shown that rooted species tree
topologies are identifiable from unrooted gene tree topologies or from
clades displayed on gene trees, and unrooted species tree topologies
are identifiable from gene tree quartets.

A pleasant outcome of our study is that gene tree split probabilities 
generally retain enough information on the species tree that they determine both
its topology and its root, despite the fact that splits themselves are
an unrooted notion. Specifically, for all species trees on 5 or more taxa, with one 6-taxon exception, for generic edge lengths, 
the rooted species tree topology is identifiable.
Translating to an empirical setting, this means
that one should be able to develop a statistically consistent method
of inference of the rooted species tree from the frequencies of splits
on a collection of unrooted gene trees. This would be an extension of
the $\text{NJ}_{st}$/U-STAR/ASTRID method, which only infers an
unrooted species tree topology from the same information.  Such a
method would avoid any issues with erroneous rooting of gene trees by
inclusion of an outgroup, which has long been known as a source of
additional error (see \cite{Philippe2011} for a recent discussion),
and allow rooting even when no appropriate outgroup is available. The
method by which we show the species tree root is identifiable depends
on certain linear relationships between split probabilities, and this
simple form gives hope that it can be developed into a well-founded
statistical test for root location.

\medskip

After setting notation in Section \ref{sec:notation}, we begin our
study of split probabilities under the multispecies coalescent in
Section \ref{sec:basic} with some basic observations. These include an
analysis of the behavior of greedy split consensus from gene tree
splits, concluding that it is not a statistically consistent method of
species tree inference even on trees with as few as 5 taxa. Detailed
arguments appear in Appendix \ref{app:greedy}.

Since split probabilities are complicated expressions that are
difficult to compute for trees with more than 6 taxa, in Section
\ref{sec:invariants} we turn our attention to relationships between
such split probabilities --- that is, rather than focus on
\emph{explicit} formulas for them, we look for \emph{implicit}
formulas they must satisfy. Our methods thus are mathematically the
same as those used for studying pattern probabilities under sequence
evolution models through \emph{phylogenetic invariants}, so we adopt
the same terminology of referring to equalities as invariants.  Our
previous work \cite{adr2016} on relationships between the split
probabilities and the $\text{NJ}_{st}$/U-STAR/ASTRID inference method
quickly leads to a number of linear invariants and inequalities the
split probabilities must satisfy, tied to the quartets displayed on
the species tree. These depend only on the unrooted species tree, and
thus give no information on its root. However, building on results in
\cite{adr2011b}, we then find additional split invariants that depend
on the clades displayed on the species tree, which thus give some
information about the root location. While our theoretical work gives
only linear invariants, higher degree ones also exist. Unfortunately a
computational determination of them was successful only for 5-taxon
trees, and their structure remains mysterious, but we report them in
Appendix \ref{app:grobner}.

In Section \ref{sec:ident} we build on the results on
linear invariants from Section 4, to prove a main result: the
collection of split probabilities under the multispecies coalescent
model determines the species tree topology, including the root
location (with one exception).  This \emph{identifiability} result, 
Theorem \ref{thm:rootID}, holds generically, i.e., for
all edge lengths on the species tree not in a set of measure zero.  For
most trees testing whether the invariants found in the previous
section vanish is sufficient for locating the root; however, for
certain trees these tests leave several possibilities for the
branching pattern near the root of the tree.  Motivated by known
invariants, we formulate some linear inequalities that resolve these
ambiguities in all cases, except for a particular unrooted 6-taxon
tree shape. Establishing that these inequalities hold is accomplished
by a laborious technical argument, which is relegated to Appendix
\ref{app:proofs}.

\section{Notation}\label{sec:notation}

Let $\mathcal X$ be a finite set of taxa, whose elements are denoted
by lower case letters $a,b,c,\dots$ etc. For any specific gene, we
denote a single sample from each taxon by the corresponding upper case
letter $A,B,C,\dots$ etc. If $\mathcal A\subseteq \mathcal X$ is a
subset of taxa, the corresponding subset of genes is $\mathcal
A_g\subseteq \mathcal X_g$.

By a species tree $\sigma=(\psi,\lambda)$ on $\mathcal X$ we mean a
rooted topological phylogenetic tree $\psi$, with leaves bijectively
labelled by $\mathcal X$, together with an assignment of edge weights
$\lambda$ to its internal edges. These edge weights are specified in
coalescent units, so that the multispecies coalescent model on
$\sigma$ leads to a probability distribution on rooted
gene trees with leaves labelled by $\mathcal X_g$. (For more on the multispecies coalescent model as we use it, see \cite{adr2011a}.)  Since we limit ourselves to
the situation where one individual is sampled per taxon, no coalescent
events can occur in pendant edges of a species tree, so the lengths of
those edges are inconsequential and
omitted from our notation. (If more than one individual is sampled per
taxon, one can create an ``extended species tree'' as in
\cite{adr2011b} by grafting several pendant edges of unspecified
length to the leaf labeled by that taxon, and assigning a length to
the formerly pendant edge, to again be in the framework set here.)

The gene trees sampled from the coalescent are rooted metric binary
phylogenetic trees on $\mathcal X_g$, though by marginalization over
edge lengths and root locations, this probability distribution also
leads to one on unrooted topological binary gene trees. Unrooted
topological gene trees will be denoted by $T$, and the probability of
an unrooted topological gene tree under the multispecies coalescent on
$\sigma$ is denoted by $\PP_\sigma(T)$, 
or simply $\PP(T)$ when $\sigma$ is clear from context.

\smallskip

A \emph{split} of a set of taxa $\mathcal X$ is a bipartition
$\mathcal A\sqcup \mathcal B$ into nonempty subsets, denoted $\mathcal
A|\mathcal B=\mathcal B|\mathcal A=Sp(\mathcal A)=Sp(\mathcal B)$.  If
$\sigma$ is a species tree on $\mathcal X$ then by a split \emph{on} $\sigma$
we mean a split of $\mathcal X$ formed by deleting a single edge of
$\psi$ and grouping taxa according to the connected components of the
resulting graph.  We similarly refer to splits of $\mathcal X_g$, and
splits of $\mathcal X_g$ on specific gene trees $T$. For small sets of
taxa, it will often be convenient to use juxtaposition of elements to
represent sets, rather than standard set notation. Thus $ac=\{a,c\}$
and $Sp(ac)=Sp(\{a,c\})$.

A \emph{trivial} split is one with one of the partition blocks a
singleton set. Trivial splits for taxa $\mathcal X$ appear on every
phylogenetic tree on $\mathcal X$.
For $\mathcal A\subset \mathcal X$ we will denote the
complementary set of $\mathcal A$ by $\overline {\mathcal A}=\mathcal
X\smallsetminus \mathcal A$, so that $\mathcal A|\overline {\mathcal A}$ is a split when
$\emptyset\neq \mathcal A\subsetneq \mathcal X$.

For a species tree $\sigma$ on $\mathcal X$, by the probability of a
split $\mathcal A|\mathcal B$ of $\mathcal X$ under the multispecies
coalescent we mean
\begin{equation}\PP_\sigma(\mathcal A|\mathcal B)=\sum_{T} \PP_\sigma(T) \delta_{\mathcal A|\mathcal B}(T) \label{eq:defsplitprob}
\end{equation}
where $\delta_{\mathcal A|\mathcal B}(T)$ is 1 if $\mathcal
A_g|\mathcal B_g$ is a split on $T$, and 0 otherwise, and the sum runs
over all binary unrooted topological phylogenetic trees on $\mathcal
X_g$.  Thus the probability of a split is the probability that an
observation of a gene tree displays the corresponding split.  Note
that trivial splits have probability 1 for every species tree, since
they are on every binary gene tree.

We will also need to refer to clades and quartets of taxa. A \emph{clade} is simply a subset 
$\mathcal A\subseteq \mathcal X$. A clade $\mathcal A$ is \emph{on} the species tree $\sigma$ 
if it equals the set of all leaf-descendants of some node in the tree. A \emph{quartet} is a 4-element 
subset of $\mathcal X$ partitioned into 2-element sets, denoted as
$ab|cd$, with $a,b,c,d\in \mathcal X$. A quartet $ab|cd$ is \emph{on} $\sigma$ if the unrooted 
tree with leaves labeled $a,b,c,d$ induced from $\sigma$ has an edge separating $a,b$ from $c,d$.
 
\section{Basic observations}\label{sec:basic}

While in principle it is straightforward to compute the probabilities
of gene tree splits for a fixed species tree, in practice the work
required can be formidable. For an $n$-taxon species tree, using the
definition in equation \eqref{eq:defsplitprob}, one first must compute
probabilities of each of the $(2n-3)!! = 1 \cdot 3 \cdots (2n-3)$
unrooted topological gene trees. This can be accomplished by work of
\cite{DegnanSalter2005} or \cite{Wu2012} in finding probabilities of
all rooted topological gene trees, and then marginalizing over the
root locations. For a given split one must still sum over all unrooted
gene trees displaying that split. If the split has blocks of size $k$
and $n-k$, then there are $(2k-3)!! \, (2n-2k-3)!!$ such unrooted
trees.

Using this approach, we computed split probabilities for all species
trees on 6 or fewer taxa for use in computations discussed in later
sections, but went no further. Indeed, this approach does not seem to 
be tractable except for small species trees. On the other hand,
since the U-STAR inference methods implemented in ASTRID are based on
split frequencies \cite{ASTRID,adr2016} and perform well on large data
sets, theoretical study of these probabilities is still strongly
warranted.

As a first step, analogous to Proposition 1 of \cite{adr2011b} for
clade probabilities, we have the following.

\begin{lemma}\label{lem:trivialinv}
If $\vert \mathcal X \vert = n$, then the sum of the non-trivial split probabilities
is $n-3$.
\end{lemma}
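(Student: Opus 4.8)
The plan is to compute the sum of all non-trivial split probabilities by interchanging the order of summation. Starting from the definition in \eqref{eq:defsplitprob}, I would write
\[
\sum_{\mathcal A|\mathcal B \text{ non-trivial}} \PP_\sigma(\mathcal A|\mathcal B) = \sum_{\mathcal A|\mathcal B \text{ non-trivial}} \sum_T \PP_\sigma(T)\, \delta_{\mathcal A|\mathcal B}(T),
\]
and then swap the two sums to obtain $\sum_T \PP_\sigma(T) \bigl( \sum_{\mathcal A|\mathcal B \text{ non-trivial}} \delta_{\mathcal A|\mathcal B}(T) \bigr)$. The inner sum counts the number of non-trivial splits of $\mathcal X_g$ that the gene tree $T$ displays. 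The key observation is that this count is the same for every binary unrooted tree $T$ on the $n$ genes: since the sum runs only over binary unrooted topological trees, each such $T$ has exactly $n-3$ internal edges, and each internal edge induces exactly one non-trivial split. (The pendant edges give the trivial splits, which are excluded.)

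With that combinatorial fact in hand, the computation collapses: the inner sum equals $n-3$ independently of $T$, so
\[
\sum_{\mathcal A|\mathcal B \text{ non-trivial}} \PP_\sigma(\mathcal A|\mathcal B) = (n-3) \sum_T \PP_\sigma(T) = n-3,
\]
where the final equality uses that $\PP_\sigma$ is a probability distribution on the binary unrooted topological gene trees, so the probabilities sum to $1$. I would need to be a little careful that $\delta_{\mathcal A|\mathcal B}(T)$ correctly counts each non-trivial split exactly once: distinct internal edges of a binary tree yield distinct non-trivial splits, and conversely every non-trivial split displayed by $T$ arises from a unique internal edge, so there is a genuine bijection between internal edges and displayed non-trivial splits.

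The only genuine content is the standard fact that a binary unrooted tree on $n$ leaves has exactly $n-3$ internal edges, together with the bijection between internal edges and the non-trivial splits the tree displays; everything else is Fubini-type interchange of finite sums and the normalization of a probability distribution. I do not anticipate a real obstacle here. The one point worth stating explicitly is that the identity holds for \emph{every} species tree $\sigma$ on $n$ taxa, because the count $n-3$ of internal edges depends only on the common leaf number $n$ of the gene trees being summed over and not on $\sigma$ at all; $\sigma$ enters only through the weights $\PP_\sigma(T)$, whose total mass is $1$ regardless. This is exactly parallel to the clade analogue (Proposition 1 of \cite{adr2011b}) cited just before the statement, and I would mirror that argument.
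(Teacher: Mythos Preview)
Your proof is correct and follows essentially the same approach as the paper's own proof: interchange the order of summation, use that every binary unrooted gene tree on $n$ leaves displays a fixed number of splits, and then use that the gene tree probabilities sum to $1$. The only cosmetic difference is that the paper first sums over all splits (including trivial ones) to obtain $2n-3$ and then subtracts the $n$ trivial splits, whereas you restrict to non-trivial splits from the outset and count the $n-3$ internal edges directly.
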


\begin{proof}  
First considering all splits of $\mathcal X$, including trivial ones,
\begin{align*}
\sum_{\mathcal A|\mathcal B} \PP_\sigma({\mathcal A|\mathcal B} )=
&\sum_{\mathcal A|\mathcal B} \sum_{T} \PP_\sigma(T ) \delta_{\mathcal A|\mathcal B}(T)\\
&=
\sum_T   \PP_\sigma(T )  \sum_{\mathcal A | \mathcal B}  \delta_{\mathcal A|\mathcal B}(T) \\
&=
\sum_T   \PP_\sigma(T ) (2n-3) =2n-3.
\end{align*}
Since the $n$ trivial splits of $\mathcal X$ all have probability 1, removing them from the sum gives the claim.
\end{proof}

Another analog of a result for clade probabilities, Theorem 3 of
\cite{adr2011b}, is the content of the next Proposition.  T.~Warnow first asked if this
might hold, and C.~An\'e independently provided a proof \cite{AnePC}.

\begin{prop}
  Let $\sigma$ be a binary species tree on $\mathcal X$, with internal
  edge lengths $\lambda_i> \epsilon\ge 0$, and $\mathcal A|\mathcal B$
  a split of $\mathcal X$.  Then under the multispecies coalescent
  model if
$$\PP_\sigma(\mathcal A|\mathcal B)\ge(1/3)\exp(-\epsilon)$$ then
$\mathcal A|\mathcal B$ is a split on $\sigma$. 

Furthermore, if $(1/3)\exp(-\epsilon)$ is replaced with any smaller
number, this statement is no longer true: For any $\alpha<
(1/3)\exp(-\epsilon)$, there exists a species tree $\sigma$
with branch lengths $\lambda_i > \epsilon \ge 0$ and a
split $\mathcal A|\mathcal B$ of $\mathcal X$ not displayed on
$\sigma$ with $\PP_\sigma(\mathcal A|\mathcal B)>\alpha$.

\end{prop}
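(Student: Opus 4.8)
The plan is to prove the first (main) statement in contrapositive form: if $\mathcal A|\mathcal B$ is \emph{not} a split on $\sigma$, then $\PP_\sigma(\mathcal A|\mathcal B) < (1/3)\exp(-\epsilon)$. The idea is to dominate the split event by a single \emph{discordant quartet} event, on which the classical coalescent computation gives an exact and small probability. First I would invoke the standard quartet characterization of displayed splits: a nontrivial split $\mathcal A|\mathcal B$ is on a binary tree $\sigma$ if and only if, for every choice of $a,a'\in\mathcal A$ and $b,b'\in\mathcal B$, the quartet induced on $\sigma$ is $aa'|bb'$. Since $\mathcal A|\mathcal B$ is assumed not to be on $\sigma$, there is a quadruple $a,a'\in\mathcal A$, $b,b'\in\mathcal B$ whose induced species quartet is one of the two ``crossing'' resolutions, say $ab|a'b'$. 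The key observation is that if an unrooted gene tree $T$ displays the split $\mathcal A_g|\mathcal B_g$, then restricting $T$ to $\{A,A',B,B'\}$ forces the quartet $AA'|BB'$; hence the event that $T$ displays $\mathcal A|\mathcal B$ is contained in the event that $T$ displays the quartet $aa'|bb'$, and so $\PP_\sigma(\mathcal A|\mathcal B)$ is at most the probability that a gene tree displays $aa'|bb'$.

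Next I would evaluate that quartet-display probability. By the sampling consistency of the multispecies coalescent, the distribution of a gene tree restricted to four taxa agrees with the distribution on the induced 4-taxon species tree, whose unrooted topology is $ab|a'b'$ with internal edge length $L$. For a 4-taxon tree the classical formula states that each of the two discordant unrooted quartet topologies occurs with probability $(1/3)\exp(-L)$; since $aa'|bb'$ is discordant with $ab|a'b'$, the quartet-display probability equals $(1/3)\exp(-L)$. It then remains to bound $L$ from below. The internal edge of the induced quartet is the sum of the lengths of those internal edges of $\sigma$ whose removal separates $\{a,b\}$ from $\{a',b'\}$; at least one such edge exists and is genuinely internal (each side contains at least two leaves), and every internal edge of $\sigma$ has length exceeding $\epsilon$. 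Therefore $L>\epsilon$, and combining everything gives $\PP_\sigma(\mathcal A|\mathcal B)\le (1/3)\exp(-L)<(1/3)\exp(-\epsilon)$, which is exactly the contrapositive of the assertion.

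For the sharpness claim I would exhibit a four-taxon example, where the domination above becomes an equality. Let $\mathcal X=\{a,a',b,b'\}$ and take $\sigma$ to be the tree with unrooted topology $ab|a'b'$ and a single internal edge of length $L$ chosen with $\epsilon<L<-\ln(3\alpha)$; this interval is nonempty precisely because $\alpha<(1/3)\exp(-\epsilon)$. The split $\{a,a'\}|\{b,b'\}$ is not on $\sigma$, and on four taxa a gene tree displays this split exactly when its unrooted topology is $aa'|bb'$, so $\PP_\sigma(\{a,a'\}|\{b,b'\})=(1/3)\exp(-L)>\alpha$, with the single internal edge length $L>\epsilon$. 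Letting $\alpha$ range over all values below $(1/3)\exp(-\epsilon)$ produces the required family of examples, showing the constant cannot be decreased.

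The main obstacle is the opening reduction rather than any calculation. One must establish cleanly that the failure of $\mathcal A|\mathcal B$ to be displayed yields a crossing quadruple, and---more delicately---that the internal edge $L$ of the induced quartet inherits a \emph{full} internal edge of $\sigma$, so that the strict inequality $L>\epsilon$ holds (not merely $L\ge 0$, which would be vacuous). Once this combinatorial bookkeeping is secured, all of the probabilistic content is carried by the elementary 4-taxon quartet formula, and both halves of the proposition follow with little further work.
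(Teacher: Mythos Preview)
Your contrapositive argument for the main implication is essentially the paper's proof: find a crossing quartet $ab|a'b'$ on $\sigma$, bound the split event by the discordant gene-tree quartet $AA'|BB'$, apply the formula $(1/3)e^{-\ell}$, and observe that every branch of $\sigma$ contributing to $\ell$ is internal (a pendant edge cannot separate two taxa from two others), so $\ell>\epsilon$. The paper dispatches trivial splits in one line first; you implicitly restrict to nontrivial splits, which is harmless since trivial splits are displayed on every binary tree.

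For sharpness your four-taxon idea is sound, but the write-up slips on the paper's conventions: species trees here are \emph{rooted}, and a rooted binary tree on four taxa has \emph{two} internal edges, not ``a single internal edge of length $L$.'' The choice of rooting matters. On the balanced tree $((a,b){:}x,(a',b'){:}y)$ the discordant-quartet probability is $(1/3)e^{-(x+y)}$, and since both $x,y>\epsilon$ are required this lies below $(1/3)e^{-2\epsilon}$ and need not exceed $\alpha$. The repair is to use the caterpillar $(((a,b){:}L,a'){:}y,b')$: there the probability of the gene-tree quartet $AA'|BB'$ is $(1/3)e^{-L}$ independent of $y$, so choosing $L\in(\epsilon,-\ln(3\alpha))$ and any $y>\epsilon$ gives an example with all internal edge lengths exceeding $\epsilon$.

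The paper's sharpness construction is different in scope: for an \emph{arbitrary} nontrivial split $\mathcal A|\mathcal B$ it places some $a\in\mathcal A$, $b\in\mathcal B$ in a cherry and attaches rooted subtrees on $\mathcal A\setminus\{a\}$ and $\mathcal B\setminus\{b\}$ via very long edges, so those groups coalesce first with probability near $1$ and the split probability is forced near $(1/3)e^{-\lambda_1}$. That proves more than the proposition asks (since $\sigma$, hence $\mathcal X$, is existentially quantified), so once you correct the rooting your simpler four-taxon caterpillar is a legitimate alternative.
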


\begin{proof} The first statement holds for trivial 
splits, since they have probability 1 and are displayed on every binary $\sigma$. 

Now
consider a non-trivial split $\mathcal A|\mathcal B$ not displayed on $\sigma$.
Then there exist $a_1, a_2\in \mathcal A$, $b_1,b_2\in \mathcal B$ so the quartet $a_1a_2|b_1b_2$ is not displayed on $\sigma$. Thus, by 
\cite[Section 4.1]{adr2011a} 
the probability that an unrooted gene tree displays the quartet $A_1A_2|B_1B_2$ is
$$\PP_\sigma(A_1A_2|B_1B_2)=(1/3)\exp(-\ell)<(1/3)\exp(-\epsilon)$$
where $\ell>\epsilon$ is the sum of the lengths of all branches in
$\sigma$ that form the central edge in the induced quartet tree on
$a_1,a_2,b_1,b_2$.  But since displaying the split $\mathcal
A_g|\mathcal B_g$ is a subevent of displaying $A_1A_2|B_1B_2$, this
implies that if $\mathcal A|\mathcal B$ is not displayed on $\sigma$
then
$$\PP_\sigma(\mathcal A|\mathcal B)<(1/3)\exp(-\epsilon),$$ establishing the first claim.

For the second claim, we construct an example.  For any non-trivial
split $\mathcal A|\mathcal B$, pick $a\in \mathcal A$, $b\in \mathcal
B$, and let $\mathcal A'=\mathcal A\smallsetminus \{a\}$, $\mathcal
B'=\mathcal B\smallsetminus \{b\}$. Pick any binary rooted tree
$\sigma_1$ on $\mathcal A'$, and any binary rooted tree $\sigma_2$ on
$\mathcal B'$, with internal branch lengths greater than $\epsilon$,
and consider the tree $$\sigma=( (
(a,b):\lambda_1,\sigma_1:\lambda_2):\lambda_3,\sigma_2:\lambda_4).$$

Note $\mathcal A|\mathcal B$ is not a split on this tree, yet if
$\lambda_2,\lambda _4$ are sufficiently large,  so that
$\mathcal A'_g$ and $\mathcal B'_g$ are almost certainly clades on a
gene tree, then the probability of a gene tree displaying the split
$\mathcal A_g | \mathcal B_g$ can be made arbitrarily close to
$(1/3)\exp(-\lambda_1)$. If $\alpha<(1/3)\exp(-\epsilon)$, there is a
choice of $\lambda_1>\epsilon$ so that
$\alpha<(1/3)\exp(-\lambda_1)$. Thus we can ensure
$\PP_\sigma(\mathcal A|\mathcal B) >\alpha.$
\end{proof}

\begin{cor} Suppose $\sigma$ is a binary species tree on $\mathcal X$,
  with positive edge lengths, and $\mathcal A|\mathcal B$ a split of
  $\mathcal X$.  Then under the multispecies coalescent model if
$$\PP_\sigma(\mathcal A|\mathcal B)\ge1/3$$ then
$\mathcal A|\mathcal B$ is a split on $\sigma$. 
\end{cor}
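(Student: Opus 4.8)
The plan is to obtain the Corollary as the special case $\epsilon = 0$ of the preceding Proposition. The hypothesis that $\sigma$ has positive edge lengths means every internal edge length satisfies $\lambda_i > 0$, which is precisely the Proposition's requirement $\lambda_i > \epsilon \ge 0$ once we take $\epsilon = 0$. First I would make this substitution and observe that the threshold $(1/3)\exp(-\epsilon)$ collapses to $(1/3)\exp(0) = 1/3$. The conclusion of the Proposition then reads: if $\PP_\sigma(\mathcal A|\mathcal B) \ge 1/3$ then $\mathcal A|\mathcal B$ is a split on $\sigma$, which is exactly the assertion of the Corollary.

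There is essentially no obstacle to overcome, since all the substantive work has already been done in proving the Proposition; the only point worth checking is that the strict inequality $\lambda_i > 0$ is compatible with the Proposition's hypothesis, and it is, because the Proposition permits $\epsilon = 0$. For completeness one could instead re-run the argument of the Proposition directly: for a non-trivial split $\mathcal A|\mathcal B$ not displayed on $\sigma$ one extracts a quartet $a_1a_2|b_1b_2$ not on $\sigma$ whose central edge in the induced quartet tree has positive total length $\ell > 0$, so that $\PP_\sigma(A_1A_2|B_1B_2) = (1/3)\exp(-\ell) < 1/3$; since the event of displaying $\mathcal A_g|\mathcal B_g$ is a subevent of displaying $A_1A_2|B_1B_2$, this forces $\PP_\sigma(\mathcal A|\mathcal B) < 1/3$. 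Either route establishes the contrapositive of the desired implication immediately, and trivial splits are handled as before since they have probability $1 \ge 1/3$ and appear on every binary $\sigma$.
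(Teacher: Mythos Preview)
Your proof is correct and matches the paper's own approach exactly: the paper simply writes ``Set $\epsilon = 0$ in the preceding theorem.'' Your additional paragraph re-running the quartet argument directly is a valid alternative but not needed.
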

\begin{proof}
Set $\epsilon=0$ in the preceding theorem.
\end{proof}

This proposition has implications for a greedy split consensus
approach to inferring splits in a species tree. Recall that in this
method, one first orders splits observed in a gene tree sample by
decreasing frequency, arbitrarily (or randomly) breaking ties if
necessary. Proceeding in order down the list, splits are accepted if
they are compatible with all previously accepted ones.  For a large sample of 
gene trees from the multispecies coalescent,
a fully-resolved unrooted tree is likely to be returned, since all splits have positive probability.
The above corollary
implies that if one only allows the acceptance of splits of frequency
greater than 1/3, then this method will not be misleading; as the size
of the gene tree sample grows, the probability of accepting only
splits on the species tree goes to 1. While a tree displaying the
accepted splits may not be fully resolved, one can have confidence in
the splits that are displayed.

\smallskip

To show that accepting splits below a frequency 1/3 cutoff in greedy
split consensus would not lead to consistent species tree inference,
we investigate 5-taxon trees in more detail. Up to permutation of
taxon labels, there are three species trees to consider:
\begin{center}
\begin{tabular}{ll}
    balanced tree & $\sigma_{bal}=(((ab)\tc x, c)\tc y,(de)\tc z)$\\    
    pseudocaterpillar tree &$\sigma_{ps}= (((ab)\tc x, (de)\tc y)\tc z,c)$ \\
           caterpillar tree & $\sigma_{cat}=((((ab)\tc x,c)\tc y,d)\tc z,e)$
\end{tabular}
\end{center}
Although we use the same variables $x,y,z$ to denote the three
internal edge lengths in each tree, note that these have no
relationship across the species trees.  All split probabilities can be
expressed as polynomials in the transformed edge lengths
$$X = \exp(-x),\  Y = \exp(-y),\ Z = \exp(-z).$$
Note that with this transformation, values of $X$ close to
$1$ correspond to small branch lengths $x$, and values of $X$ close to
$0$ correspond to large branch lengths $x$.

The following two propositions are proved in Appendix \ref{app:greedy}.

\begin{prop}  \label{prop:greedynoncat} For the $5$-taxon balanced and pseudocaterpillar species trees, 
$\sigma = \sigma_{bal}$, 
$\sigma_{ps}$,  with positive branch lengths,
$$\prob_\sigma (Sp(ab)),\ \prob_\sigma (Sp(de)) > \prob_\sigma (\mathcal S)$$ 
for each of the
eight other non-trivial splits $\mathcal S$, so the splits displayed on the species tree have the 
highest probability of appearing on gene trees. 

When restricted to these species trees, as the sample size goes to
infinity greedy split consensus infers the correct unrooted species
tree topology with probability approaching 1.
\end{prop}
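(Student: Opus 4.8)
The plan is to establish the two split-probability inequalities first, and then deduce the consistency of greedy consensus as an essentially immediate consequence.

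For the inequalities, the strategy is to compute all ten non-trivial split probabilities explicitly as polynomials in the transformed edge lengths $X=\exp(-x)$, $Y=\exp(-y)$, $Z=\exp(-z)$, using the marginalization approach described in Section \ref{sec:basic}: enumerate the $(2\cdot5-3)!!=105$ unrooted topological gene trees, obtain their probabilities via the rooted gene tree formulas of \cite{DegnanSalter2005} or \cite{Wu2012} followed by summing over root locations, and then for each split sum $\PP_\sigma(T)$ over those $T$ displaying it. Many of these splits are symmetric under the obvious automorphisms of $\sigma_{bal}$ and $\sigma_{ps}$ (the swaps $a\leftrightarrow b$, $d\leftrightarrow e$, and for the balanced tree the cherry-swap exchanging $ab$ with $de$), so the number of genuinely distinct probability polynomials to track is small. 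Once I have closed-form expressions for $\PP_\sigma(Sp(ab))$, $\PP_\sigma(Sp(de))$, and each competing split probability $\PP_\sigma(\mathcal S)$, I would verify that each difference $\PP_\sigma(Sp(ab))-\PP_\sigma(\mathcal S)$ is strictly positive for all $X,Y,Z\in(0,1)$, i.e.\ for all positive branch lengths. Since these differences are polynomials in $X,Y,Z$, the cleanest route is to show each difference factors with manifestly positive coefficients, or to exhibit it as a positive combination of terms $X^iY^jZ^k(1-X)^{\cdots}$ that are each nonnegative on the unit cube and not all zero.

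Given the inequalities, the consistency claim follows from the ordering logic of greedy consensus combined with the law of large numbers. For a sample of $N$ gene trees, each observed split frequency converges almost surely to its true probability $\PP_\sigma(\mathcal S)$ as $N\to\infty$. Because $Sp(ab)$ and $Sp(de)$ are the two non-trivial splits on the unrooted shape of both $\sigma_{bal}$ and $\sigma_{ps}$ (the $5$-taxon unrooted tree has exactly two internal edges, hence two non-trivial splits), and because these two splits are compatible with each other, it suffices that their sample frequencies exceed those of every competing non-trivial split with probability approaching $1$. The strict inequalities guarantee a positive gap $\PP_\sigma(Sp(ab))-\PP_\sigma(\mathcal S)>0$, so for large $N$ both $Sp(ab)$ and $Sp(de)$ rank strictly above all competitors with probability tending to $1$; greedy consensus then accepts both (they are mutually compatible and appear first), yielding exactly the correct unrooted species tree topology.

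The main obstacle is the polynomial positivity verification: while the inequalities are intuitively clear (the displayed splits should be the most probable), proving $\PP_\sigma(Sp(ab))-\PP_\sigma(\mathcal S)>0$ uniformly on the open cube $(0,1)^3$ for all eight competitors requires care, since a naive difference of polynomials need not have all-positive coefficients even when it is positive on the cube. I expect to handle this by grouping the competing splits by symmetry class to reduce the number of distinct inequalities, and then for each representative either factoring the difference into a product of terms each positive on $(0,1)^3$, or rewriting it in the Bernstein-type basis $\{X^i(1-X)^{a-i}\}$ etc.\ where positivity of all coefficients is sufficient for positivity on the cube. The detailed bookkeeping of these polynomial manipulations is what is deferred to Appendix \ref{app:greedy}.
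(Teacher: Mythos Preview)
Your proposal is correct and follows essentially the same approach as the paper: compute the ten split probabilities explicitly as polynomials in $X,Y,Z$ (these appear in Table~\ref{tab:splitprobs}), exploit symmetries to reduce the number of distinct cases, and then verify that each difference $s_1-s_i$ and $s_{10}-s_i$ is strictly positive on $(0,1)^3$. The paper's positivity arguments are more elementary than what you propose --- rather than factoring or passing to a Bernstein basis, it simply groups terms and bounds each monomial crudely (e.g.\ $1-\tfrac16 XY^3Z-\tfrac23 X-\tfrac16 XYZ>1-\tfrac16-\tfrac23-\tfrac16=0$), occasionally appealing to an easy one-variable minimum on $[0,1]$; this turns out to suffice for every case and is less laborious than a systematic Bernstein rewrite. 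One small slip in your outline: there is no automorphism of $\sigma_{bal}=(((a,b),c),(d,e))$ exchanging $\{a,b\}$ with $\{d,e\}$, since $c$ is attached asymmetrically, so $s_1$ and $s_{10}$ are genuinely different polynomials and both sets of inequalities must be checked separately (as the paper does).
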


\smallskip

\begin{prop} \label{prop:greedycat} For the 5-taxon caterpillar
  species tree $\sigma = \sigma_{cat}$ with positive branch lengths,
  $\prob_\sigma (Sp(ab)) > \prob_\sigma (\mathcal S)$ for all
  non-trivial splits $\mathcal S \neq Sp(de)$, and $\prob_\sigma
  (Sp(de)) >\prob_\sigma(Sp(ce))$.

  If $\prob_\sigma (Sp(de))>\prob_\sigma (Sp(cd))$ for such a species
  tree, as the sample size goes to infinity greedy split consensus
  infers the correct unrooted species tree topology with probability
  approaching 1.

  However, if $\prob_\sigma (Sp(de))<\prob_\sigma (Sp(cd))$, it infers
  the incorrect unrooted species tree topology $((a,b),e,(c,d))$ with
  probability approaching 1.  The parameter region in which this
  occurs is
\begin{equation}  \label{E:too-greedy}
18 + XY^3Z^6 + 2XY^3 - 3XY - 18 Y <0.
\end{equation}
\end{prop}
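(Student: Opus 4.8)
The plan is to pass to the infinite-sample limit, reduce the greedy outcome to a comparison of three split probabilities, and settle that comparison with one explicit polynomial computation. First I would replace empirical frequencies by true probabilities: by the law of large numbers the split frequencies converge to $\prob_\sigma(\mathcal S)$, so for $\sigma$ with edge lengths off a measure-zero set (where the ten nontrivial $2|3$ split probabilities are distinct) the greedy ordering stabilizes to the ordering by true probability and ties are irrelevant, and it suffices to run greedy consensus on the exact probabilities. The governing combinatorial fact is a compatibility analysis on the $5$-taxon set: the only nontrivial splits compatible with $Sp(ab)$ are $Sp(cd),Sp(ce),Sp(de)$ (each of the six splits isolating exactly one of $a,b$ meets all four intersection conditions, hence is incompatible), and these three are pairwise incompatible. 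Since the opening inequalities of the proposition give $\prob_\sigma(Sp(ab))>\prob_\sigma(\mathcal S)$ for every other nontrivial $\mathcal S$, greedy accepts $Sp(ab)$ first and then exactly one more split, the most probable among $Sp(cd),Sp(ce),Sp(de)$; the returned unrooted tree is $((a,b),c,(d,e))$, $((a,b),e,(c,d))$, or $((a,b),d,(c,e))$ accordingly. Invoking the other opening inequality $\prob_\sigma(Sp(de))>\prob_\sigma(Sp(ce))$ removes $Sp(ce)$ from contention, so the winner is $Sp(de)$ or $Sp(cd)$, which is precisely the asserted dichotomy and reduces everything to the sign of $\prob_\sigma(Sp(cd))-\prob_\sigma(Sp(de))$.

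The computational heart is to produce closed forms for these two probabilities. I would use that, for a rooted gene tree, the unrooted split $\mathcal A|\mathcal B$ is displayed exactly when $\mathcal A$ or $\mathcal B$ is a clade, so by inclusion--exclusion each split probability equals $\prob_\sigma(\mathcal A\text{ clade})+\prob_\sigma(\overline{\mathcal A}\text{ clade})-\prob_\sigma(\text{both clades})$, which I can evaluate with the coalescent clade-probability machinery of \cite{adr2011b}. Tracking lineages through the populations on edges $x,y,z$ and the root, with the standard coalescent probabilities $g_{ij}(t)$---whose rates $\binom{2}{2},\binom{3}{2},\binom{4}{2}$ generate exactly the factors $Y$, $Y^3$, and $Z^6=\exp(-6z)$ seen in the target---gives $\prob_\sigma(Sp(cd))$ and $\prob_\sigma(Sp(de))$ as explicit polynomials in $X,Y,Z$. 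I expect their difference to factor as a strictly positive quantity times $18+XY^3Z^6+2XY^3-3XY-18Y$, so that $\prob_\sigma(Sp(de))<\prob_\sigma(Sp(cd))$ is exactly inequality \eqref{E:too-greedy}. As a check, at $Y=1$ the polynomial collapses to $X(Z^6-1)<0$ (matching that a short edge $y$ lets $c$ pair with $d$), while as $Y\to 0$ it tends to $18>0$.

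The main obstacle is this last step: carrying out the coalescent bookkeeping across four populations carrying up to four lineages, and then the algebra that collapses a superficially messy difference of clade-probability sums into the single stated factor. The opening inequalities themselves---$Sp(ab)$ dominant and $Sp(de)>Sp(ce)$---though geometrically transparent ($a,b$ form the deepest cherry, and $c$ must survive one more population than $d$ before it can cherry with $e$), require the same kind of estimate; I would establish them by a term-by-term comparison of the corresponding clade-probability expressions produced by the same computation.
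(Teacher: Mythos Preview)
Your overall strategy matches the paper's: compute the explicit split probabilities as polynomials in $X,Y,Z$, establish $s_1=\PP(Sp(ab))$ as dominant so greedy accepts it, reduce to the three compatible candidates $s_8,s_9,s_{10}$, eliminate $s_9=\PP(Sp(ce))$ via $s_{10}>s_9$, and read off the dichotomy from the sign of $s_{10}-s_8$, which is $\tfrac{1}{18}$ times the stated polynomial. Your sanity checks at $Y=1$ and $Y\to 0$ are correct.

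There is one genuine logical gap. You write that the opening inequalities give $\PP_\sigma(Sp(ab))>\PP_\sigma(\mathcal S)$ for \emph{every} other nontrivial $\mathcal S$, and conclude that greedy always accepts $Sp(ab)$ first. But the proposition explicitly excludes $\mathcal S=Sp(de)$ from that inequality, and in fact $s_{10}>s_1$ does occur for some parameters (from the table, $s_1-s_{10}$ is not of fixed sign). The paper handles this by a case split: if $s_{10}\ge s_1$ then $s_{10}$ and $s_1$ are the two largest and greedy returns the correct tree; only when $s_1>s_{10}$ does one need the further analysis of $s_8,s_9,s_{10}$. Your argument as written skips this, so it is incomplete; the repair is exactly this two-line case analysis.

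A minor side remark: your appeal to a ``measure-zero set where the ten nontrivial split probabilities are distinct'' is not available here, since cherry-swapping forces $s_2=s_5$, $s_3=s_6$, $s_4=s_7$ identically on $\sigma_{cat}$. This does not affect the argument (ties among discarded splits are harmless), but the phrasing should be adjusted. Otherwise, your plan to derive the closed forms via the clade decomposition $\PP(Sp(\mathcal A))=\PP(Cl(\mathcal A))+\PP(Cl(\overline{\mathcal A}))-\PP(Cl(\mathcal A),Cl(\overline{\mathcal A}))$ is a valid alternative to the paper's direct COAL-assisted computation, and leads to the same polynomials.
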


\begin{center}
\begin{figure}[h]
\includegraphics[width=10cm]{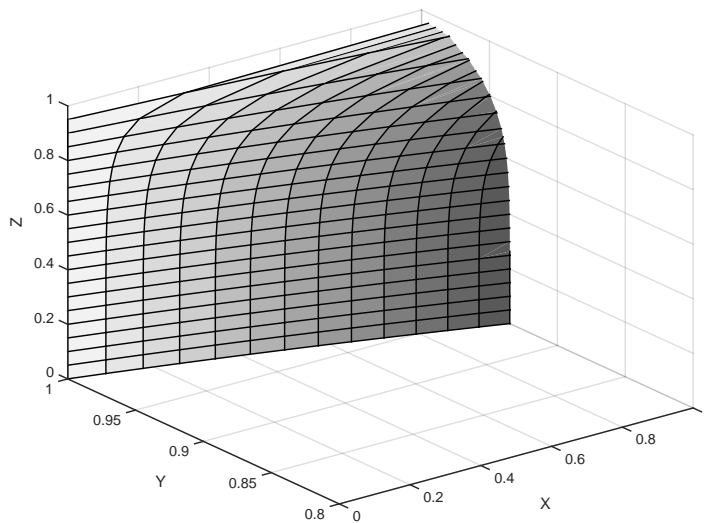}
\caption{The boundary of the too-greedy zone for the 5-taxon caterpillar tree $\sigma_{cat}$.  Greedy consensus
for split probabilities is inconsistent for all choices of parameters behind
the surface, and consistent in front of the surface.  For example, if $(X,Y,Z) \approx (.9, .96, .5)$, 
so species tree branch lengths are $(x,y,z) \approx  (0.1054,    0.0408,    0.6931)$,
then greedy
consensus with a large number of gene trees is expected to return the incorrect tree $((a,b),e,(c,d))$.} \label{fig:toogreedy}
\end{figure}
\end{center}

Figure \ref{fig:toogreedy} shows the surface dividing the regions of
parameter space on which greedy split consensus is misleading 
from that on which it is not.  We refer to the region
behind the surface, in which greedy consensus on splits is expected to
return the incorrect species tree, as the \emph{too-greedy zone}.

The analogous expression for the boundary of the five-taxon unrooted anomaly zone (the branch lengths for which the most likely unrooted gene tree does not match the unrooted caterpillar species tree) is (\cite{degnan2013anomalous}, equation (4))
\begin{equation}\label{E:uaz}
18+XY^3Z^6+2XY^3+9XY-12X-18Y<0.
\end{equation}
We note that if inequality \eqref{E:too-greedy} holds, then inequality \eqref{E:uaz} holds as well.  This means that for the five-taxon caterpillar, the too-greedy zone is a subset of the unrooted anomaly zone.  This relationship is also true for the rooted 4-taxon caterpillar case: the rooted too-greedy zone  is a subset of the rooted anomaly zone.  
In both the rooted and unrooted cases for four and five taxa, respectively, when greedy consensus is misleading, it returns the anomalous gene tree.
We leave it as an open question whether the too-greedy zones for larger trees are also subsets of the corresponding anomaly zones.

One can show that the minimum value of $Y$ on the boundary surface in
Figure \ref{fig:toogreedy} occurs when $X=1$, $Y \approx 0.93498735$,
and $Z=0$,  so
 $x = \infty$, $y \approx
0.06722228$, $z=0$.  For values of $y$
larger than this, regardless of the values of $x$ and $z$, branch
length parameters are outside the too-greedy zone, and greedy split
consensus is expected to return the correct species tree.

Moreover, if $Z = 0$ so that $z$ is an infinite branch length in
$\sigma_{cat}$, the too-greedy zone for splits coincides exactly with
the too-greedy zone for clade consensus on the $4$-taxon tree
$(((a,b),c),d)$ \cite{Degnan-etal-2009} .  This is as expected, since
placing the root of the 5-taxon caterpillar species tree ``at
infinity'' makes non-trivial splits for it correspond exactly to
clades in the 4-taxon caterpillar, by viewing $e$ as an outgroup
and noting that the lineage from $e$ must coalesce last.

\begin{figure}[h]
\begin{center}
\includegraphics[height=2.5in]{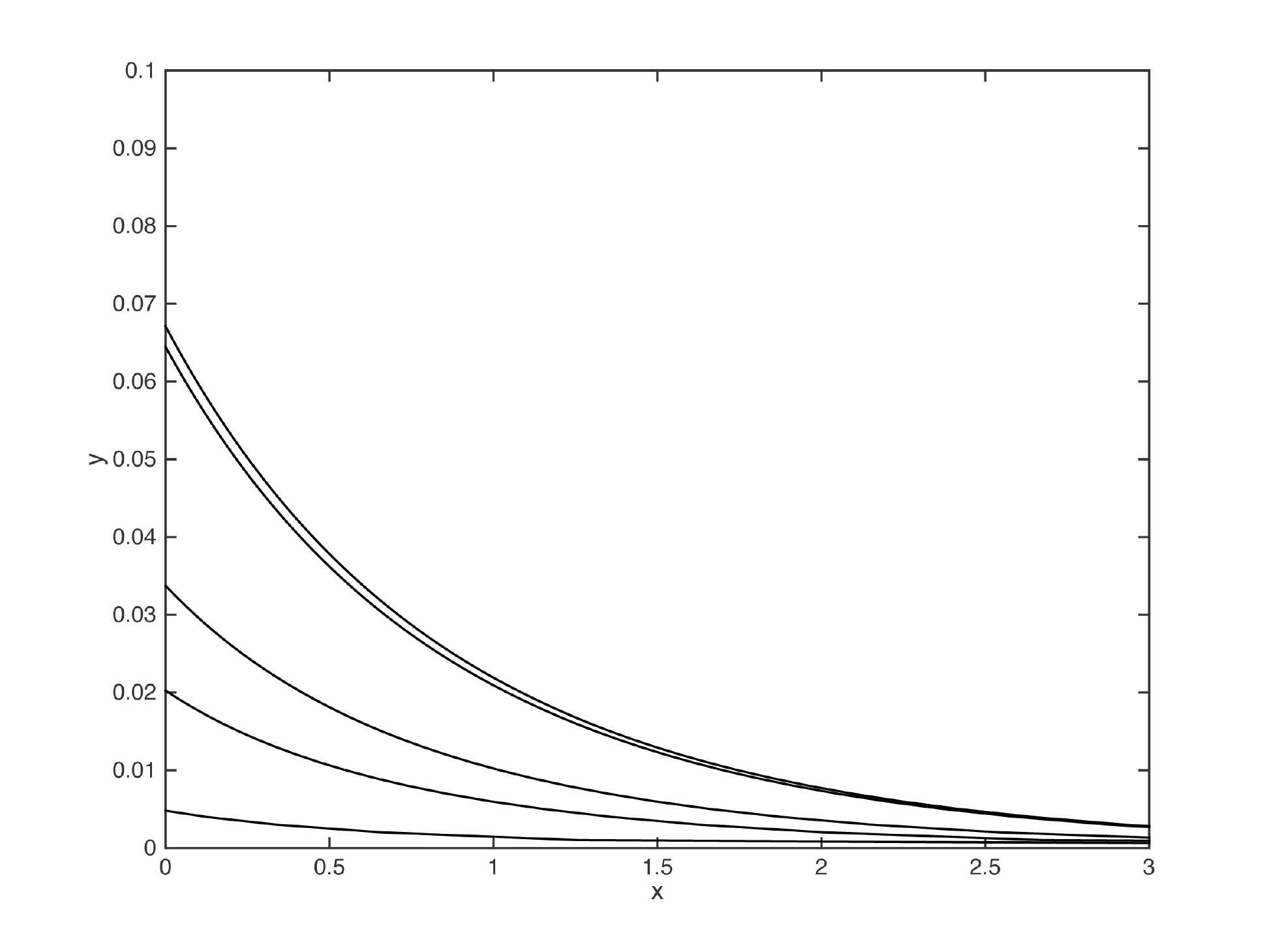}
\caption{For $z=0.01$, $.05$, $.1$, $.5$, $1$, curves giving, from
  bottom to top, the boundary of the too-greedy zone in the $xy$-plane
  for the 5-taxon caterpillar tree. Note $x,y,z$ are edge lengths in
  coalescent units. The too-greedy zone for a value of $z$ is the
  region below the curve, illustrating that the region grows with
  $z$. For $z>1$ the curve is visually indistinguishable from that for
  $z=1$.}\label{fig:toogreedy2}
\end{center}
\end{figure}

As was pointed out to us \cite{AnePC}, the shape of the
surface in Figure \ref{fig:toogreedy} has an interesting
consequence. As $Z$ increases from 0 to 1, the too-greedy zone in the
$XY$-plane becomes smaller.  Equivalently, in terms of branch lengths,
as $z$ increases from 0 to $\infty$, the too-greedy zone in the $xy$
plane becomes larger, as shown in Figure \ref{fig:toogreedy2}. Thus
smaller values of $z$ result in greedy split consensus performing well
for more choices of branch lengths $x$ and $y$.
 
If one views the taxon $e$ as an outgroup on $\sigma_{cat}$, this
means that an outgroup that is closely related to all other taxa
results in better performance of greedy split consensus than one that
is more distantly related.  While an extremely distantly related
outgroup ($z=\infty$) enables determination of the root of each gene
tree, so that knowing the splits on 5-taxon gene trees is equivalent
to knowing the clades on 4-taxon trees omitting the outgroup, this
actually reduces the ability of greedy split consensus to determine
the correct unrooted topology of the species tree. Indeed, the
too-greedy zone for clade consensus on a 4-taxon caterpillar as shown
in Figure 3 of \cite{Degnan-etal-2009} matches precisely that for
$z=\infty$ in Figure \ref{fig:toogreedy2}. 

These comments have analogs for the unrooted anomaly zones for five taxa.  In particular, the left side of inequality \eqref{E:uaz} is decreasing in $z$, so that increasing that branch length results in more values of $x$ and $y$ with $(x,y,z)$ in the anomaly zone.

\section{Linear invariants and inequalities for split probabilities}\label{sec:invariants}

A useful concept for understanding probabilistic models in
phylogenetics has been that of a \emph{phylogenetic invariant}. An
invariant of this sort is a multivariate polynomial that when
evaluated at probabilities arising from the model gives 0, regardless
of the particular parameter values associated to the model
instance. Equivalently, it is a polynomial relationship between the
probabilities that holds for all parameter values, and thus gives
information about the probabilities implicitly.

A \emph{split invariant} for a species tree topology is a polynomial
in the probabilities of splits under the multispecies coalescent model
that vanishes for all edge length assignments to the species tree.
More completely, a split invariant associated to an $n$-taxon species
tree topology $\psi$ is a multivariate polynomial in $2^{n-1}-1$
indeterminates (one for every split) which evaluates to zero at any
vector of split probabilities $\PP_\sigma(\mathcal A|\mathcal B)$
arising from $\sigma=(\psi, \lambda)$, regardless of the values of
$\lambda$.  Since probabilities of trivial splits are always 1 under
the coalescent, we can and will consider only invariants in variables
for the $2^{n-1}-n-1$ non-trivial splits.

The \emph{trivial split invariant}, which is valid for all $\psi$, is
\begin{equation}\label{eq:trivialinv}
n-3-\sum_{\text{non-trivial}\atop \text{splits $\mathcal A|\mathcal B$} }\PP_{\sigma} (\mathcal A|\mathcal B).
\end{equation} 
That this evaluates to 0 when the split probabilities arise from the
multispecies coalescent on some species tree is established by Lemma
\ref{lem:trivialinv}.

\subsection{Linear invariants and inequalities for unrooted species trees}

In this section we explore linear relationships, both invariants and
inequalities, between split probabilities that are tied to the U-STAR
algorithm \cite{adr2016}. Since U-STAR allows a distance method to be
used to infer a species tree, these have a rather direct
correspondence to equalities and inequalities defining tree metrics.

From equation (7) of \cite{adr2016}, we know the expected U-STAR
distance on a species tree $\sigma$ under the multispecies coalescent
can be expressed as
$$D(a,b)=\sum_{\text{splits } \mathcal A|\mathcal B \atop \text{separating } a,b} \PP_\sigma(\mathcal A|\mathcal B),$$
where a split is said to \emph{separate} two taxa if they lie in
different sets of the bipartition. Moreover, the expected U-STAR
distance is a tree metric on $\sigma$.  Using this in the 4-point
condition for tree metrics implies a collection of linear equalities
and inequalities in split probabilities that must hold for
\emph{unrooted} species trees. That is, these equalities and
inequalities hold regardless of the location of the root on the
species tree.

To state the result, we will say
that a split $\mathcal A|\mathcal B$ of $\mathcal X$ \emph{separates} two non-empty disjoint subsets $\mathcal Y_0,\mathcal Y_1\subset\mathcal X$ provided
$\mathcal Y_i\subseteq\mathcal A$ and $\mathcal Y_{1-i}\subseteq\mathcal B$ for one of $i=0,1$.

\begin{thm}\label{thm:4ptinv}
  Suppose $\sigma$ is an $n$-taxon species tree on $\mathcal X$, and
  $a,b,c,d\in\mathcal X$ are any four taxa for which $\sigma$ induces
  the quartet tree $ab|cd$. Then
$$\sum_{\text{splits } \mathcal A|\mathcal B\atop \text{separating } ac,bd}\PP_\sigma(\mathcal A|\mathcal B)=
\sum_{\text{splits } \mathcal A|\mathcal B\atop \text{separating } ad,bc}\PP_\sigma(\mathcal A|\mathcal B)\le
\sum_{\text{splits } \mathcal A|\mathcal B\atop \text{separating } ab,cd}\PP_\sigma(\mathcal A|\mathcal B).
$$
\end{thm}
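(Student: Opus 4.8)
The plan is to derive both the equality and the inequality directly from the four-point condition for the tree metric $D$. Recall from the discussion preceding the theorem that
$$D(x,y)=\sum_{\text{splits } \mathcal A|\mathcal B \text{ separating } x,y}\PP_\sigma(\mathcal A|\mathcal B)$$
is the expected U-STAR distance on $\sigma$, and that it is realized as a tree metric on $\sigma$. Since $\sigma$ induces the quartet $ab|cd$, the four-point condition for this tree metric says that, among the three pairwise sums $D(a,b)+D(c,d)$, $D(a,c)+D(b,d)$, and $D(a,d)+D(b,c)$, the one matching the displayed quartet is the smallest while the other two are equal:
$$D(a,b)+D(c,d)\le D(a,c)+D(b,d)=D(a,d)+D(b,c).$$
The whole task then reduces to rewriting these combinations of pairwise distances as the sums of split probabilities appearing in the statement.

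To perform that rewriting, I would classify every split $\mathcal A|\mathcal B$ of $\mathcal X$ by how it restricts to the four chosen taxa. For a split $s$ write $\chi_{xy}(s)\in\{0,1\}$ for the indicator that $s$ separates $x$ from $y$, so that $D(x,y)=\sum_s \PP_\sigma(s)\,\chi_{xy}(s)$. The restriction of $s$ to $\{a,b,c,d\}$ is of exactly one of three types: all four taxa on a single side; a single taxon separated from the other three; or a two-against-two split, inducing one of the quartets $ab|cd$, $ac|bd$, $ad|bc$. A short case check of the relevant linear combinations of the $\chi_{xy}(s)$ shows that the first two types contribute $0$, so only the two-against-two splits survive; note that these are precisely the splits separating the indicated pairs, e.g.\ a split separating $ac,bd$ is one whose restriction induces $ac|bd$.

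Carrying out the cancellation for the equality, the combination $D(a,c)+D(b,d)-D(a,d)-D(b,c)$ retains only the two-against-two splits, with those inducing $ad|bc$ entering with coefficient $+2$, those inducing $ac|bd$ with coefficient $-2$, and those inducing $ab|cd$ cancelling. As the four-point equality makes this difference vanish, I obtain
$$\sum_{\text{splits separating } ac,bd}\PP_\sigma(\mathcal A|\mathcal B)=\sum_{\text{splits separating } ad,bc}\PP_\sigma(\mathcal A|\mathcal B).$$
The same bookkeeping applied to $D(a,c)+D(b,d)-D(a,b)-D(c,d)$ leaves the splits inducing $ab|cd$ with coefficient $+2$ and those inducing $ac|bd$ with coefficient $-2$, the singleton and $ad|bc$ contributions cancelling; since this difference is nonnegative by the four-point condition, $\sum_{\text{sep } ab,cd}\PP_\sigma\ge\sum_{\text{sep } ac,bd}\PP_\sigma$. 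Combining the two statements yields the chain in the theorem.

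This argument is essentially a finite bookkeeping exercise once the metric property of $D$ is invoked, so I do not anticipate a genuine obstacle. The one point that requires care is orienting the inequality correctly: I must confirm that the displayed quartet $ab|cd$ corresponds to the \emph{smallest} of the three pairwise sums (equivalently, that the central edge of the induced quartet tree contributes with the sign making $D(a,b)+D(c,d)$ minimal), so that the surviving combination indeed places the $ab,cd$-separating splits on the larger side. Verifying that the all-on-one-side and singleton restrictions contribute zero is routine, but it should be recorded explicitly, since it is what licenses discarding every split that does not cut $\{a,b,c,d\}$ two-against-two.
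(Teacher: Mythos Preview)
Your argument is correct and follows essentially the same route as the paper: both invoke the four-point condition for the U-STAR tree metric $D$, expand each pairwise sum as a sum of split probabilities, and cancel terms by a case analysis on how each split restricts to $\{a,b,c,d\}$. Your use of the indicators $\chi_{xy}(s)$ makes the bookkeeping slightly more systematic, but the structure and content of the proof are the same.
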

\begin{proof} If $a,b,c,d$ are 4 taxa on a metric tree displaying the
  quartet $ab|cd$, then for the associated tree metric $d$ the 4-point
  condition \cite{SempleSteel} states that
$$d(a,c)+d(b,d)=d(a,d)+d(b,c)\ge d(a,b)+d(c,d).$$
The equality of the 4-point condition applied to the U-STAR expected distance gives
\begin{multline*}
\sum_{\text{splits } \mathcal A|\mathcal B \atop \text{separating } a,c} \PP_\sigma(\mathcal A|\mathcal B) +\sum_{\text{splits } \mathcal A|\mathcal B \atop \text{separating } b,d} \PP_\sigma(\mathcal A|\mathcal B)=\\
\sum_{\text{splits } \mathcal A|\mathcal B \atop \text{separating } a,d} \PP_\sigma(\mathcal A|\mathcal B)+\sum_{\text{splits } \mathcal A|\mathcal B \atop \text{separating } b,c} \PP_\sigma(\mathcal A|\mathcal B).
\end{multline*}
Any split for which all four of $a,b,c,d$ appear in the same split set
does not appear in this equation. If one of $a,b,c,d$ is separated
from the other three in a split, then that split probability occurs
exactly once on each side of the equation, and can be cancelled. If
two of $a,b,c,d$ are separated from the others, several cases must be
considered. First, if $ab$ is separated from $cd$ by a split, that
split probability occurs in all four sums, and so can be
cancelled. Second, if $ac$ is separated from $bd$ by a split, that
split probability occurs in both sums on the right, and not on the
left. Third, if $ad$ is separated from $bc$ by a split, that split
probability occurs in both sums on the left, and not on the
right. Thus after all canceling and division by 2 we obtain the
claimed equality.

The inequality of the 4-point condition is 
\begin{multline*}
  \sum_{\text{splits } \mathcal A|\mathcal B \atop \text{separating } a,d} \PP_\sigma(\mathcal A|\mathcal B)+\sum_{\text{splits } \mathcal A|\mathcal B \atop \text{separating } b,c} \PP_\sigma(\mathcal A|\mathcal B) \ge \\
  \sum_{\text{splits } \mathcal A|\mathcal B \atop \text{separating }
    a,b} \PP_\sigma(\mathcal A|\mathcal B) +\sum_{\text{splits }
    \mathcal A|\mathcal B \atop \text{separating } c,d}
  \PP_\sigma(\mathcal A|\mathcal B).
\end{multline*}
By reasoning similar to before, only probabilities of splits
separating two of $a,b,c,d$ from the others remain after cancellation.
Those separating $ac$ and $bd$ occur in all four sums, and hence can
be cancelled. Those separating $ab$ and $cd$ occur twice on the left
side, and those separating $ad$ and $bc$ occur twice on the
right. After all cancellation and division by 2 we have
$$\sum_{\text{splits } \mathcal A|\mathcal B \atop \text{separating } ab,cd } \PP_\sigma(\mathcal A|\mathcal B)\ge
\sum_{\text{splits } \mathcal A|\mathcal B \atop \text{separating }
  ad,bc }\PP_\sigma(\mathcal A|\mathcal B),$$ as claimed.
\end{proof}

Since the invariants of Theorem \ref{thm:4ptinv} depend only on
displayed quartets, they hold for all rooted versions of a fixed
unrooted species topology.

\begin{ex}\label{ex:5splits}  We apply the theorem to a species tree with unrooted 
topology $((a,b),c,(d,e)).$  With
\begin{align*}
s_1&=\PP(Sp(ab)),& s_2&=\PP(Sp(ac)),& s_3&=\PP(Sp(ad)),&s_4&=\PP(Sp(ae)),\\
s_5&=\PP(Sp(bc)),&s_6&=\PP(Sp(bd)),& s_7&=\PP(Sp(be)),& s_8&=\PP(Sp(cd)),\\
& & s_9&=\PP(Sp(ce)),&s_{10}&=\PP(Sp(de)),& &
\end{align*}
by Theorem \ref{thm:4ptinv} for each quartet on the species tree we
obtain an equality and inequality:
\begin{align*}
ab|cd:\ \ & s_3+s_5=s_2+s_6\le s_1+s_8,\\
ab|ce:\ \ & s_4+s_5=s_2+s_7\le s_1+s_9,\\
ab|de:\ \ & s_3+s_7=s_4+s_6\le s_1+s_{10},\\
ac|de:\ \ & s_3+s_9=s_4+s_8\le s_2+s_{10},\\
bc|de:\ \ & s_6+s_9=s_7+s_8\le s_5+s_{10}.
\end{align*}
The first three equalities here span a space of dimension 2, as do the
last three. The middle three are a basis for the span of them all.

One can compute all split probability invariants for the unrooted
5-taxon tree in {\tt Singular} \cite{Singular}, by computing and
intersecting the ideals of invariants for the 7 rooted versions of the
tree. Doing so shows that the invariants given here span the full
space of linear invariants for the unrooted tree. There is also a
quadratic invariant in a Gr\"obner basis for the ideal,
$$(s_2-s_5)(s_3-s_4)=0,$$
which reflects the fact that for each of the rooted trees either
$s_2=s_5$ or $s_3=s_4$.  As will be explained in the next section,
these equalities arise as cherry-swapping invariants, since each of
the 7 trees either has $ab$ or $de$ as a 2-clade. In addition there
are 14 higher degree invariants (not shown) in the basis for the
ideal, of total degree ranging from 3 to 8.
\end{ex}

\begin{ex} For each of the 2 unrooted shapes of binary 6-taxon trees
  one can similarly compute all linear split invariants.  For the
  unrooted tree shape with 2 cherries, exemplified by
  $(((a,b),c),d,(e,f))$, there is one additional linear split
  invariant, outside the span of those given by Theorem
  \ref{thm:4ptinv}:
$$s_{ace|bdf}-s_{acf|bde}+s_{ade|bcf}-s_{adf|bce}.$$
This also can be explained by the cherry-swapping invariants of the
next section, since any rooted version of this tree will have at least
one of $ab$ or $ef$ as a 2-clade.

For the unrooted shape with 3 cherries, exemplified by
$((a,b),(c,d),(e,f))$, in addition to the linear invariants of the
above theorem one finds
$$s_{ace|bdf}=s_{acf|bde}=s_{ade|bcf}=s_{adf|bce}.$$
All equalities can be explained by the fact that any rooted version of
the tree has at least 2 of the 2-clades $ab$, $cd$, and $ef$, and
determining the cherry swapping invariants these clades imply.
\end{ex}

\subsection{Linear invariants for rooted species trees}\label{sec:inv}

Next we investigate linear split invariants that depend on the rooted
species tree. More specifically, we construct a family of such
invariants associated to each non-trivial clade on the species tree.
The existence of these \emph{clade-induced split invariants}, as given
in Theorem \ref{thm:cladesplitinv} below, will form the basis of
arguments in Section \ref{sec:ident} that the root of the species tree
can be identified from split probabilities.

\begin{thm}\label{thm:cladesplitinv}
 Let $\mathcal A\subset \mathcal X$ be a subset of taxa with $|\mathcal A|,|\overline {\mathcal A}|\ge 2$.
Choose $\emptyset\ne \mathcal C\subsetneq   \overline {\mathcal A}$, and   distinct
$a,b\in \mathcal A$. Let $\mathcal A'=\mathcal
  A\smallsetminus\{a,b\}$. 
  
  Then if $\mathcal A$ is a clade on
  $\sigma$,
\begin{equation}
    \sum_{\mathcal S\subseteq \mathcal A'} \PP_\sigma(Sp(\mathcal S\cup\{a\}\cup \mathcal C))
  - \sum_{\mathcal S\subseteq \mathcal A'}\PP_\sigma(Sp(\mathcal S\cup\{b\}\cup \mathcal C))=0.
\label{eq:cladesplitinv}
\end{equation}
\end{thm}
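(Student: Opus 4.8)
The plan is to reduce the claimed invariant to a symmetry statement about the multispecies coalescent, and to bring in the clade hypothesis through the exchangeability of the gene lineages that emerge from the clade $\mathcal A$. First I would rewrite the left-hand side using the definition \eqref{eq:defsplitprob} and interchanging the two finite sums: writing $\mathcal D=\overline{\mathcal A}\smallsetminus\mathcal C$ (which is nonempty since $\mathcal C\subsetneq\overline{\mathcal A}$),
$$\sum_{\mathcal S\subseteq\mathcal A'}\PP_\sigma(Sp(\mathcal S\cup\{a\}\cup\mathcal C))=\sum_T \PP_\sigma(T)\,N_a(T),\qquad N_a(T)=\sum_{\mathcal S\subseteq\mathcal A'}\delta_{Sp(\mathcal S\cup\{a\}\cup\mathcal C)}(T).$$
A side of a split on $T$ equals $\mathcal S\cup\{a\}\cup\mathcal C$ for some $\mathcal S\subseteq\mathcal A'$ exactly when it contains $a$ and all of $\mathcal C$, excludes $b$ and all of $\mathcal D$, and meets $\mathcal A'$ arbitrarily; hence $N_a(T)$ is simply the number of edges of $T$ separating $\{a\}\cup\mathcal C$ from $\{b\}\cup\mathcal D$, with the taxa of $\mathcal A'$ unconstrained. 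Defining $N_b(T)$ analogously with $a,b$ interchanged, the theorem becomes the assertion that $\EE_\sigma[N_a]=\EE_\sigma[N_b]$.

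Next I would use that $\mathcal A$ is a clade to split the coalescent at the edge above $\MRCA(\mathcal A)$. Condition on the within-clade history $F$, meaning all coalescences strictly below that edge; this determines the partition of $\mathcal A$ into blocks $\beta_1,\dots,\beta_m$ carried by the lineages $\ell_1,\dots,\ell_m$ that emerge from the clade, together with the metric subtree each block induces. Given $F$, the rest of the gene tree is built by grafting these fixed block-subtrees onto a random \emph{upper tree} $U$ whose leaves are $\ell_1,\dots,\ell_m$ together with the genes of $\overline{\mathcal A}$. Because all $m$ emerging lineages enter the ancestral population above the clade at one and the same node, the standard exchangeability of coalescent lineages shows that the conditional law of $U$ is invariant under every permutation of the labels $\ell_1,\dots,\ell_m$, while the $\overline{\mathcal A}$-leaves, in particular those of $\mathcal C$ and $\mathcal D$, are left fixed.

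The crux is to see that $N_a$ and $N_b$ depend only on $U$ and on which blocks contain $a$ and $b$. If $a,b$ lie in the same block they are never separated, so $N_a=N_b=0$; otherwise $a\in\beta_i$ and $b\in\beta_j$ with $i\ne j$. Any edge interior to a block-subtree places all of $\mathcal C$ and all of $\mathcal D$ on its single ``rest'' side, and so cannot separate $\{a\}\cup\mathcal C$ from $\{b\}\cup\mathcal D$; therefore every separating edge is an edge of $U$, and $N_a(T)$ equals the number of edges of $U$ separating $\{\ell_i\}\cup\mathcal C$ from $\{\ell_j\}\cup\mathcal D$, with $N_b(T)$ counting instead the edges of $U$ separating $\{\ell_j\}\cup\mathcal C$ from $\{\ell_i\}\cup\mathcal D$. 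In particular the within-block topologies and the placement of the $\mathcal A'$ taxa are irrelevant. Transposing the labels $\ell_i\leftrightarrow\ell_j$ carries the first count to the second while preserving the conditional law of $U$, so $\EE[N_a\mid F]=\EE[N_b\mid F]$; averaging over $F$ yields $\EE_\sigma[N_a]=\EE_\sigma[N_b]$, which is the asserted invariant.

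I expect the main obstacle to be making the two structural claims of the last paragraph fully rigorous: verifying that the separating edges lie entirely in the upper tree $U$ (so that block-internal structure and the $\mathcal A'$ leaves truly drop out), and pinning down the exchangeability of the emerging lineages precisely enough to justify the transposition $\ell_i\leftrightarrow\ell_j$. This is exactly the point at which being a clade is used: for a non-clade $\mathcal A$ the lineages ancestral to $\mathcal A$ need not emerge at a common node, and the symmetry is unavailable. The argument also specializes cleanly to the cherry-swapping case, since when $\mathcal A=\{a,b\}$ there is no $\mathcal A'$, the two emerging lineages are just those of $a$ and $b$, and the transposition reduces to the familiar $a\leftrightarrow b$ symmetry.
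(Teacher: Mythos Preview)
Your argument is correct and is in fact cleaner than the paper's. The paper does not work directly with the split probabilities: it first expands each $\PP_\sigma(Sp(\cdot))$ via the inclusion--exclusion identity
\[
\PP_\sigma(Sp(\mathcal E))=\PP_\sigma(Cl(\mathcal E))+\PP_\sigma(Cl(\overline{\mathcal E}))-\PP_\sigma(Cl(\mathcal E),Cl(\overline{\mathcal E})),
\]
and then proves three separate equalities, one for each summand. The first is imported wholesale from an earlier paper (Theorem~6 of \cite{adr2011b}); the second is reduced to the first by complementation, using that $\mathcal C\subsetneq\overline{\mathcal A}$ so that $\tilde{\mathcal C}\ne\emptyset$; and only the third (the joint-clade term) is handled by the conditioning-and-swapping idea you use. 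By contrast, your route sidesteps the clade decomposition entirely: the reinterpretation of $\sum_{\mathcal S}\delta_{Sp(\mathcal S\cup\{a\}\cup\mathcal C)}(T)$ as the number of edges of $T$ separating $\{a\}\cup\mathcal C$ from $\{b\}\cup\mathcal D$ collapses the whole sum to a single random variable on the upper tree $U$, and then a single application of the exchangeability of the lineages emerging at $\MRCA(\mathcal A)$ finishes the job. This is more self-contained (no appeal to the earlier clade-invariant paper) and more conceptual, since it makes transparent that the only feature of the gene tree that matters is which edges of $U$ separate the two distinguished emerging lineages together with $\mathcal C$ versus $\mathcal D$. The two obstacles you flag---that separating edges lie in $U$, and the symmetry of the law of $U$ in the $\ell_k$---are exactly the points requiring care, and your sketch handles both.
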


\smallskip

 We note that this theorem applies to any species tree, including
 non-binary ones.   Moreover, since a non-binary species tree $\sigma$ can
  be thought of as any of its binary resolutions with length 0
  assigned to any introduced edges, the clade probabilities arising
  from such a $\sigma$ will satisfy the polynomials of the theorem for every
  binary resolution. Thus in the statement of the theorem the phrase
  `if $\mathcal A$ is a clade on 
  $\sigma$' 
 can be replaced with `if
  $\mathcal A$ is a clade on a binary resolution of 
  $\sigma$.'

\begin{proof}[Proof of Theorem \ref{thm:cladesplitinv}] 

We derive this result in part using ideas developed for
  the construction of invariants for clade probabilities in
  \cite{adr2011b}.
  
  Let $Cl(\mathcal A)$ represent the event  that $\mathcal A_g$
is a clade on an observed gene tree.
Then note that $$\PP_\sigma(Sp(\mathcal A))=\PP_\sigma(Cl(\mathcal A))+\PP_\sigma(Cl(\bar{\mathcal A}))-\PP_\sigma(Cl(\mathcal A),Cl(\bar{\mathcal A})).$$
Thus equation \eqref{eq:cladesplitinv} will follow from establishing the three equalities:
\begin{align}
\sum_{\mathcal S\subseteq \mathcal A'} \PP_\sigma(Cl(\mathcal S\cup\{a\}\cup \mathcal C))
 & -  \sum_{\mathcal S\subseteq \mathcal A'}\PP_\sigma(Cl(\mathcal S\cup\{b\}\cup \mathcal C))=0,\label{eq:p1}\\
\sum_{\mathcal S\subseteq \mathcal A'} \PP_\sigma(Cl(\overline{\mathcal S\cup\{a\}\cup \mathcal C}))
 & - \sum_{\mathcal S\subseteq \mathcal A'}\PP_\sigma(Cl(\overline{\mathcal S\cup\{b\}\cup \mathcal C}))=0,\label{eq:p2}\\
 \sum_{\mathcal S\subseteq \mathcal A'} \PP_\sigma(Cl({\mathcal S\cup\{a\}\cup \mathcal C})&,Cl(\overline{\mathcal S\cup\{a\}\cup \mathcal C}))\label{eq:p3}\\
  -  \sum_{\mathcal S\subseteq \mathcal A'}&\PP_\sigma(Cl({\mathcal S\cup\{b\}\cup {\mathcal C}})),Cl(\overline{\mathcal S\cup\{b\}\cup {\mathcal C}}))=0,\notag
\end{align}

That equation \eqref{eq:p1} holds is Theorem 6 of \cite{adr2011b}.  To
establish equation \eqref{eq:p2}, for any $\mathcal S\subseteq\mathcal
A'$, let $\tilde{\mathcal S}= \mathcal A'\smallsetminus \mathcal S$,
and $\tilde {\mathcal C}=(\overline{\mathcal A})\smallsetminus
\mathcal C$.  With this notation
\begin{align*}
\overline{\mathcal S\cup\{a\}\cup \mathcal C}&=\tilde {\mathcal S}\cup\{b\}\cup \tilde{\mathcal C},
\end{align*}
where $\tilde{\mathcal S}\subseteq \mathcal A'$ and $\emptyset\ne
\tilde{\mathcal C}\subsetneq \mathcal X\smallsetminus \mathcal A$.  We
thus see equation \eqref{eq:p2} is another instance of the equation
\eqref{eq:p1}.  (It is essential here that $\mathcal C$ be a proper
subset of $\overline{\mathcal A}$, so that $\tilde {\mathcal C}$ is
nonempty; this is why $\mathcal A$ must exclude at least 2 taxa.)

Establishing equation \eqref{eq:p3} requires more argument. Using the
above notation, it can be restated as
\begin{multline*}
 \sum_{\mathcal S\subseteq \mathcal A'} \PP_\sigma(Cl({\mathcal S\cup\{a\}\cup \mathcal C}),Cl(\tilde{\mathcal S}\cup\{b\}\cup \tilde{\mathcal C}))\\
  - \sum_{\mathcal S\subseteq \mathcal A'}\PP_\sigma(Cl({\mathcal S\cup\{b\}\cup {\mathcal C}}),Cl(\tilde{\mathcal S}\cup\{a\}\cup\tilde{\mathcal C}))=0.
 \end{multline*} 
 We establish this by showing a version of it conditioned on the
 partition of $\mathcal A$ corresponding to lineages present at the
 $\MRCA(\mathcal A)$ in the species tree. Now consider any realization
 of the remainder of the coalescent process (\emph{i.e.}, on $\sigma$
 with edges below $\MRCA(\mathcal A)$ removed) which displays the
 clades ${\mathcal S_g\cup\{A\}\cup {\mathcal C}_g}$ and
 $\tilde{\mathcal S}_g\cup\{B\}\cup\tilde{\mathcal C}_g$. First note
 that $A$ and $B$ are in different partition sets at $\MRCA(\mathcal
 A)$, or these clades could not be formed.  But then this realization
 has the same probability as one where the lineages for $B$ and $A$
 are exchanged. This exchange leads to a gene tree displaying clades
 ${\hat{\mathcal S}_g\cup\{B\}\cup {\mathcal C}_g}$ and
 $\tilde{\hat{\mathcal S}}_g\cup\{A\}\cup\tilde{\mathcal C}_g$. Here
 $\hat{\mathcal S}$ is still a subset of $\mathcal A'$, but generally
 differs from $\mathcal S$ because some elements of $\mathcal S$ are
 in the partition sets with $A$ and $B$ at $\MRCA(\mathcal A)$.
 Conditioned on the partition, this establishes a bijective
 correspondence between equiprobable realizations of the coalescent
 contributing to the two sums in the equality, and thus the
 conditioned equality holds. Summing over all possible partitions of
 $\mathcal A$, weighted by their probabilities, give the unconditioned
 equation \eqref{eq:p3}.
\end{proof}

 \begin{ex}
   Here we explicitly give the clade-induced split invariants of
   Theorem \ref{thm:cladesplitinv} for 5-taxon species trees, and
   compare them to the full set of invariants for such trees.
 
 \smallskip
\noindent
\emph{Caterpillar tree} $((((a,b),c),d),e)$: Consider the 2-clade
$\mathcal A=\{a,b\}$, so $\mathcal A'=\emptyset$. With $\mathcal
C=\{c\}$, equation \eqref{eq:cladesplitinv} becomes
  $$\PP_\sigma (Sp(ac)) - \PP_\sigma(Sp(bc))=0.$$ 
For other choices of singleton $\mathcal C$ we find
   $$\PP_\sigma (Sp(ad)) - \PP_\sigma(Sp(bd))=0,$$
    $$\PP_\sigma (Sp(ae)) - \PP_\sigma(Sp(be))=0.$$  
    We refer to these as \emph{cherry-swapping split invariants},
    since they hold because $a,b$ form a 2-clade, and their lineages
    are thus exchangeable under the coalescent model.  Two-element
    choices of $\mathcal C$ give the same equalities, up to sign, as
    the ones already listed.

    For the clade $\{a,b,c\}$ using $a$ and $b$ as the two singleton
    taxa we find
  $$\PP_\sigma (Sp(ad))+\PP_\sigma (Sp(acd))   - \PP_\sigma(Sp(bd))-\PP_\sigma(Sp(bcd))=0,$$
  $$\PP_\sigma (Sp(ae))+\PP_\sigma (Sp(ace))   - \PP_\sigma(Sp(be))-\PP_\sigma(Sp(bce))=0,$$
both of which were already implied by the cherry-swapping invariants.
But using $a$ and $c$ as the single taxa we get
 $$\PP_\sigma (Sp(ad))+\PP_\sigma(Sp(abd)) - \PP_\sigma(Sp(cd))-\PP_\sigma(Sp(bcd))=0,$$
 $$\PP_\sigma (Sp(ae))+\PP_\sigma(Sp(abe)) - \PP_\sigma(Sp(ce))-\PP_\sigma(Sp(bce))=0.$$
 However, these are the same, up to a sign.  There is an additional
 invariant with $a$ and $b$ interchanged from this last one, which is
 obtained with $b,c$ chosen as the singletons. Alternately, it follows
 from the last one using the ``cherry-swapping'' exchangeability of
 lineages for $a$ and $b$.
  
 A computation with the algebra software {\tt Singular}
 shows these span the space of all linear split invariants for this
 rooted tree. Note that the tree $((((a,b),c),e),d)$ would produce
 exactly the same set of invariants, so by evaluating linear
 invariants one would not be able to identify the root of such a caterpillar tree. This is an
 instance of Corollary \ref{thm:idroot} (a) below.
 
 \medskip \emph{Balanced tree} $(((a,b),c),(d,e))$: This tree has all
 the clades excluding at least 2 taxa that the caterpillar does, but
 in addition displays $\{d,e\}$.  Thus all the invariants listed for
 the caterpillar hold, as well as additional ones from this cherry.
 For instance,
$$\PP_\sigma (Sp(ad)) - \PP_\sigma(Sp(ae))=0.$$
These span the space of linear split invariants for this tree, as
computed by {\tt Singular}.

\medskip

\emph{Pseudocaterpillar} $(((a,b),(d,e)),c)$: This tree has only two
clades that exclude at least two taxa, namely $\{a,b\}$ and $\{d,e\}$
From the first of these clades we obtain the invariants
$$\PP_\sigma (Sp(ac)) - \PP_\sigma(Sp(bc))=0,$$ 
   $$\PP_\sigma (Sp(ad)) - \PP_\sigma(Sp(bd))=0,$$
    $$\PP_\sigma (Sp(ae)) - \PP_\sigma(Sp(be))=0,$$ 
and for the second
$$\PP_\sigma (Sp(ad)) - \PP_\sigma(Sp(ae))=0,$$ 
   $$\PP_\sigma (Sp(bd)) - \PP_\sigma(Sp(be))=0,$$
    $$\PP_\sigma (Sp(cd)) - \PP_\sigma(Sp(ce))=0.$$
Note that of these six invariants, the middle four are linearly dependent, 
with a 3-dimensional span.  The span of all six invariants is 5-dimensional. 
All can be explained by cherry-swapping.
    
For the 5-taxon pseudocaterpilar, a computation with {\tt Singular} produces
a 6-dimensional space of linear invariants, with
\begin{equation}\label{eq:5pcextra}
\PP_\sigma(Sp(ab))+2\PP_\sigma(Sp(bc))-2\PP_\sigma(Sp(ce))-\PP_\sigma(Sp(de))
\end{equation}
as the additional generator. Note that neither of Theorems
\ref{thm:4ptinv} and \ref{thm:cladesplitinv} give equalities involving
$\PP_\sigma(Sp(ab))$ or $\PP_\sigma(Sp(de))$ for this tree, so they
cannot provide an explanation for this invariant. One will be given in
Proposition \ref{prop:special5} below.

\medskip

For the 5-taxon trees, our {\tt Singular} computations found a Gr\"obner
basis for all invariants in split probabilities.  For the
pseudocaterpillar, there were only linear polynomials, indicating that
the ones above imply all higher degree invariants. For the caterpillar
and balanced trees there were additional non-linear invariants in the
basis.  These are given in Appendix \ref{app:grobner}. However, we
have no theoretical understanding of them.
\end{ex}

\begin{ex} For each of the six shapes for 6-taxon species trees, we
  computed invariants for the split probabilities using {\tt Singular}. In
  order to make the computations terminate, we limited the degree to
  8. For all shapes we found that the clade-induced split invariants
  given by Theorem \ref{thm:cladesplitinv} spanned the space of linear
  invariants; that is, there were no `extra' linear invariants such as
  that found for the 5-taxon pseudocaterpillar species trees.  Table
  \ref{Table:6dim} shows the dimension of the space of non-trivial
  linear invariants, which depends upon the rooted topology.

  For the 6-taxon trees we found no non-linear invariants of degree
  less than our bound. However a dimension argument indicates
  higher-degree invariants must exist: There are 25 non-trivial split
  probabilities. After accounting for the trivial split invariant of
  equation \eqref{eq:trivialinv}, the space the linear invariants
  define is of dimension 24 minus the dimension shown in Table
  \ref{Table:6dim}. Since the variety of split probabilities has
  dimension at most 4 (the number of internal edges on the species
  tree), higher degree invariants must exist.

\begin{table}
\begin{tabular}{|c|c|}
\hline
Species Tree & dim\\
\hline
$(((((a,b),c),d),e),f)$ & 11\\
$((((a,b),(c,d)),e),f)$&13\\
$((((a,b),c),(d,e)),f)$ & 14 \\
$((((a,b),c),d),(e,f))$ & 14\\
$(((a,b),(c,d)),(e,f))$ & 15\\
$(((a,b),c),(d,(e,f)))$ &15\\
\hline
\end{tabular}
\caption{Dimensions of the space of non-trivial linear invariants for 6-taxon species trees. These are defined on a space of
  $(2^5-6 - 1) =25$ non-trivial split probabilities. }\label{Table:6dim}
\end{table}
\end{ex}

\section{Identifiability of the rooted species tree from split probabilities}\label{sec:ident}

We first show that the clade-induced split invariants of the last
section, which vanish if a species tree has a particular clade, do not
vanish for generic parameter choices if the species tree lacks that
clade (with some exceptions). This is the main ingredient in obtaining
Theorem \ref{thm:idroot}, that the rooted topological species tree is
recoverable from split probabilities in most circumstances.

The following lemma is key to our argument.

\begin{lemma}\label{lem:nonzero} 
  Let $\psi$ be a binary rooted topological species tree on $\mathcal X$, and $\mathcal
  X=\mathcal A\sqcup \mathcal D$ a disjoint union of 
  subsets with $|\mathcal A|, |\mathcal D| \ge 2$. 
  Suppose 
  \begin{enumerate}
\item  $\mathcal A$ is not a clade on $\psi$, 
\item $\mathcal D$ is not a 2-clade on $\psi$. \label{cond:notcherry}
\end{enumerate}
Then there exists some
  $\emptyset\neq\mathcal C\subsetneq\mathcal D$, $a,b\in\mathcal A$, and some choice
  of edge lengths $\lambda$ on $\psi$ such that the clade-induced split
  invariant of Theorem \ref{thm:cladesplitinv}, equation \eqref{eq:cladesplitinv} does not vanish on the split
  probabilities arising under the multispecies coalescent model on
  $\sigma=(\psi,\lambda)$.
\end{lemma}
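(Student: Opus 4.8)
The plan is to exploit the decomposition of the clade-induced split invariant established in the proof of Theorem \ref{thm:cladesplitinv}. Write $I$ for the left side of \eqref{eq:cladesplitinv}. Since $\PP_\sigma(Sp(\mathcal Y)) = \PP_\sigma(Cl(\mathcal Y)) + \PP_\sigma(Cl(\overline{\mathcal Y})) - \PP_\sigma(Cl(\mathcal Y),Cl(\overline{\mathcal Y}))$ for every $\mathcal Y$, we have $I = I_1 + I_2 - I_3$, where $I_1, I_2, I_3$ are the left sides of \eqref{eq:p1}, \eqref{eq:p2}, \eqref{eq:p3}. Each $\PP_\sigma$ is a polynomial in the transformed edge lengths $X_i = \exp(-\lambda_i)$, so $I$ is a polynomial in these variables, and it suffices to exhibit a single choice of $\mathcal C$, $a$, $b$ and a single parameter vector (or limiting regime) at which $I \neq 0$. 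When $\mathcal A$ is a clade all three of $I_1, I_2, I_3$ vanish identically; the content of the lemma is that, under hypotheses (1) and (2), non-cladeness of $\mathcal A$ forces a surviving contribution.

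First I would fix the combinatorial data using the structure of $\psi$. Let $\mathcal A^+$ be the smallest clade of $\psi$ containing $\mathcal A$; since $\mathcal A$ is not a clade, $\mathcal A^+ \supsetneq \mathcal A$, and the two children of $\MRCA(\mathcal A)$ carry clades $L, R$ with $\mathcal A \cap L \neq \emptyset \neq \mathcal A \cap R$. I would choose $a \in \mathcal A \cap L$ and $b \in \mathcal A \cap R$, so that $a$ and $b$ are separated at the earliest possible node, and choose $\emptyset\neq\mathcal C\subsetneq \mathcal D$ to single out a taxon of $\mathcal D$ positioned so as to break the $a \leftrightarrow b$ symmetry. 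The non-cherry hypothesis (2) guarantees there is room in $\mathcal D$ for such a choice: were $\mathcal D = \{d_1,d_2\}$ a $2$-clade, the exchangeability of $d_1,d_2$ together with the complementation symmetry $\mathcal C \leftrightarrow \mathcal D\smallsetminus\mathcal C$ that identifies $I_1$ with an instance of $I_2$ (as used to derive \eqref{eq:p2}) would force $I\equiv 0$, and this is precisely the excluded configuration.

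Next I would pass to a limiting regime in which $I$ becomes computable at leading order. The idea is to send to infinity all internal edges except a small core near $\MRCA(\mathcal A)$ and the path to the root, so that every subtree outside the core coalesces deterministically and its leaves behave as a single lineage. This collapses $I_1, I_2, I_3$ to a coalescent computation on a small tree with only a few free lineages, where the clade and joint-clade probabilities are explicit rational functions of the surviving $X_i$. I would then check that, with the chosen $a,b,\mathcal C$, the leading term of $I$ is a nonzero polynomial, the separation of $a$ and $b$ at $\MRCA(\mathcal A)$ being what makes the two sums over $\mathcal S\subseteq\mathcal A'$ genuinely asymmetric. The non-vanishing of the clade sum $I_1$ (hence of $I_2$) in this regime is the analogue for clade probabilities of the statements established in \cite{adr2011b}, which I would invoke; the remaining task is to confirm that the joint term $I_3$ does not cancel the surviving contribution.

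The main obstacle is accidental cancellation among $I_1$, $I_2$, and $I_3$: each is a large alternating sum over subsets $\mathcal S\subseteq\mathcal A'$, and a careless degeneration can drive all three to zero at once. Indeed, sending every internal edge to infinity collapses each split probability to an indicator, and a direct check (for instance on the balanced tree with $\mathcal A=\{a,b,d\}$) shows these indicator counts can agree, so the crudest limit is uninformative. The crux is therefore to engineer the degeneration so that exactly one discriminating coalescent event---one possible only because $\mathcal A$ fails to be a clade---contributes at leading order, and to verify its coefficient is nonzero. I expect this to require a case analysis according to how $\mathcal A$ meets $L$ and $R$ and where $\mathcal C$ sits relative to $\mathcal A^+$, combined with the explicit small-tree coalescent probabilities; condition (2) is exactly what rules out the single configuration in which the discriminating event is itself symmetric and its coefficient vanishes.
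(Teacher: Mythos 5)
Your proposal correctly identifies the general strategy the paper uses (choose degenerate edge lengths so the split probabilities become computable at leading order, then exhibit a nonzero value), but it stops exactly where the lemma's actual content begins. You yourself flag the two essential tasks --- ``engineer the degeneration so that exactly one discriminating coalescent event contributes at leading order'' and ``verify its coefficient is nonzero'' --- as things you \emph{expect} to require a case analysis; but that case analysis and those computations \emph{are} the proof, and none of it is carried out. Concretely, the paper's argument splits on whether $\mathcal D$ is a clade on $\psi$. When it is not, one takes $v$ to be the MRCA of a \emph{minimal clade meeting both $\mathcal A$ and $\mathcal D$}, so that one child subtree is pure-$\mathcal A$ and the other pure-$\mathcal D$; then $a$ is chosen in the pure-$\mathcal A$ child, $b\in\mathcal A\smallsetminus\desc(v)$, and $\mathcal C=\desc(w_2)$. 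With edges below $v$ near zero and the edge above $v$ near infinite, $\desc(v)$ becomes a gene-tree clade with probability near $1$, giving a positive term near $1$ while all negative terms are negligible. Your choice of $a,b$ in the two children of $\MRCA(\mathcal A)$, with $\mathcal C$ chosen only ``so as to break the symmetry,'' is anchored to the wrong node: the asymmetry that makes the invariant survive comes from the position of $\mathcal C\subset\mathcal D$ relative to $a$ versus $b$, not from where $a$ and $b$ separate inside $\mathcal A$. It is not established (and not obvious) that your configuration avoids the cancellation you worry about.

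The second case --- $\mathcal D$ a clade with at least three elements, which is precisely where hypothesis (2) does its work --- is entirely absent from your plan. There the ``near $1$ versus near $0$'' dichotomy fails, and the paper must instead compare probabilities of ranked coalescent histories on a $4$-taxon subproblem, obtaining $2/9$ for the positive terms versus $1/6$ for the negative ones; the nonzero difference $1/18$ is a genuinely quantitative fact, not something that falls out of a limiting indicator count. Your reading of condition (2) as excluding ``the single configuration in which the discriminating event is itself symmetric'' gestures at the right phenomenon (and your observation that a $2$-clade $\mathcal D$ forces $I\equiv 0$ matches the proposition following the lemma), but without the clade/non-clade case split you cannot see where that hypothesis actually enters. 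Finally, the detour through the decomposition $I=I_1+I_2-I_3$ is unnecessary and adds difficulty: in the paper's degenerate regimes the split probabilities themselves are directly estimable, and tracking the joint-clade term $I_3$ separately only creates the cancellation problem you then have to worry about.
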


\begin{proof}
We consider two cases, according to whether or not $\mathcal D$ is a clade displayed on $\psi$.

\smallskip

First suppose $\mathcal D$ is not a clade on $\psi$.  Pick a minimal
clade displayed on $\psi$ that contains at least one element of
$\mathcal A$ and at least one element of $\mathcal D$, and let $v$ be
its MRCA. Let $w_1$ and $w_2$ be the children of $v$. Note that the
minimality of the clade implies one of the $w_i$, say $w_1$, has as
its leaf descendants only elements of $\mathcal A$, and the other, say
$w_2$, has as its leaf descendants only elements of $\mathcal D$.
Also observe that $\mathcal A\smallsetminus \desc(v)$ is nonempty,
since otherwise the leaf descendants of $w_1$ would have to be all of
$\mathcal A$, contradicting that $\mathcal A$ is not a clade on the
tree. Similarly, $\mathcal D\smallsetminus \desc(v)$ is nonempty.
These statments furthermore imply $v$ is not the root of the tree, so
it has a parent $u$.

Choose $a\in\desc(w_1)\subsetneq \mathcal A$, $b\in \mathcal
A\smallsetminus\desc(v)$, and $\emptyset\neq\mathcal
C=\desc(w_2)\subsetneq \mathcal D$.  Let all edge lengths on $\psi$
below $v$ have length (near) zero, edge $(u,v)$ have length (near)
infinity, and the remaining edges have any finite positive length.
Then a gene tree arising from the coalescent model will have (near)
zero probability of displaying $Sp(\mathcal S\cup\{b\}\cup \mathcal
C)$ for $\mathcal S\subseteq \mathcal A'$, since any displayed split
(of non-negligible probability) with $\{B\}\cup\mathcal C_g$ in a
partition set is (near) certain to contain all of $\desc(v)$, and
hence $A$, in that set as well.  Thus all negative terms in equation
\eqref{eq:cladesplitinv} are negligible.  On the other hand there is a
positive term in that equation for $\PP(Sp(\desc(v)))$, which has
value (near) 1. Thus equation \eqref{eq:cladesplitinv} does not hold.

\smallskip

Next we consider the case when $\mathcal D$ is a clade displayed on
$\psi$, but, by condition \eqref{cond:notcherry}, $\mathcal D$ has at
least 3 elements.
 
We first consider a particular form for $\psi$, and will then reduce
the general tree to this form. To this end, suppose $\mathcal
D=\{d_1,d_2,d_3\}$ and $\psi $ is the rooted caterpillar tree
$$(((\dots((((d_1,d_2),d_3),a),c_1)\dots), c_n),b)$$ 
with at least 5 taxa, $\mathcal A=\{a,b,c_1,\dots, c_n\}$ with $n\ge
0$, $a,b$ chosen as shown. Let $w=\MRCA(\mathcal D)$, and $v$ its
parent, so $v$ is also the parent of $a\in \mathcal A$. Let $\mathcal
C=\{d_3\}$. Choose all internal edge lengths of $\psi$ to be (near)
infinite, except for those below $v$ which we choose to be (near)
zero. Then all rooted gene trees realizable with non-negligable
probability will be formed by $\mathcal D_g\cup\{A\}$ coalescing into
some rooted gene tree in the branch above $v$, with this subtree then
joining to the remaining taxa in $\mathcal A$ in a tree that otherwise
exactly matches the caterpillar structure of $\psi$.  Thus the event
$Sp(S\cup\{a\}\cup\mathcal C)$ is non-negligibly realizable only for
$\mathcal S=\emptyset$, by gene trees with the rooted subtree on
$\mathcal D_g\cup\{A\}$ having $\{D_3,
A\}$ as a clade. But the probability of this 2-clade forming is
(near) $4/18=2/9$, since of the 18 ranked rooted trees on 4 taxa, four
have any given 2-clade.  On the other hand, the event
$Sp(S\cup\{b\}\cup\mathcal C)$ is non-negligibly realizable only for
$S=\{c_1,\dots c_n\}$, by gene trees where the 4-taxon rooted subtree
on $\mathcal D_g\cup\{A\}$ has $D_3$ as an outgroup. Such gene trees
occur with probability $3/18=1/6$. Thus the invariant of equation
\eqref{eq:cladesplitinv} evaluates (near) to $2/9-1/6$, and is thus
not zero.

Now for the general case, in which $\mathcal D$ is a clade with 3 or
more elements, by picking some internal edges of $\psi$ within the
subtree on $\mathcal D$ to have (near) infinite and zero lengths we
can ensure that with probability (near) 1 that $\mathcal D$ coalesces
into exactly 3 lineages by $\MRCA(\mathcal D)$. Similarly by picking
(near) infinite edge lengths for those edges leading off of the path
between $\MRCA(\mathcal D)$ and the root of $\psi$ to groups of
elements in $\mathcal A$, we can ensure with probability 1 that these
groups have coalesced before reaching that path. Then the argument
above for the caterpillar tree applies with lineages for groups of
taxa replacing the individual ones.
 \end{proof}

 That condition \eqref{cond:notcherry} of the above lemma is necessary
 is shown by the following.
\begin{prop}
Let $\psi$ be a species tree topology on $\mathcal X$, and $\mathcal
  X=\mathcal A\sqcup \mathcal D$ a disjoint union of 
  subsets with $|\mathcal D| = 2$. Then if $\mathcal D$ is a clade on $\psi$, 
  the 
  polynomials defined for
 $\mathcal A$ 
by equation  \eqref{eq:cladesplitinv}
   in Theorem \ref{thm:cladesplitinv} all vanish, 
  regardless of whether $\mathcal A$ is a clade on $\psi$.
\end{prop}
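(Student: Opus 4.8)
The plan is to exploit the ``cherry-swapping'' exchangeability of the two lineages in $\mathcal D$, precisely the symmetry used to explain the cherry-swapping invariants in the examples above. First I would record that, because $|\mathcal D|=2$ and Theorem \ref{thm:cladesplitinv} requires $\emptyset\ne\mathcal C\subsetneq\overline{\mathcal A}=\mathcal D$, the set $\mathcal C$ is forced to be a singleton; so write $\mathcal D=\{d_1,d_2\}$ and, without loss of generality, $\mathcal C=\{d_1\}$. With distinct $a,b\in\mathcal A$ and $\mathcal A'=\mathcal A\smallsetminus\{a,b\}$ fixed, the polynomial \eqref{eq:cladesplitinv} is
$$\sum_{\mathcal S\subseteq\mathcal A'}\PP_\sigma(Sp(\mathcal S\cup\{a\}\cup\{d_1\}))-\sum_{\mathcal S\subseteq\mathcal A'}\PP_\sigma(Sp(\mathcal S\cup\{b\}\cup\{d_1\})),$$
and the goal is to show that the two sums coincide for every assignment of edge lengths $\lambda$.

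The key tool is that, since $\mathcal D=\{d_1,d_2\}$ is a clade on $\psi$ with only two elements, it is a cherry, so the transposition $\tau=(d_1\,d_2)$ is an automorphism of the labelled topological tree $\psi$. Because a single individual is sampled per taxon, pendant edge lengths are immaterial, so $\tau$ is in fact a length-preserving symmetry of the metric species tree $\sigma=(\psi,\lambda)$ for every $\lambda$. Consequently the multispecies coalescent distribution on gene trees is invariant under $\tau$, whence $\PP_\sigma(Sp(\mathcal Y))=\PP_\sigma(Sp(\tau\mathcal Y))$ for every $\mathcal Y$; this is exactly the exchangeability of the lineages $D_1$ and $D_2$.

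I would then apply this to $\mathcal Y=\mathcal S\cup\{a\}\cup\{d_1\}$, for which $\tau\mathcal Y=\mathcal S\cup\{a\}\cup\{d_2\}$, obtaining $\PP_\sigma(Sp(\mathcal S\cup\{a\}\cup\{d_1\}))=\PP_\sigma(Sp(\mathcal S\cup\{a\}\cup\{d_2\}))$. Writing a split as the split of its complementary block, and using $\mathcal X=\mathcal A'\sqcup\{a,b\}\sqcup\{d_1,d_2\}$, I get
$$Sp(\mathcal S\cup\{a\}\cup\{d_2\})=Sp(\overline{\mathcal S\cup\{a\}\cup\{d_2\}})=Sp(\tilde{\mathcal S}\cup\{b\}\cup\{d_1\}),$$
where $\tilde{\mathcal S}=\mathcal A'\smallsetminus\mathcal S$. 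Summing over $\mathcal S\subseteq\mathcal A'$, and noting that $\mathcal S\mapsto\tilde{\mathcal S}$ is a bijection of the power set of $\mathcal A'$, turns the first sum into the second, so the polynomial vanishes for all $\lambda$. The case $\mathcal C=\{d_2\}$ is identical after relabelling.

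There is no genuine obstacle here beyond bookkeeping: the substantive content is the exchangeability statement, which is immediate from the cherry automorphism, and the only care required is in tracking the complementary block and confirming that complementation permutes the index set $\{\mathcal S\subseteq\mathcal A'\}$ bijectively. Notably, the argument never invokes whether or not $\mathcal A$ is a clade, which is exactly the claim of the proposition and which shows that condition \eqref{cond:notcherry} of Lemma \ref{lem:nonzero} genuinely cannot be dropped.
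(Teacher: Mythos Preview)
Your proof is correct and follows essentially the same approach as the paper's: both use the exchangeability of the cherry lineages $d_1,d_2$ to rewrite $\PP_\sigma(Sp(\mathcal S\cup\{a\}\cup\{d_1\}))$ as $\PP_\sigma(Sp(\mathcal S\cup\{a\}\cup\{d_2\}))$, then pass to the complementary block to obtain $\PP_\sigma(Sp(\tilde{\mathcal S}\cup\{b\}\cup\{d_1\}))$, so that the two sums match term-by-term under the involution $\mathcal S\mapsto\tilde{\mathcal S}$. Your version is slightly more explicit in framing the exchangeability via the automorphism $\tau=(d_1\,d_2)$ and in noting why pendant edge lengths do not obstruct it, but the substance is identical.
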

\begin{proof}
  Since $\mathcal D=\{d_1,d_2\}$, the clade-induced split invariants for
  $\mathcal A$ in equation \eqref{eq:cladesplitinv} require that
  $\mathcal C$ be a singelton set, which we may assume is $\mathcal
  C=\{d_1\}$.  Since $\mathcal D$ is a 2-clade, by exchangeability of
  lineages under the coalescent implies
\begin{align*}\PP_\sigma(Sp(\mathcal S\cup\{a\}\cup \mathcal \{d_1\}))&=\PP_\sigma(Sp(\mathcal S\cup\{a\}\cup \mathcal \{d_2\}))\\
&=\PP_\sigma(Sp(\tilde{\mathcal S}\cup\{b\}\cup \mathcal \{d_1\})),
\end{align*}
where the last equality is obtained by taking the complementary split set.
Thus equation \eqref{eq:cladesplitinv} holds, since terms cancel in pairs.
\end{proof}

From Theorem \ref{thm:cladesplitinv} we obtain the following. 

\begin{cor}\label{thm:meas0}
  Let $\psi$ be a rooted binary species tree topology on at least 5
  taxa $\mathcal X$, where $\mathcal X=\mathcal A\sqcup \mathcal D$ is
  a disjoint union of subsets with $|\mathcal A|, |\mathcal D| \ge 2$.
  If $\mathcal A$ is not a clade on $\psi$ and $\mathcal D$ is not a
  2-clade, then for generic choices of internal edge lengths $\lambda$
  (i.e., all except those in some set of measure zero) there exists
  some $\emptyset\neq\mathcal C\subsetneq\mathcal D$, $a,b\in\mathcal
  A$, such that the corresponding clade-induced split invariant of
  equation \eqref{eq:cladesplitinv} does not vanish on the split
  probabilities arising under the multispecies coalescent on
  $\sigma=(\psi,\lambda)$.
\end{cor}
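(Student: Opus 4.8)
The plan is to upgrade the conclusion of Lemma \ref{lem:nonzero}, which exhibits a \emph{single} edge-length assignment on which a clade-induced split invariant fails to vanish, to the \emph{generic} nonvanishing asserted here. The mechanism is the standard one: for a fixed choice of data the invariant is a polynomial in suitably transformed edge-length parameters, and a polynomial that is not identically zero can vanish only on a set of measure zero.

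First I would fix, once and for all, the triple $(\mathcal C, a, b)$ supplied by Lemma \ref{lem:nonzero}, and regard the left-hand side of equation \eqref{eq:cladesplitinv} for this choice as a function of the internal edge lengths $\lambda=(\lambda_1,\dots,\lambda_m)$ of $\psi$. Each split probability $\PP_\sigma(\mathcal A|\mathcal B)$ is a sum of unrooted gene tree probabilities, which are themselves marginalizations over root location of rooted gene tree probabilities; by \cite{DegnanSalter2005} the latter are polynomials in the transformed variables $X_i=\exp(-\lambda_i)$. Being an integer-coefficient combination of such split probabilities, the invariant is therefore itself a polynomial $p(X_1,\dots,X_m)$. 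The map $\lambda\mapsto(\exp(-\lambda_i))_i$ is a diffeomorphism of $(0,\infty)^m$ onto $(0,1)^m$, hence carries null sets to null sets in both directions, so it suffices to show that the zero set of $p$ has measure zero.

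The crux is to confirm that $p$ is not the zero polynomial, and this is where a little care is needed: Lemma \ref{lem:nonzero} does not exhibit an interior parameter value of nonvanishing, but instead assigns certain edges lengths ``near zero'' and others ``near infinity,'' locating a nonzero value of $p$ only in the limit as some $X_i\to 1$ and others $X_i\to 0$, with the remaining coordinates free. I would make this precise by observing that $p$, as a polynomial, is defined and continuous on all of $[0,1]^m$, and that the limiting value extracted in the proof of Lemma \ref{lem:nonzero} (for instance $2/9-1/6$ in the caterpillar reduction) is exactly $p(X^\ast)$ for the boundary point $X^\ast$ at which those coordinates are pinned to $0$ or $1$. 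Since $p(X^\ast)\neq 0$, the polynomial $p$ cannot be identically zero.

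Once $p\not\equiv 0$ is established, the conclusion is immediate: the zero locus of a nonzero polynomial on $\mathbb R^m$ is a proper algebraic subvariety and hence Lebesgue-null, so its intersection with $(0,1)^m$ is null, and transporting back through the diffeomorphism shows the set of $\lambda$ on which this one invariant vanishes has measure zero. For every $\lambda$ outside this single exceptional set the fixed triple $(\mathcal C,a,b)$ already witnesses the required nonvanishing, which yields the ``there exists some $\mathcal C,a,b$'' statement generically; note that because one triple suffices, there is no need to take a union of exceptional sets over all possible choices. I expect the only genuine obstacle to be the boundary-point argument of the third paragraph --- verifying that the heuristic ``near zero / near infinity'' computation of Lemma \ref{lem:nonzero} really is the honest limit of the continuous extension of $p$, rather than a merely formal limit --- after which everything is routine.
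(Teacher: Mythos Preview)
Your proposal is correct and takes essentially the same approach as the paper: express the invariant as a polynomial in the variables $X_i=\exp(-\lambda_i)$, use Lemma \ref{lem:nonzero} to see this polynomial is not identically zero, and conclude its zero set is null, then transport back via $x_i=-\log X_i$. Your concern about boundary versus interior nonvanishing is unnecessary, since the \emph{statement} of Lemma \ref{lem:nonzero} already asserts a genuine choice of edge lengths $\lambda$ (the ``(near)'' language in its proof is precisely the continuity argument you spell out), and the paper simply takes this at face value.
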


\begin{proof} The non-trivial split probabilities arising from the
  coalescent on $\sigma$ can be expressed as polynomials in the $\exp
  (-\lambda_i)$. We can thus view the set of all vectors of split
  probabilities as the image of $(0,1)^{n-2}$ under a polynomial map,
  which is therefore a semi-algebraic set.  By Lemma
  \ref{lem:nonzero}, there is an invariant which does not vanish at
  some point in this set, so the composition of the invariant with the
  polynomial map is not identically zero on $(0,1)^{n-2}$.  Since this
  composition is a polynomial, its non-vanishing at a single point
  implies the set where it vanishes has measure zero in $(0,1)^{n-2}$.
  Mapping this set to interior edge lengths by $x=-\log(X)$
  shows the set of edge lengths
  for which the invariant vanishes has measure zero.
\end{proof}

\begin{cor}\label{cor:genid}
  Let $\psi$ be a rooted binary species tree topology on a set
  $\mathcal X$ of at least 5 taxa.  For generic edge lengths
  $\lambda$, all clades on $\psi$ excluding at least three taxa can be
  identified by evaluating clade-induced split invariants on the
  probabilities of splits under the multispecies coalescent on
  $\sigma=(\psi,\lambda)$.
  
  Clades on $\psi$ excluding exactly two taxa can similarly be identified 
  if their complement is not a 2-clade.
\end{cor}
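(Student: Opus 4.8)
The plan is to convert the two results just established into a decisive membership test for clades and to apply it to each candidate subset in turn. For a subset $\mathcal A\subseteq\mathcal X$ with $|\mathcal A|\ge 2$ and $|\overline{\mathcal A}|\ge 2$, I would form the finite family of clade-induced split invariants given by equation \eqref{eq:cladesplitinv} as $\mathcal C$ ranges over all sets with $\emptyset\ne\mathcal C\subsetneq\overline{\mathcal A}$ and $a,b$ over distinct elements of $\mathcal A$, and declare $\mathcal A$ to be a clade exactly when every member of this family evaluates to zero on the split probabilities $\PP_\sigma(\mathcal A|\mathcal B)$. The two directions of correctness are supplied respectively by Theorem \ref{thm:cladesplitinv} and Corollary \ref{thm:meas0}.

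Soundness is immediate and holds for \emph{all} edge lengths: if $\mathcal A$ is in fact a clade on $\psi$, Theorem \ref{thm:cladesplitinv} forces every invariant in the family to vanish, so a genuine clade always passes. For the first assertion I would run this test over every subset with $2\le|\mathcal A|$ and $|\overline{\mathcal A}|\ge 3$. Here $\overline{\mathcal A}$ has at least three elements, hence is not a 2-clade, so if such an $\mathcal A$ is \emph{not} a clade then Corollary \ref{thm:meas0} produces, outside a set of edge lengths of measure zero, some invariant in the family that does not vanish; thus a non-clade is generically rejected. To make ``generic'' uniform across all these subsets, I would take the union, over the finitely many subsets in this size range that are not clades of $\psi$, of the measure-zero exceptional sets provided by Corollary \ref{thm:meas0}. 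A finite union of measure-zero sets is measure zero, so for all edge lengths outside this single exceptional set every test returns the correct classification at once. This identifies all clades excluding at least three taxa; note it also locates the cherries, since a 2-clade excludes $n-2\ge 3$ taxa and so is among the subsets tested here.

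For the second assertion I would apply the same test to subsets $\mathcal A$ with $|\overline{\mathcal A}|=2$, now under the hypothesis that $\overline{\mathcal A}$ is not a 2-clade. Soundness is unchanged, and this hypothesis is precisely what permits a second appeal to Corollary \ref{thm:meas0}, so a non-clade $\mathcal A$ whose complement is not a 2-clade is again rejected for generic edge lengths, absorbing finitely many further measure-zero sets into the generic condition.

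The main point requiring care --- rather than a genuine obstacle, since the difficult non-vanishing is already contained in Lemma \ref{lem:nonzero} --- is to quarantine the exceptional case flagged by the Proposition preceding Corollary \ref{thm:meas0}: when $\overline{\mathcal A}$ \emph{is} a 2-clade, all the invariants of equation \eqref{eq:cladesplitinv} vanish whether or not $\mathcal A$ is a clade, so the test is inconclusive and the 2-clade hypothesis cannot be dropped. This is exactly why the second assertion is restricted to complements that are not 2-clades; and because the cherries have already been pinned down in the first part, one knows in advance which complements trigger this obstruction and can exclude them.
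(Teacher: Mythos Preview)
Your argument is correct and matches the paper's own proof essentially line for line: both use Theorem \ref{thm:cladesplitinv} for soundness, Corollary \ref{thm:meas0} for generic completeness when the complement is not a 2-clade, and then take a finite union of measure-zero exceptional sets to obtain a uniform generic condition. Your added remark that cherries are already detected in the first stage (since a 2-clade excludes $n-2\ge 3$ taxa) is exactly how the paper handles the second assertion as well.
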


\begin{proof}
  For any subset of $\mathcal A\subsetneq \mathcal X$ excluding at
  least three taxa, if we find any invariant given by Theorem
  \ref{thm:cladesplitinv} that fails to vanish on the split
  probabilities for $\sigma=(\psi,\lambda)$, then $\mathcal A$ is not
  a clade on $\psi$. If all such invariants vanish, then by Corollary
  \ref{thm:meas0}, either $\mathcal A$ is a clade on $\psi$, or
  $\lambda$ lies in a set of measure zero (which is dependent on
  $\mathcal A, \mathcal C,a,$ and $b$ used in defining the invariant).

  Thus, considering all such $\mathcal A$, we can determine all clades
  excluding at least three taxa, unless the edge lengths $\lambda$ lie
  in a set of measure zero (the finite union of sets of measure zero
  for each clade, each of which is a finite intersection of sets of
  measure zero for each invariant for that clade.)

Finally, suppose
  $\mathcal A$ excludes only two taxa, with complement $\mathcal D$. Then for generic edge lengths the non-vanishing of an appropriate
  invariant can detect whether $\mathcal D$ is a 2-clade. If it is not, then using
  this knowledge, the vanishing of all clade split invariants
  associated to $\mathcal A$ will identify it as a clade.
  \end{proof}

We now are able to use split invariants to fully identify rooted species trees in some cases, and find only 2 or 3 possible rootings in others. Although this result will be strengthened in Theorem \ref{thm:rootID} below by also using some inequalities, equalities alone lead to the following result.

\begin{cor}\label{thm:idroot}
  A binary rooted species tree topology can be identified from split
  probabilities via the clade-induced split invariants of equation
  \eqref{eq:cladesplitinv} for generic edge lengths on all species
  trees on 5 or more taxa, except in the following cases of
  indeterminacy. Here $T$ denotes a rooted subtree on 3 or more taxa,
  which is identifiable, and lower case letters denote other taxa.
\begin{enumerate}
\item \label{case:a}$((T,a),b)$, $((T,b),a)$
\item\label{case:b} $((T,a),(b_1,b_2))$, $((T,(b_1,b_2)),a)$
\item\label{case:c}$((T,(a_1,a_2)),(b_1,b_2))$, $((T,(b_1,b_2)),(a_1,a_2))$,
\item \label{case:d} $(((a,b),c),(d,e))$, $((a,b),(c,(d,e)))$, $((a,b),(d,e)),c)$ 
\item \label{case:e}$(((a,b),(c,d)),(e,f))$, $(((a,b),(e,f)),(c,d))$, $(((c,d),(e,f)),(a,b))$
\end{enumerate}
\end{cor}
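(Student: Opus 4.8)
The plan is to treat this as a purely combinatorial identifiability argument built on top of Corollary \ref{cor:genid}, which tells us exactly which clades of $\psi$ we can detect by testing vanishing of clade-induced split invariants: namely all clades excluding three or more taxa, plus any clade excluding exactly two taxa whose complement is not a 2-clade. So first I would fix the \emph{detectable data}: for generic $\lambda$, the set of all clades of $\psi$ of size $\le n-3$, together with those clades of size $n-2$ whose complementary pair is not a cherry. From this list alone I want to reconstruct $\psi$ as much as possible. The first step is to observe that the detectable clades of size $\le n-3$ already pin down an enormous amount of the tree: any internal node whose clade has size in $[3,n-3]$ is recovered exactly, so the only ambiguity can concern the branching structure among the \emph{top few} nodes near the root, where clades have size $n-1$ (never detectable, being a trivial-complement split), size $n-2$ (detectable only conditionally), or involve the large subtree $T$.

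The core of the argument is a case analysis on the shape of $\psi$ in a neighborhood of the root, organized by how the taxa not lying in the largest identifiable clade $T$ attach. I would let $T$ denote a maximal identifiable rooted subtree (on $\ge 3$ taxa, hence with clade of size $\ge 3$ that is itself recovered), and then enumerate how the remaining one, two, or four taxa (forming the residual cherries and singletons shown in cases (1)--(5)) can be grafted above $T$. For each configuration I must check two things: (a) that every clade of $\psi$ of the detectable type is consistent with the listed candidate topologies, and (b) that the two (or three) topologies listed in each case produce \emph{exactly} the same set of detectable clades, so that the clade-induced invariants genuinely cannot distinguish them. Point (b) is where condition \eqref{cond:notcherry} of Lemma \ref{lem:nonzero} does the work: the indeterminacy arises precisely because a clade of size $n-2$ has a 2-clade complement, so the relevant invariant is \emph{forced} to vanish by the Proposition following Lemma \ref{lem:nonzero}, regardless of which of the candidate rootings is correct. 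Thus the listed pairs/triples are genuinely indistinguishable by equalities. Conversely I must argue \emph{completeness} of the list: that outside these configurations, some clade of the detectable type differs between any two distinct rooted topologies sharing the underlying unrooted shape, so Corollary \ref{cor:genid} separates them.

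The cleanest way to organize completeness is to note that two distinct rooted binary trees on $\mathcal X$ differ in their clade sets, and to ask when \emph{every} clade on which they differ is of a non-detectable type — i.e., is a clade of size $n-1$, or a clade of size $n-2$ whose complement is a cherry. Since size-$(n-1)$ clades correspond to the root edge (always undetectable) and size-$(n-2)$ clades with cherry complement are exactly the obstructed ones, I would show that the symmetric difference of the clade sets of two non-isomorphic rootings can consist solely of such clades only when the trees fall into one of the five listed patterns. This reduces to a short finite check on the top of the tree: the differing clades all sit at distance $\le 2$ from the root, so only the attachment of the residual $\le 4$ taxa relative to $T$ matters, yielding exactly cases (1)--(5) (cases (4) and (5) being the special small-$T$ instances where $T$ is itself a cherry or a pair of cherries, creating the extra symmetry among three rootings).

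The main obstacle I anticipate is \textbf{completeness rather than soundness}: showing that the five families exhaust \emph{all} indeterminacies requires ruling out that some larger or more exotic near-root configuration also hides all its distinguishing clades among the undetectable sizes. The subtlety is that one must track not just one ambiguous clade but the \emph{entire} set of clades simultaneously, since a rooting might be forced by a detectable clade elsewhere even when one particular size-$(n-2)$ clade is obscured; the cherry-swapping Proposition only blocks detection when the complement is a 2-clade, so I must verify that in every configuration outside the list there is at least one clade of size $\le n-3$, or a size-$(n-2)$ clade with non-cherry complement, that is present in one candidate and absent in the other. Handling the boundary cases where $T$ is small (so that $n-2$ and the subtree sizes nearly coincide) is where the enumeration is most delicate, and is precisely why cases (4) and (5) appear as separate low-taxon exceptions.
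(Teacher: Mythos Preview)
Your proposal is correct and rests on the same key input as the paper's proof (Corollary~\ref{cor:genid}), but the case analysis is organized differently. You frame the argument around a maximal identifiable subtree $T$ and then compare candidate rootings by the symmetric difference of their clade sets; the paper instead works constructively, forming the list of all detected clades (sizes $\le n-3$, plus trivial singletons) and then casing on the minimum number of listed clades needed to partition $\mathcal X$ --- two parts means the root split is recovered directly; three parts leads to cases \eqref{case:b}--\eqref{case:e} depending on the sizes of the blocks; four parts forces case \eqref{case:a}. This partition viewpoint handles the small-tree exceptions \eqref{case:d} and \eqref{case:e} uniformly rather than as degenerate boundary cases of the ``maximal $T$'' picture, and it makes the completeness check (ruling out five or more parts, and pinning down the structure in the three- and four-part cases) a short size-counting argument rather than a symmetric-difference analysis. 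Your route would work, but note a small slip: the residual outside $T$ can have two, three, or four taxa (case \eqref{case:b} has three), and your claim that ``the differing clades all sit at distance $\le 2$ from the root'' is exactly the combinatorial content that the partition argument makes explicit, so you would still need to supply that step.
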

The various cases enumerated in the corollary are depicted in Figure \ref{fig:specialcases}.

\begin{center}
\begin{figure}
\includegraphics[width=12cm]{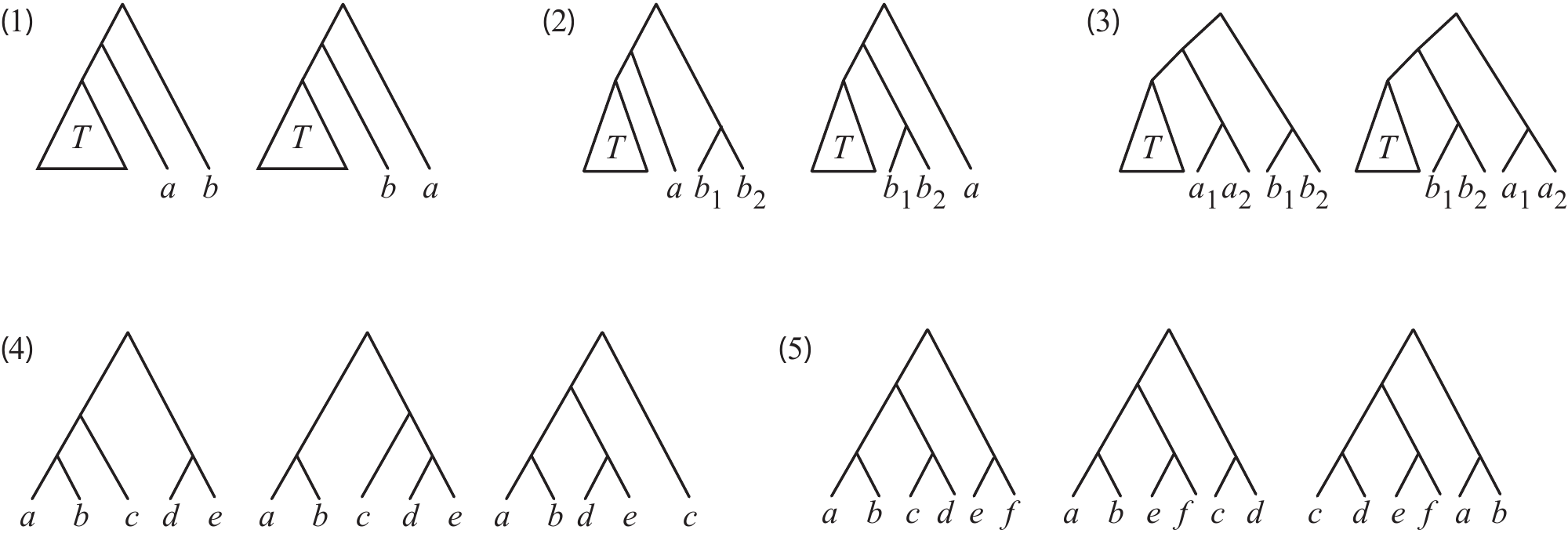}
\caption{For generic species tree edge lengths, split invariants can be used to determine rooted 
species tree topologies, up to the 5 ambiguous cases shown here, as proved in 
Corollary \ref{thm:idroot}.} \label{fig:specialcases}
\end{figure}
\end{center}

\begin{proof} Given all split probabilities 
computed from a species trees with generic edge lengths, we may test every subset 
of $\mathcal X$ omitting 3 or more taxa to see if  it is a clade on $\psi$,
using Corollary \ref{cor:genid}.
We then form a list of all such clades (including trivial ones) on $\psi$.

If two clades on this list form a bipartition of $\mathcal X$, then we have 
determined all clades on $\psi$, hence its rooted topology. 

If no pair of clades on this list partition $\mathcal X$, but we find
three clades on the list that do, denote them by $\mathcal A$,
$\mathcal B$, and $\mathcal C$ with $|\mathcal A|\ge| \mathcal B|\ge
|\mathcal C|$.  Since there are at least 5 taxa, we cannot have
$|\mathcal A|=1$. If $|\mathcal A|=2$ then $|\mathcal B|=2$,
$|\mathcal C|=1$ or $2$, yielding cases \ref{case:d} and
\ref{case:e}.  If $|\mathcal A|\ge 3$, then we know $\mathcal
B\cup\mathcal C$ is not a clade, since otherwise it would have
appeared on the list, leading to a bipartition of $\mathcal X$.  Thus
either $\mathcal A\cup \mathcal B$ or $\mathcal A\cup \mathcal C$ is a
clade. Note that $|\mathcal B|\neq 1$, else $\mathcal A$ would not
omit at least 3 taxa. If $|\mathcal B|=2$, then we obtain cases
\ref{case:b}, and \ref{case:c}.  If $|\mathcal B|\ge3$, then
$\mathcal A\cup \mathcal B$ is a clade, since $\mathcal A\cup\mathcal
C$ and $\mathcal B\cup\mathcal C$ were not found on the list. Since
all subclades of $\mathcal A$, $\mathcal B$, and $\mathcal C$ appear
in the list, all clades on $\psi$ are determined.

If there is no partition $\mathcal X$ into two or three clades on the
list, then there must be one with four, since if five or more were
needed then the union of each pair would omit at least 3 taxa and at
least one such union is a detectable clade. Denote the four clades by
$\mathcal A$, $\mathcal B$, $\mathcal C$, and $\mathcal D$, with
$|\mathcal A|\ge| \mathcal B|\ge |\mathcal C|\ge |\mathcal D|$. We
also must have $|\mathcal C|=|\mathcal D|=1$, since otherwise the
union of each pair of sets would omit at least 3 taxa, and hence would
already have been tested for being a clade. It is now enough to
determine the clade structure formed by the union of these sets, since
all subclades of them are already known.

If $|\mathcal B|=1$, then $|\mathcal A|\ge 2$, and none of $\mathcal
B\cup\mathcal C$, $\mathcal B\cup \mathcal D$, and $\mathcal
C\cup\mathcal D$ are clades since they omit at least 3 taxa and did
not appear on the list of known clades.  Thus the four clades must
form a rooted unbalanced 4-leaf tree with $\mathcal A$ in the
cherry. We can then use invariants to check which of $\mathcal
A\cup\mathcal B$, $\mathcal A\cup\mathcal C$, $\mathcal A\cup\mathcal
D$ is a clade, since we know their complement is not a 2-clade. This
results in case \ref{case:a}.

If $|\mathcal B|\ge 2$, then every pairwise union of the four clades
except $\mathcal A\cup\mathcal B$ would have been tested, so $\mathcal
A\cup\mathcal B$ must be a clade. As $\mathcal C\cup\mathcal D$ is not
a clade, this also falls into case \ref{case:a}.
\end{proof}

The identifiability results of Corollary \ref{thm:idroot} are based
solely on the use of clade-induced linear invariants, that is on
certain linear equalities. Fortuitously, by considering other linear
invariants and inequalities, we can strengthen these results. For
example, those trees in case \ref{case:d} of Corollary
\ref{thm:idroot} can be distinguished by considering the sign (+/-) or
vanishing (= 0) of the linear invariant given below.  Indeed, this
linear expression in split invariants is equivalent to that of the
computationally-determined expression \eqref{eq:5pcextra}, and the
following proposition gives a theoretical justification for its
existence.  However, this linear invariant appears to be a special one
for a single 5-taxon tree, with no analogs for other trees.

\begin{prop} \label{prop:special5} The expression
\begin{multline}\label{eq:5inv}
\PP_\sigma(Sp(ab))+\PP_\sigma(Sp(ac))+\PP_\sigma(Sp(bc))\\-\PP_\sigma(Sp(de))-\PP_\sigma(Sp(cd))-\PP_\sigma(Sp(ce))
\end{multline} 
evaluates to 0 for the species tree $(((a,b),(d,e)),c)$. Assuming all
species tree edge lengths are finite and positive, expression
\eqref{eq:5inv} is positive for the species tree $(((a,b),c),(d,e))$
and negative for the species tree $((a,b),(c,(d,e)))$.
\end{prop}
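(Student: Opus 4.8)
The plan is to turn the purely formal expression \eqref{eq:5inv} into an honest probabilistic quantity and then read off the three signs from the clade structure of each tree. The key starting observation is combinatorial: a binary $5$-taxon gene tree has exactly two cherries, so for any three taxa $p,q,r$ at most one of the pairs $pq,pr,qr$ can be a cherry (two disjoint $2$-sets cannot both sit inside $\{p,q,r\}$). Since $\PP_\sigma(Sp(pq))$ is exactly the probability that $\{P,Q\}$ is a cherry, these three events are disjoint and
\[
\PP_\sigma(Sp(pq))+\PP_\sigma(Sp(pr))+\PP_\sigma(Sp(qr))=\PP_\sigma(\text{some pair among }p,q,r\text{ is a cherry}).
\]
Applying this to $\{a,b,c\}$ and to $\{c,d,e\}$ rewrites \eqref{eq:5inv} as $\PP_\sigma(U)-\PP_\sigma(V)$, where $U$ is the event that two of $A,B,C$ coalesce before either meets a lineage of $\{D,E\}$, and $V$ is the analogous event for $C,D,E$ versus $\{A,B\}$. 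This already makes the claimed signs intuitive: the expression simply compares how tightly $\{a,b,c\}$ clusters on gene trees against how tightly $\{c,d,e\}$ does.

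Next I would use symmetry to cut down the cases. Write $E$ for the value of \eqref{eq:5inv}. The relabeling $\tau\colon a\leftrightarrow d,\ b\leftrightarrow e$ (fixing $c$) interchanges $U$ and $V$, hence negates $E$. On the pseudocaterpillar $(((a,b),(d,e)),c)$ this $\tau$ is a shape-automorphism that merely swaps the cherry-edge lengths $x$ and $y$, so $\PP_{(x,y,z)}(U)=\PP_{(y,x,z)}(V)$ and consequently $E(x,y,z)=-E(y,x,z)$; vanishing is therefore \emph{equivalent} to $\PP_\sigma(U)$ being symmetric under $x\leftrightarrow y$. On the two balanced trees $\tau$ carries $(((a,b),c),(d,e))$ to $((a,b),(c,(d,e)))$ with a positive-to-positive reassignment of edge lengths, again negating $E$. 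So it suffices to prove the single strict inequality $\PP_\sigma(U)>\PP_\sigma(V)$ for $\sigma=(((a,b),c),(d,e))$, and the negative sign for the third tree comes for free.

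To finish I would compute both probabilities explicitly. The cherry-swapping invariants of Theorem \ref{thm:cladesplitinv} collapse $\PP_\sigma(U)=\PP_\sigma(Sp(ab))+2\PP_\sigma(Sp(ac))$ and $\PP_\sigma(V)=\PP_\sigma(Sp(de))+2\PP_\sigma(Sp(cd))$, since each of these trees has both $ab$ and $de$ as cherries. I would obtain these four cherry probabilities by conditioning on the numbers of surviving lineages of $\{a,b,c\}$ and of $\{d,e\}$ at the nodes $\MRCA(a,b,c)$, $\MRCA(d,e)$, and the root, together with the outcome of the coalescent in the root population; this presents each as a polynomial in $X=\exp(-x)$, $Y=\exp(-y)$, $Z=\exp(-z)$. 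For the pseudocaterpillar the same computation should exhibit $\PP_\sigma(U)$ as symmetric in $x\leftrightarrow y$, giving $E\equiv 0$, while for the balanced tree the difference $\PP_\sigma(U)-\PP_\sigma(V)$ should either factor or have manifestly positive coefficients on $(0,1)^3$.

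I expect this final positivity check to be the main obstacle. The clean intuition is a coupling: because $A,B,C$ are confined below $\MRCA(a,b,c)$ with two internal edges in which to coalesce before ever encountering $\{D,E\}$, every coalescent realization of $V$ ought to inject into one of $U$, with strictly positive excess from realizations in which $c$ cannot reach $\{d,e\}$ before $\{a,b,c\}$ has already coalesced. Turning this into a measure-preserving injection across different label sets is delicate, however, so I anticipate falling back on the explicit polynomial difference, where the remaining difficulty is entirely one of sign control rather than of any conceptual point; the equality case for the pseudocaterpillar drops out of the same formulas.
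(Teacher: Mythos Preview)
Your disjointness observation is correct and is in fact the combinatorial backbone of the paper's own argument: on a binary $5$-taxon unrooted tree at most one of $Sp(pq),Sp(pr),Sp(qr)$ can hold, so the first three terms of \eqref{eq:5inv} give the probability of the single event ``some pair from $\{A,B,C\}$ is a cherry,'' and likewise for the last three. Your symmetry reduction of $((a,b),(c,(d,e)))$ to $(((a,b),c),(d,e))$ via $a\leftrightarrow d,\ b\leftrightarrow e$ is also exactly what the paper does.

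Two points need correction. First, your coalescent description of $U$ as ``two of $A,B,C$ coalesce before either meets a lineage of $\{D,E\}$'' is not equivalent to ``some pair from $\{A,B,C\}$ is a cherry.'' On the rooted gene tree $((((C,D),E),A),B)$ no pair from $\{A,B,C\}$ coalesces before meeting $\{D,E\}$, yet after unrooting $AB|CDE$ is a split. A cherry $Sp(pq)$ arises whenever $\{P,Q\}$ \emph{or} its complement is a clade on the rooted gene tree, and the latter possibility is what your description misses. This undermines the proposed coupling argument, which is built on the wrong event. Second, for the pseudocaterpillar your antisymmetry $E(x,y,z)=-E(y,x,z)$ only forces vanishing on the diagonal $x=y$; reducing to ``$\PP_\sigma(U)$ is symmetric in $x\leftrightarrow y$'' is a restatement, not a proof.

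Your fallback of computing the polynomials explicitly does work: the split probabilities for all three trees appear in Table~\ref{tab:splitprobs}, and from them one finds, for instance, that on $\sigma_{bal}$ the expression equals $\tfrac{2}{3}XYZ(1-Y^2)>0$, while on $\sigma_{ps}$ it is identically zero. So your proposal succeeds, but only through brute force.

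The paper instead avoids the polynomials entirely by conditioning on how many coalescent events occur on the internal edges below the root. In each conditional case your disjointness observation shows the first three terms sum to $0$ or $1$, and a short argument handles the last three. For the pseudocaterpillar every case gives exact cancellation; for the balanced tree one case (exactly one coalescence among $A,B,C$ below the root, joining $A$ or $B$ with $C$) gives a strict surplus because $Sp(DE)$ then occurs with probability strictly less than $1$. This conditioning route is what your coupling intuition was reaching for, once $U$ is described correctly as a cherry event rather than a first-coalescence event.
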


\begin{proof}
  For the species tree $(((a,b),(d,e)),c)$, let $e_1$ be the edge
  immediately above $\MRCA(a,b)$, and $e_2$ the edge above
  $\MRCA(d,e)$ in the species tree.  To show expression
  \eqref{eq:5inv} evaluates to 0, it is enough to show this conditioned on disjoint and
  exhaustive events.  To this end, we compute \eqref{eq:5inv} conditioned on
 whether coalescent events occur on edges $e_1$ and
  $e_2$.

Given that no coalescence occurs on either
  $e_1$ or $e_2$, the probabilities of $Sp(ab)$ and $Sp(de)$ are
  equal by exchangeability. Similarly, the other 4 split probabilities
  appearing in formula \eqref{eq:5inv} are all equal. Thus all terms
  cancel.

  Given that coalescences occurs on both $e_1$ and $e_2$, then the
  probabilities of $Sp(ab)$ and $Sp(de)$ are both 1. The other 4
  probabilities are all 0, so again all terms cancel.

  Assuming that a coalescent event occured on exactly one of $e_1$ and
  $e_2$, without loss of generality we may assume it is on $e_1$. Then
  $Sp(ab)$ has probability 1, while $Sp(ac)$ and $Sp(bc)$ have
  probability 0.  The next coalescent event produces the only other
  non-trivial split of the gene tree, which must be one of $Sp(CD),
Sp(CE), Sp(DE)$.  Thus $\PP(Sp(cd))+ \PP(Sp(ce))+\PP( Sp(de))=1$,
  and again we find the expression gives 0.

\medskip

For the species tree $(((a,b),c),(d,e))$, let $e_1$ be the edge above
$\MRCA(a,b)$, $e_2$ that above $\MRCA(a,b,c)$, and $e_3$ that above
$\MRCA(d,e)$. We will again consider disjoint exhaustive events, and
show that conditioned on the number of coalescent events on
these edges expression \eqref{eq:5inv} is always non-negative, and
sometimes positive. Thus, the unconditioned expression is positive.

If there are exactly 2 coalescences on the edges $e_1,e_2$,
then in the formation of a gene tree a compound lineage $ABC$ enters the
population above the root, and exactly one of  $Sp(AB), Sp(AC),
Sp(BC)$ form.  Moreover, $Sp(DE)$ will be present on any unrooted
version of such a gene tree, and $Sp(CD), Sp(CE)$ absent.  Thus, in
the conditional probability, the first three terms of expression
\eqref{eq:5inv} sum to 1, and the last three terms to $-1$, for a
total of 0.

If there is exactly 1 coalescence on the edges $e_1, e_2$,
then again exactly one of $Sp(AB), Sp(AC), Sp(BC)$ must form on
a gene tree, and the sum of the first three probabilities in
\eqref{eq:5inv} is 1.  If $Sp(AB)$ formed, then exactly one of $Sp(DE),Sp(CD),Sp(CE)$ forms, and the expression
in \eqref{eq:5inv} is zero. If $Sp(AB)$ does not form, say instead
$Sp(AC)$ does, then neither $Sp(CD)$ nor $Sp(CE)$ can appear on any such
gene tree, while $Sp(DE)$ forms with probability less than 1 since
$e_3$ has finite length.  In this case, expression
\eqref{eq:5inv} is positive. Similarly, if $Sp(BC)$ forms, then the
expression is positive.

If there are no coalescences on $e_1$, $e_2$, or $e_3$, then all 5
lineages of the taxa arrive at the root of the species tree
distinct. Then by exchangeability one sees the probability of every
split $Sp(xy)$ is the same, so the expression evaluates to 0.

If there are no coalescences on $e_1$, $e_2$, but there is one on
$e_3$, then $Sp(DE)$ forms, but not $Sp(CD)$ nor $Sp(CE)$.  Thus the last
three terms yield $-1$. As the lineages at the root of the species
tree will be $A$, $B$, $C$, and a combined $DE$, exactly one of 
$Sp(AB)$, $Sp(AC)$, $Sp(BC)$ forms, so the first three terms add to 1, and
all terms cancel.

\medskip

The claim for the species tree $((a,b),(c,(d,e)))$ follows by interchanging 
taxon names from $(((a,b),c),(d,e))$.

\end{proof}

To address root identifiability in case \ref{case:a} of Corollary
\ref{thm:idroot}, we have the following.

\begin{prop} \label{prop:acbc} Let $\mathcal X$ be a set of at least 5
  taxa, $a,b\in \mathcal X$, and $T$ any rooted species tree topology
  on $\mathcal X'=\mathcal X\smallsetminus \{a,b\}$. Let $c\in\mathcal
  X'$. Suppose $\psi$ is one of the species trees $((T,a),b)$,
  $((T,b),a)$, or $(T,(a,b))$, and $\sigma=(\psi,\lambda)$ has
  positive length edges incident to the root. Then
$$\PP(Sp(ac))- \PP(Sp(bc))\begin{cases} > 0 \text{ if and only if } \psi =((T,a),b),\\
  =0\text{ if and only if } \psi =(T,(a,b)),\\
  <0 \text{ if and only if } \psi =((T,b),a).
\end{cases} $$
 \end{prop}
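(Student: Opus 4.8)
The plan is to prove the three claims simultaneously by analyzing the quantity $\PP(Sp(ac))-\PP(Sp(bc))$ through a conditioning argument on the behavior of lineages $A$ and $B$ relative to the rest of the tree. The central observation is that the three candidate trees $((T,a),b)$, $((T,b),a)$, and $(T,(a,b))$ differ only in how the pendant lineages for $a$ and $b$ attach near the root: the subtree $T$ is common to all three. So I would first set up notation for the edges incident to the root, letting $e_a$ and $e_b$ be the edges on which the $A$ and $B$ lineages travel before any coalescence with the $T$-block or with each other.

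Next I would condition on the coalescent history of $A$ and $B$ above $T$. The key structural fact is that, by exchangeability of lineages under the coalescent, $Sp(ac)$ and $Sp(bc)$ are \emph{equiprobable} in precisely those histories where $A$ and $B$ are interchangeable --- that is, whenever they enter the population above the root as two distinct lineages symmetric with respect to the rest of the tree, or whenever they have coalesced with each other before interacting asymmetrically with $T$. Thus in tree $(T,(a,b))$, where $A$ and $B$ form a cherry and their lineages are exchangeable throughout, every history contributes equally to $Sp(ac)$ and $Sp(bc)$, forcing $\PP(Sp(ac))=\PP(Sp(bc))$; this handles the $=0$ case. For the two caterpillar-type trees the asymmetry enters because one of $A,B$ sits on a pendant edge adjacent to the root while the other must first coalesce into the $T$-block at a lower node. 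I would show that the difference $\PP(Sp(ac))-\PP(Sp(bc))$ reduces, after all symmetric contributions cancel, to a sum over histories in which $A$ (in the case $((T,a),b)$) is strictly more likely than $B$ to be separated from $c$ together with the same partition of $T$ --- capturing that $b$, being attached below, is ``pulled toward'' the $T$-block and hence toward $c$.

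The technical heart of the argument is establishing the strict sign in the two asymmetric cases, and this is where I expect the main obstacle. Cancellation shows the difference equals a positive multiple of the probability of an event that has strictly positive probability exactly when the edges incident to the root have positive length. Concretely, for $((T,a),b)$ I would isolate the event that no coalescence involving $B$ and the $A\cup T$ block occurs below the root while $A$ has already joined $T$; positivity of the relevant root-incident edge lengths guarantees this event occurs with positive probability, and an exchangeability/pairing argument shows it contributes only to $\PP(Sp(ac))$ and not to $\PP(Sp(bc))$ (or contributes more to the former). The symmetric tree $((T,b),a)$ then gives the opposite sign by interchanging the roles of $a$ and $b$, which is immediate from the symmetry of the construction. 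The delicate point to get right is the bookkeeping of \emph{which} splits $Sp(\mathcal S\cup\{a\}\cup\{c\})$ versus $Sp(\mathcal S\cup\{b\}\cup\{c\})$ survive cancellation once one sums over all subsets $\mathcal S$ of the $T$-taxa; I would handle this by constructing an explicit probability-preserving bijection (analogous to the $A\leftrightarrow B$ exchange used in the proof of Theorem \ref{thm:cladesplitinv}, equation \eqref{eq:p3}) between gene-tree histories contributing to the two sides, defined on the complement of the asymmetry-producing event, leaving exactly that event as the uncancelled remainder of definite sign.
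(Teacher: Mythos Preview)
Your handling of the tree $(T,(a,b))$ is correct and matches the paper: exchangeability of the $A$ and $B$ lineages gives $\PP(Sp(ac))=\PP(Sp(bc))$ immediately, and the paper simply cites Theorem \ref{thm:cladesplitinv} for this. Your reduction of $((T,b),a)$ to $((T,a),b)$ by relabeling is also fine.

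The gap is in the strict-inequality argument for $((T,a),b)$. You propose to split into the event $E=\{A$ has already joined the $T$-block before the root $\rho\}$ and its complement, observing that on $E^c$ the lineages $A,B$ are exchangeable above $\rho$ so contributions cancel. That part is correct. The problem is your claim that on $E$ the residual ``contributes only to $\PP(Sp(ac))$ and not to $\PP(Sp(bc))$ (or contributes more to the former).'' This is false as stated. Decompose $E$ further: if $A$ has coalesced with some $T$-lineage \emph{not} containing $C$ while $C$ remains alone at $\rho$, then $Sp(AC)$ is impossible (the block containing $A$ already has extra taxa), whereas $Sp(BC)$ can still form above $\rho$ when $B$ and $C$ coalesce first. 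So $E$ carries contributions of \emph{both} signs, and no exchangeability or pairing bijection of the type you describe will make the negative part vanish. Showing the net sign is positive is genuinely quantitative.

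The paper handles this by conditioning more finely, on the pair $(k,\ell)$ where $k$ is the number of lineages at $v=\MRCA(T,a)$ (with the $c$-lineage still alone) and $\ell$ is the number of coalescences on the edge from $v$ to $\rho$. Conditioned on $(k,\ell)$, the problem reduces to a toy species tree $((x_1,\dots,x_k)\tc L,y)$, and one must compare the probabilities $p(x_1x_2\mid k,\ell)$ and $p(x_1y\mid k,\ell)$ that a 2-split separates two ingroup lineages versus one ingroup and the outgroup. The paper derives recursions in $(k,\ell)$ for these quantities (Lemma \ref{lem:recur}), proves the bound $p(x_1y\mid k,\ell)<1/k$ (Lemma \ref{lemESA:bounds}), and then shows $p(x_1x_2\mid k,\ell)-p(x_1y\mid k,\ell)>0$ for $k\ge 4$, $\ell\ge 1$ and $=0$ otherwise (Lemma \ref{lem:ineq}). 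The strict positivity then comes from the fact that, since $|T|\ge 3$ and the root-incident edge has positive length, the event $(k,\ell)$ with $k\ge 4$, $\ell\ge 1$ has positive probability. This recursive comparison is the technical heart that your bijection sketch does not supply.
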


The intuition behind this proposition is rather simple. 
 The polynomial $\PP(Sp(ac))- \PP(Sp(bc))$ is a clade-induced split
 invariant for the tree $(T,(a,b))$, identifying by its vanishing that $ab$ is a clade.
 One might reasonably hope that the hyperplane defined by this
 invariant's vanishing separates collections of split probabilities
 for the two alternative trees $((T,a),b))$ and $((T,b), a))$. That this is true is established 
by a rather technical proof which appears in Appendix \ref{app:proofs}.

 For case \ref{case:b}, we follow a similar tack, focusing on a split
 invariant for the clade $ab_1b_2$ on the tree $(T,(a,(b_1,b_2))$. The
 proof of the following is also in Appendix \ref{app:proofs}.

 \begin{prop} \label{prop:abbc} Let $\mathcal X$ be a set of at least
   6 taxa, $a,b_1,b_2\in \mathcal X$, and $T$ any rooted species tree
   topology on $\mathcal X'=\mathcal X\smallsetminus
   \{a,b_1,b_2\}$. Let $c\in\mathcal X'$. Suppose $\psi$ is one of the
   species trees $((T,a),(b_1,b_2))$, $((T,(b_1,b_2)),a)$, or
   $(T,(a,(b_1,b_2)))$, and $\sigma=(\psi,\lambda)$ has positive
   length edges incident to the root. Then
\begin{multline*}
\PP(Sp(ac))+\PP(Sp(ab_2c))
-\PP(Sp(b_1c))- \PP(Sp(b_1b_2c))\\
\begin{cases} 
> 0 \text{ if and only if } \psi  =((T,a),(b_1,b_2))\\
=0\text{ if and only if } \psi  =(T,(a,(b_1,b_2)))\\
<0 \text{ if and only if } \psi  =((T,(b_1,b_2)),a)
\end{cases} 
\end{multline*}
\end{prop}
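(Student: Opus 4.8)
The plan is to dispose of the equality case with Theorem \ref{thm:cladesplitinv} and then prove the two strict inequalities by a conditioning argument in the spirit of Proposition \ref{prop:special5}.

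First I would recognize the displayed expression as a clade-induced split invariant. Apply Theorem \ref{thm:cladesplitinv} with $\mathcal A=\{a,b_1,b_2\}$, $\mathcal C=\{c\}$, distinguished taxa $a$ and $b_1$, and $\mathcal A'=\{b_2\}$; since $\mathcal S$ then ranges over $\emptyset$ and $\{b_2\}$, equation \eqref{eq:cladesplitinv} becomes precisely $\PP(Sp(ac))+\PP(Sp(ab_2c))-\PP(Sp(b_1c))-\PP(Sp(b_1b_2c))$. As $\{a,b_1,b_2\}$ is a clade on $(T,(a,(b_1,b_2)))$, the theorem shows this vanishes there, settling the vanishing case. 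Throughout the rest I would use that $b_1,b_2$ form a cherry in all three trees, so their lineages are exchangeable; in particular $\PP(Sp(b_1c))=\PP(Sp(b_2c))$.

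For the sign claims, write the expression as $\EE[f(G)]$ over gene trees $G$, where
$$f(G)=\delta_{Sp(ac)}(G)+\delta_{Sp(ab_2c)}(G)-\delta_{Sp(b_1c)}(G)-\delta_{Sp(b_1b_2c)}(G).$$
Two structural observations cut the problem down. First, $f(G)=0$ unless the lineage $C$ reaches one of $A,B_1,B_2$ without first coalescing with any lineage of $T\smallsetminus\{c\}$: if $C$ merges with another lineage of $T$ first, that lineage lies on $C$'s side of every candidate split, so all four indicators vanish. Second, a compatibility check shows that the clusters underlying the four terms, namely $\{A,C\}\subset\{A,B_2,C\}$ and $\{B_1,C\}\subset\{B_1,B_2,C\}$, are mutually compatible only within these two nested pairs, so a positive indicator and a negative indicator are never simultaneously present; hence $f(G)\in\{-2,-1,0,1,2\}$ and its sign is determined solely by whether the freed lineage $C$ attaches to the $a$-side or the $b$-side of the local gene-tree topology on $\{A,B_1,B_2,C\}$.

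The core of the proof, and the step I expect to be the main obstacle, is to show that for $\psi=((T,a),(b_1,b_2))$ the $a$-side attachment is strictly favored. I would condition on the coalescent history, tracking which lineages reach the root population and whether $b_1,b_2$ have coalesced on the edge above $\MRCA(b_1,b_2)$, as in the case analysis of Proposition \ref{prop:special5}. The asymmetry to exploit is that in this tree $A$ shares the ancestral population of $(T,a)$ with $C$ and so can coalesce with $C$ strictly below the root, whereas any $b$-lineage can meet $C$ only at or above the root; this should yield a probability-preserving correspondence between histories contributing positively and negatively to $\EE[f(G)]$ under which the positive ones dominate, with strictness supplied by the positive-length root-incident edges, which force a positive-probability set of histories in which $A$ and $C$ coalesce below the root before any $b$-lineage is available. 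The inequality for $\psi=((T,(b_1,b_2)),a)$ then follows by the mirror argument, in which the cherry lineages $b_1,b_2$ are the ones sharing $C$'s population and so acquire the early-coalescence advantage, reversing the sign. The genuine difficulty, and the reason this is a laborious technical argument, is that $T$ is an arbitrary rooted topology on at least three taxa: no closed-form split probabilities are available, so the correspondence must be constructed uniformly in $T$ and in all edge-length parameters, while simultaneously tracking the four-term bookkeeping of the two $2$-clade and two $3$-clade splits.
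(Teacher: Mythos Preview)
Your handling of the equality case via Theorem \ref{thm:cladesplitinv} with $\mathcal A=\{a,b_1,b_2\}$, $\mathcal C=\{c\}$ is exactly what the paper does. Your structural observations---that $f(G)=0$ unless $C$ avoids coalescing with other $T$-lineages, and that the four indicator events are pairwise incompatible across the sign---are correct and useful.

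The genuine gap is the strict-inequality step. You describe it as ``a probability-preserving correspondence between histories contributing positively and negatively \dots\ under which the positive ones dominate,'' but you do not construct such a correspondence, and it is not clear one exists. Your source of strictness (events where $A$ and $C$ coalesce below the root) only shows the positive side is nonzero; it does not by itself control the negative side. Also, the two trees $((T,a),(b_1,b_2))$ and $((T,(b_1,b_2)),a)$ are not mirror images of each other---in one a single lineage joins $T$ at $v$ and a cherry sits at the root, in the other a cherry joins $T$ at $v$ and a single lineage sits at the root---so the second case is not obtained from the first by a symmetry and in the paper requires its own argument.

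What the paper actually does is different and more concrete. It first conditions on whether the $b_1,b_2$ lineages have already coalesced (if so, the four-term expression collapses to the two-term one of Proposition \ref{prop:acbc}). If not, it conditions on the $c$ lineage being uncoalesced at $v=\MRCA(T,\cdot)$, on the total number $k$ of lineages at $v$, and on the number $\ell$ of coalescent events on the edge above $v$. Because the $k$ lineages at $v$ are then exchangeable, the four conditional split probabilities depend only on $(k,\ell)$ and become values of functions $p(\cdot\mid k,\ell)$ or $r(\cdot\mid k,\ell)$ on a star-plus-outgroup model. The paper sets up explicit one-step recursions for these functions (first-coalescence decompositions), proves simple upper bounds such as $p(x_1y\mid k,\ell)+p(x_1x_2y\mid k,\ell)<\tfrac{1}{k-2}$, and then proves by induction on $(k,\ell)$ that the relevant combination is $\ge 0$, strictly so for $k$ large enough and $\ell\ge 1$. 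Strictness of the unconditional expression then comes from $|T|\ge 3$ and positive edge length above $v$, which make the events with $k\ge 5$ (respectively $k\ge 4$), $\ell\ge 1$ occur with positive probability. This $(k,\ell)$ reduction and the recursive bounding are the missing technical ingredients in your proposal.
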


For case \ref{case:c}, we similarly have the following, also proved in the appendix.

\begin{prop} \label{prop:aabbc} Let $\mathcal X$ be a set of at least
  7 taxa, $a_1,a_2,b_1,b_2\in \mathcal X$,and $T$ any rooted species
  tree topology on $\mathcal X'=\mathcal X\smallsetminus
  \{a_1,a_2,b_1,b_2\}$.  Let $c\in\chi'$. Suppose $\psi$ is one of the
  species trees $((T,(a_1,a_2)),(b_1,b_2))$,
  $((T,(b_1,b_2)),(a_1,a_2))$, or $(T,((a_1,a_2),(b_1,b_2)))$, and
  $\sigma=(\psi,\lambda)$ has positive length edges incident to the
  root.  Then,
\begin{multline*}
\PP(Sp(a_1c))+\PP(Sp(a_1a_2c))+\PP(Sp(a_1b_2c))+\PP(Sp(a_1a_2b_2c))\\
-\PP(Sp(b_1c))- \PP(Sp(b_1a_2 c))- \PP(Sp(b_1 b_2 c))- \PP(Sp( b_1a_2 b_2 c))\\
\begin{cases} 
> 0 \text{ if and only if } \psi  =\big((T,(a_1,a_2)),\; (b_1,b_2) \big)\\
=0\text{ if and only if } \psi  = \big(T,\; ((a_1,a_2),(b_1,b_2)) \big)\\
<0 \text{ if and only if } \psi = \big((T,(b_1,b_2)),\; (a_1,a_2) \big)
\end{cases} 
\end{multline*}
\end{prop}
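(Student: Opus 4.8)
The plan is to recognize the displayed expression as a clade-induced split invariant, dispose of the $=0$ case by quoting Theorem~\ref{thm:cladesplitinv}, reduce the two inequalities to one by symmetry, and finally establish the surviving strict inequality by a coalescent conditioning argument. Taking $\mathcal A=\{a_1,a_2,b_1,b_2\}$, $a=a_1$, $b=b_1$, $\mathcal A'=\{a_2,b_2\}$, and $\mathcal C=\{c\}$, the left-hand side of the proposition is precisely the invariant of equation~\eqref{eq:cladesplitinv}: the four positive terms are the $\PP_\sigma(Sp(\mathcal S\cup\{a_1\}\cup\mathcal C))$ as $\mathcal S$ ranges over the four subsets of $\{a_2,b_2\}$, and the four negative terms are the corresponding sums with $a_1$ replaced by $b_1$. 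Since $\{a_1,a_2,b_1,b_2\}$ is a clade on $\psi=(T,((a_1,a_2),(b_1,b_2)))$, Theorem~\ref{thm:cladesplitinv} shows the expression vanishes there, giving the $=0$ case at once.

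Next I would exploit a symmetry to collapse the two inequalities into a single one. Let $\phi$ be the relabeling swapping $a_1\leftrightarrow b_1$ and $a_2\leftrightarrow b_2$ while fixing $T$ and $c$. Then $\phi$ carries $((T,(a_1,a_2)),(b_1,b_2))$ to $((T,(b_1,b_2)),(a_1,a_2))$ and fixes $(T,((a_1,a_2),(b_1,b_2)))$. Because relabeling a tree together with its splits leaves split probabilities unchanged, and because $\phi$ interchanges the four positive and the four negative terms, the value of the expression on $((T,(b_1,b_2)),(a_1,a_2))$ equals the negative of its value on $((T,(a_1,a_2)),(b_1,b_2))$. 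Hence it suffices to prove strict positivity on the single tree $\psi=((T,(a_1,a_2)),(b_1,b_2))$; the $<0$ case then follows automatically.

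For that tree I would argue by conditioning on coalescent events, in the spirit of the proof of Proposition~\ref{prop:special5}. The key structural feature is that in $\psi$ the lineages ancestral to $\{a_1,a_2\}$ pass through the population on the edge below the root, namely the edge above $\MRCA(T\cup\{a_1,a_2\})$, which is incident to the root and therefore has positive length; there they may coalesce with the lineage ancestral to $c$, whereas the lineages ancestral to $\{b_1,b_2\}$ first meet any other lineage only at or above the root. Conditioning on the coalescent history inside $T$ and inside the two cherries, and then on the behavior of the relevant lineages in the populations at and below the root, I would partition the sample space into events. In every event in which the $a$- and $b$-lineages remain exchangeable, the conditional contribution to the expression is $0$ by symmetry. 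In the remaining events, precisely those in which $c$'s lineage coalesces with an $a$-lineage strictly below the root, an option unavailable to the $b$-lineages, the split probabilities grouping $c$ with the $a$-taxa strictly dominate those grouping $c$ with the $b$-taxa, so the conditional contribution is strictly positive. Summing the nonnegative contributions weighted by the probabilities of the conditioning events yields strict positivity, the strictness coming from the positive length of the below-root edge, which makes the symmetry-breaking events carry positive probability.

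I expect the main obstacle to be this strict positivity in the symmetry-breaking events for an \emph{arbitrary} subtree $T$, with $c$ sitting anywhere inside it. Since $T$ is unconstrained one cannot compute the relevant conditional split probabilities explicitly; instead I would set up a bijection between equiprobable coalescent realizations, as in the proof of Theorem~\ref{thm:cladesplitinv}, pairing each realization that contributes a $b$-grouping of $c$ with one contributing the corresponding $a$-grouping, and then exhibit additional, unpaired realizations, those using a below-root coalescence of $c$ with an $a$-lineage, that feed only the $a$-side. Carefully bookkeeping these realizations across all the splits $Sp(\mathcal S\cup\{a_1\}\cup\{c\})$, and tracking how membership of $a_2$ and $b_2$ shifts when lineages are exchanged, is the laborious part, which is why the full argument belongs in the appendix.
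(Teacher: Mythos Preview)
Your high-level outline—using Theorem~\ref{thm:cladesplitinv} for the $=0$ case, the $a\leftrightarrow b$ relabeling to reduce the two inequalities to one, and then proving strict positivity on $((T,(a_1,a_2)),(b_1,b_2))$ by a coalescent conditioning argument—matches the paper's. The divergence, and the gap, lies entirely in how that strict positivity is established.

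Your dichotomy ``the $a$- and $b$-lineages remain exchangeable'' versus ``$c$'s lineage coalesces with an $a$-lineage strictly below the root'' is not exhaustive, and the missing cases are not all nonnegative. Consider an event in which, on the edge between $v=\MRCA(T\cup\{a_1,a_2\})$ and the root, the lineage of $a_1$ coalesces with that of some $d\in T$, $d\ne c$, while $c$, $a_2$, $b_1$, $b_2$ remain singletons. Exchangeability of the $a$- and $b$-lineages is certainly broken, yet $c$ has not joined any $a$-lineage. Conditioned on this event all four positive terms vanish (each split $Sp(\mathcal S\cup\{a_1,c\})$ would force $a_1$ apart from $d$), while $Sp(b_1c)$ and the other negative-side splits retain positive probability; the conditional contribution is therefore strictly \emph{negative}. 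The bijection you sketch, modeled on the lineage swap in the proof of Theorem~\ref{thm:cladesplitinv}, does not rescue this: swapping the agglomerated $\{a_1,d\}$-lineage with the $b_1$-lineage at the root carries the displayed $b$-splits to splits such as $Sp(a_1dc)$ that do not appear in the expression, and there is no node on this tree at which $a_1$ and $b_1$ are individually exchangeable.

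The paper circumvents this by conditioning much more coarsely, so that the problematic events are averaged with favorable ones. It first conditions on whether each cherry $\{a_1,a_2\}$ and $\{b_1,b_2\}$ has already coalesced in its own pendant population, reducing those subcases directly to the previously proved Propositions~\ref{prop:acbc} and~\ref{prop:abbc}. In the remaining case it conditions only on the number $k$ of lineages at $v$ (with $c,a_1,a_2$ distinct) and the number $\ell$ of coalescences on the edge above $v$, and expresses the conditional value in terms of auxiliary probabilities $r(\cdot\mid k,\ell)$ on a star tree with two outgroups. The sign of the resulting quantity $S(k,\ell)$ is then established by a recursion in $(k,\ell)$ together with explicit upper bounds on certain sums of the $r$'s, not by any realization-level pairing. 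The averaging is essential: the negative contribution from events like the one above is outweighed, for each fixed $(k,\ell)$ with $k\ge 5$ and $\ell\ge 1$, by positive contributions from other coalescence patterns, and it is precisely this net inequality that the recursive bounds deliver.
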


We summarize these results with the following.
\begin{thm}\label{thm:rootID}
  For any species tree on 5 or more taxa with generic edge lengths,
  the rooted species tree topology is identifiable from split
  probabilities by testing linear equalities and inequalities, with
  the possible exception of case \ref{case:e} of Theorem
  \ref{thm:idroot}, the 6-taxon rooted trees with three 2-clades.
\end{thm}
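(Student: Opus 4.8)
The plan is to prove Theorem \ref{thm:rootID} as a synthesis of the preceding results: the identifiability from clade-induced split invariants furnished by Corollary \ref{thm:idroot}, sharpened on its ambiguous cases by the sign criteria of Propositions \ref{prop:special5}, \ref{prop:acbc}, \ref{prop:abbc}, and \ref{prop:aabbc}. First I would invoke Corollary \ref{thm:idroot}: for generic edge lengths on any species tree with at least 5 taxa, evaluating the clade-induced split invariants of equation \eqref{eq:cladesplitinv} pins down the rooted topology except that five families, cases \ref{case:a}--\ref{case:e}, survive as collections of two or three rooted trees displaying exactly the same clades and cherries, and so cannot be separated by equalities alone. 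In each surviving case the residual ambiguity is purely the branching order of two or three already-identified groups near the root, so it remains only to decide that order.

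Next I would dispatch cases \ref{case:a}--\ref{case:d} one at a time, each by the matching proposition. For case \ref{case:a} the two candidates $((T,a),b)$ and $((T,b),a)$ are separated by the sign of $\PP(Sp(ac))-\PP(Sp(bc))$ via Proposition \ref{prop:acbc}. Here I must record that the third tree $(T,(a,b))$ appearing in that proposition is already excluded in case \ref{case:a}: were $ab$ a clade, then $ab$ and its complement $T$ --- detectable as a clade by Corollary \ref{cor:genid}, since its complement $ab$ is assumed not to be a 2-clade --- would partition $\mathcal X$, and Corollary \ref{thm:idroot} would already have returned a unique tree. Hence the expression is strictly nonzero and its sign names the tree. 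Cases \ref{case:b} and \ref{case:c} go identically, using the linear expressions of Propositions \ref{prop:abbc} and \ref{prop:aabbc}, with the clade (equality) alternative again pre-excluded so that the strict inequalities apply. For case \ref{case:d} the three 5-taxon trees are separated by the sign of expression \eqref{eq:5inv} through Proposition \ref{prop:special5}, which vanishes on $(((a,b),(d,e)),c)$, is positive on $(((a,b),c),(d,e))$, and negative on $((a,b),(c,(d,e)))$; this trichotomy distinguishes all three.

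A small amount of bookkeeping completes the argument. The sign criteria of the four propositions are exact, holding for every choice of finite positive edge lengths with positive edges incident to the root, so they introduce no further exceptional set; all the genericity is concentrated in the clade-detection step, whose exceptional locus is, by Corollary \ref{thm:meas0}, a finite union over the tested subsets and invariants of measure-zero sets, hence itself measure zero. Off this single measure-zero set the procedure therefore determines the rooted topology in every case except \ref{case:e}.

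The one step I expect to be a genuine limitation rather than routine is case \ref{case:e}, the 6-taxon trees with three 2-clades $ab$, $cd$, $ef$: the three rootings share all detectable clades and cherries, and --- unlike cases \ref{case:a}--\ref{case:d} --- no analogue of the linear sign criterion has been established that separates them. I would therefore leave case \ref{case:e} as the stated exception, which is exactly the content of the theorem. The remainder of the proof is purely the assembly described above, so the write-up is short; the hard work lives in the propositions invoked, three of whose proofs are deferred to the appendix.
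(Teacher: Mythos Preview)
Your synthesis is correct and matches the paper's approach exactly: the paper presents Theorem \ref{thm:rootID} with no separate proof, simply writing ``We summarize these results with the following,'' so the argument is precisely the assembly of Corollary \ref{thm:idroot} with Propositions \ref{prop:special5}, \ref{prop:acbc}, \ref{prop:abbc}, and \ref{prop:aabbc} that you outline.

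One small correction to your bookkeeping in case \ref{case:a}: your argument that $(T,(a,b))$ is already excluded has a circularity. You invoke Corollary \ref{cor:genid} to detect the leaf set of $T$ as a clade ``since its complement $ab$ is assumed not to be a 2-clade,'' but under the hypothesis that the tree is $(T,(a,b))$, the set $ab$ \emph{is} a 2-clade, so that clause of Corollary \ref{cor:genid} does not apply. The correct (and simpler) reason is that $\{a,b\}$ itself excludes at least three taxa (since $T$ has $\ge 3$ leaves), so it was tested directly by the clade-induced invariants and found not to be a clade; this is exactly how the proof of Corollary \ref{thm:idroot} arrives at case \ref{case:a}. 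The analogous direct exclusions work for cases \ref{case:b} and \ref{case:c}. With that fix, your write-up is complete.
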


Note that we do not claim that there do not exist linear inequalities
that could be used to identify the root in case \ref{case:e}, only
that we have not found any among the candidates we considered for this
purpose.  Moreover, non-linear split invariants for those trees might
be useful for root identification, but they are of higher degree than
we were able to compute and remain unknown.  While the practical
import of this special case is small, understanding it better is
desirable nonetheless.

\section{Acknowledgements}  
This work was begun while ESA and JAR were Short-term Visitors and JHD
was a Sabbatical Fellow at the National Institute for Mathematical and
Biological Synthesis, an institute sponsored by the National Science
Foundation, the U.S. Department of Homeland Security, and the
U.S. Department of Agriculture through NSF Award \#EF-0832858, with
additional support from the University of Tennessee, Knoxville.  It
was further supported by the National Institutes of Health grant R01
GM117590, awarded under the Joint DMS/NIGMS Initiative to Support
Research at the Interface of the Biological and Mathematical Sciences.

\appendix
\section{Greedy split consensus on 5-taxon trees: proofs}\label{app:greedy}

Here we prove Propositions \ref{prop:greedynoncat} and
\ref{prop:greedycat} from Section \ref{sec:basic}.

With $\mathcal X=\{a,b,c,d,e\}$, there are 10 non-trivial splits, each
with blocks of size 2 and 3. We use the enumeration of splits and
their probabilities given in Example \ref{ex:5splits}. Computations,
assisted by the software COAL \cite{DegnanSalter2005}, produce the
formulas in Table \ref{tab:splitprobs} for these split probabilities
on the 3 species tree shapes, $\sigma_{bal}$, $\sigma_{pc}$, and
$\sigma_{cat}$.

\begin{table}$$\begin{array}{|c|c|c|}
\hline
& \sigma_{bal}&\sigma_{pc}\\
\hline
s_1& 1-\frac{2}{15} XY^3Z-\frac{2}{3}X &  1-\frac{1}{45}XYZ^6-\frac{1}{9}XY-\frac{2}{3}X \\
s_2,s_5& -\frac{2}{15} XY^3Z+\frac{1}{3}X & \frac{1}{30}XYZ^6-\frac{1}{6}XY+\frac{1}{3}X\\
s_3,s_4,s_6,s_7 &\frac{1}{30}XY^3Z+\frac{1}{6}XYZ & -\frac{1}{45}XYZ^6+\frac{2}{9}XY\\
s_8,s_9&\frac{1}{30}XY^3Z-\frac{1}{6}XYZ+\frac{1}{3}YZ& \frac{1}{30}XYZ^6-\frac{1}{6}XY+\frac{1}{3}Y\\
s_{10}&1+\frac{1}{5}XY^3Z-\frac{1}{3}XYZ-\frac{2}{3}YZ& 1-\frac{1}{45}XYZ^6-\frac{1}{9}XY-\frac{2}{3}Y\\
\hline
\end{array}
$$

$$\begin{array}{|c|c|}
\hline
&\sigma_{cat}\\
\hline
s_1 & 1-\frac{1}{45}XY^3Z^6-\frac{1}{9}XY^3-\frac{2}{3}X \\
s_2, s_5 & -\frac{1}{45}XY^3Z^6-\frac{1}{9}XY^3+\frac{1}{3}X \\
s_3, s_6 & -\frac{1}{45}XY^3Z^6+\frac{1}{18}XY^3+\frac{1}{6}XY \\
s_4, s_7 & \frac{1}{30}XY^3Z^6+\frac{1}{6}XY\\
s_8 & -\frac{1}{45}XY^3Z^6+\frac{1}{18}XY^3-\frac{1}{6}XY+\frac{1}{3}Y\\
s_9 & \frac{1}{30}XY^3Z^6-\frac{1}{6}XY+\frac{1}{3}Y\\
s_{10} & 1+\frac{1}{30}XY^3Z^6+\frac{1}{6}XY^3-\frac{1}{3}XY-\frac{2}{3}Y\\
\hline
\end{array}
$$
\caption{Split probabilities for gene trees arising on the 5-taxon species trees under the multispecies coalescent model.}\label{tab:splitprobs}
\end{table}

\begin{proof}[Proof of Proposition \ref{prop:greedynoncat}]
Given the equalities of split probabilities in Table \ref{tab:splitprobs} we need only show
that $s_1, s_{10}  \ge s_2, s_3, s_8$. Note that positive branch lengths imply $0<X,Y,Z<1.$

\smallskip

\noindent Then, for $\sigma_{bal}$, one finds
\begin{align*}s_1 - s_2 &= 1-X>0,\\
s_1 - s_3 &= 1 - \frac{1}{6} XY^3Z -\frac{2}{3}X - \frac{1}{6} XYZ>1-\frac 16-\frac 13-\frac 16=0,\\
s_1 - s_8 &= 1 - \frac{1}{6} XY^3Z -\frac{2}{3} X + \frac{1}{6} XYZ - \frac{1}{3} YZ \\
&= 1 + \frac{1}{6} XYZ(1-Y^2)  - \frac{2}{3} X  - \frac{1}{3} YZ\\
&>1+0-\frac23 -\frac 13=0,\\
s_{10} - s_2 &= 1 + \frac{1}{3} XY^3 Z -\frac{1}{3} XYZ - \frac{1}{3}X - \frac{2}{3} YZ \\
&= 1 + \frac {Z}3 \left (XY^3 - XY - 2Y\right) - \frac{1}{3}X>1-\frac 23-\frac 13=0\\ 
&\text{ since $XY^3 -XY - 2Y$ has minimum $-2$ on the unit square,}\\
s_{10} - s_3 &= 1 + \frac{1}{6} XY^3 Z - \frac{1}{2} XYZ - \frac{2}{3} YZ\\
&= 1 + \frac{1}{6} XZ (Y^3 - 3Y) - \frac{2}{3} YZ>1 + \frac{1}{6}(-2) - \frac{2}{3} =0\\
&\text{ since the minimum of $Y^3 - 3Y$ is $-2$ on $[0,1]$,}\\
s_{10} - s_8 &= 1 + \frac{1}{6} XY^3Z- \frac{1}{6} XYZ - YZ\\
&= 1 + \frac Z6 ( XY^3 - XY - 6Y)>1+\frac 16(-6)=0,\\
&\text{since $ XY^3 - XY - 6Y$
has minimum $-6$ on the unit square when $Y=1$.}
\end{align*}

\noindent For $\sigma_{ps}$,
\begin{align*}
s_1 - s_2 &= 1 +\frac{1}{18}XY -\frac{1}{18}XYZ^6 -X\\
&= 1 + \frac{1}{18}XY(1- Z^6) -X >1+0-1=0,\\
s_1 - s_3 &= 1 - \frac{1}{3}XY- \frac{2}{3}X >1-\frac 13-\frac23=0,\\
s_1 - s_8 &= 1 +\frac{1}{18}XY - \frac{1}{18}XYZ^6 - \frac{2}{3}X - \frac{1}{3}Y \\
&= 1 + \frac{1}{18}XY(1 - Z^6) - \frac{2}{3}X - \frac{1}{3}Y >1+0-\frac23 -\frac 13= 0,\\
s_{10} - s_2 &= 1 + \frac{1}{18}XY - \frac{1}{18}XYZ^6 - \frac{1}{3}X - \frac{2}{3}Y\\
&= 1 + \frac{1}{18}XY(1 - Z^6) - \frac{1}{3}X - \frac{2}{3}Y >1+0-\frac 13-\frac 23= 0,\\ 
s_{10} - s_3 &= 1 - \frac{1}{3} XY- \frac{2}{3}Y >1-\frac 13- \frac 23= 0,\\ 
s_{10} - s_8 &= 1 + \frac{1}{18}XY - \frac{1}{18} XYZ - Y\\ 
&= 1 + \frac{1}{18}XY(1 - Z) - Y >1+0-1=0.
\end{align*}
\end{proof}

\begin{proof}[Proof of Proposition \ref{prop:greedycat}]
We begin by showing that $s_1> s_i$ for $i = 2, \dots, 9$, using $0<X,Y,Z<1$.
One need only check that
\begin{align*}
s_1 - s_2 &= 1 - X >0,\\
s_1 - s_3 &= 1-\frac{1}{6}XY^3-\frac{1}{6}XY-\frac{2}{3}X > 0,\\
s_1 - s_4 &= 1-\frac{1}{18}XY^3 Z^6 - \frac{1}{9}XY^3 -\frac{1}{6}XY-\frac{2}{3}X >0,\\
s_1 - s_8 &= 1 -\frac{1}{6}XY^3+\frac{1}{6}XY-\frac{2}{3}X-\frac{1}{3}Y\\
&= 1 + \frac{1}{6}XY(1-Y^2)-\frac{2}{3}X-\frac{1}{3}Y  > 0,\\
s_1 - s_9 &= 1 -\frac{1}{18}XY^3 Z^6-\frac{1}{9}XY^3+\frac{1}{6}XY-\frac{2}{3}X-\frac{1}{3}Y \\
&= 1 +\frac{1}{18}XY(3-Y^2 Z^6-2Y^2)-\frac{2}{3}X-\frac{1}{3}Y>0\\
&\text{\ \ \ since $3-Y^2 Z^6-2Y^2 > 0$.}
\end{align*}

Suppose now that $s_{10} \ge s_1$.  Then since $s_1$ is larger than all the remaining 
split probabilities by the above calculations, the true non-trivial splits on the species tree have the highest probability, and greedy consensus for
gene tree splits  is consistent.

Now assume instead that $s_1>s_{10}$, so $s_1$ is the strict maximum of the split probabilities.  
Under the greedy consensus
algorithm, splits incompatible with $Sp(ab)$ are discarded and only the splits $s_8$, 
$s_9$, and $s_{10}$ remain as candidate splits for acceptance by the algorithm.

Noting that 
\begin{align*}
s_{10} - s_9 &= 1 + \frac{1}{6}XY^3 - \frac{1}{6}XY - Y\\
&= (1-Y)\left(1-\frac{1}{6} XY(1+Y) \right),
\end{align*}
and that $(1-\frac{1}{6} XY(1+Y) \big) > \frac{2}{3}$,
it follows that $s_{10} > s_9$.

Consider now 
$$s_{10} - s_8 = 1 + \frac{1}{18}XY^3Z^6 + \frac{1}{9}XY^3 - \frac{1}{6}XY - Y = F(X,Y,Z).$$
If $F(X,Y,Z)>0$, then greedy consensus will return the correct species tree.
If $F(X,Y,Z)<0$, it will return the tree $((a,b),e,(c,d))$.
\end{proof}

\section{Non-linear split invariants for 5-taxon trees}\label{app:grobner}

While non-linear split invariants exist for species trees with 5 or
more taxa, using {\tt Singular} we were only able to compute them for
5-taxon trees. We record results here, using the enumeration given in
Example \ref{ex:5splits}.

For the caterpillar species tree $((((a,b),c),d),e)$, the ideal of
invariants for non-trivial split probabilities is generated by the
trivial invariant and the following eight polynomials:

$$s_2-s_5,\qquad \qquad
s_3-s_6,$$
$$s_4-s_7,\qquad\qquad
s_6-s_7-s_8+s_9,$$
\begin{multline*}s_1^2+s_1s_5-2s_5^2+14s_1s_7+4s_5s_7+24s_7^2+9s_1s_8-9s_5s_8+18s_7s_8-5s_1s_9-13s_5s_9-30s_7s_9\\-18s_8s_9+6s_9^2-s_1s_{10}+s_5s_{10}+6s_7s_{10}-6s_9s_{10},
\end{multline*}
\begin{multline*}3780s_5^2s_7-18163s_1s_7^2-4046s_5s_7^2-36060s_7^3-1512s_1s_5s_8+1512s_5^2s_8-27516s_1s_7s_8\\
+9444s_5s_7s_8-54363s_7^2s_8-10296s_1s_8^2+10296s_5s_8^2-21132s_7s_8^2-1008s_1s_5s_9-2772s_5^2s_9\\
+14362s_1s_7s_9+20768s_5s_7s_9+65529s_7^2s_9+10356s_1s_8s_9+7716s_5s_8s_9+75246s_7s_8s_9\\
+21132s_8^2s_9-3063s_1s_9^2-9858s_5s_9^2-35226s_7s_9^2-20883s_8s_9^2+5757s_9^3+3620s_1s_7s_{10}\\
-1340s_5s_7s_{10}-8397s_7^2s_{10}+2424s_1s_8s_{10}-2424s_5s_8s_{10}-6744s_7s_8s_{10}-2004s_1s_9s_{10}\\
-276s_5s_9s_{10}+12978s_7s_9s_{10}+6744s_8s_9s_{10}-4581s_9^2s_{10}+420s_7s_{10}^2-420s_9s_{10}^2,
\end{multline*}
\begin{multline*}63s_1s_5s_7+126s_5^2s_7-506s_1s_7^2-28s_5s_7^2-984s_7^3-771s_1s_7s_8+321s_5s_7s_8-1518s_7^2s_8\\
-288s_1s_8^2+288s_5s_8^2-603s_7s_8^2-63s_1s_5s_9-126s_5^2s_9+401s_1s_7s_9+505s_5s_7s_9+1818s_7^2s_9\\
+291s_1s_8s_9+159s_5s_8s_9+2118s_7s_8s_9+603s_8^2s_9-87s_1s_9^2-285s_5s_9^2-999s_7s_9^2-600s_8s_9^2\\
+165s_9^3+139s_1s_7s_{10}-25s_5s_7s_{10}-234s_7^2s_{10}+96s_1s_8s_{10}-96s_5s_8s_{10}-186s_7s_8s_{10}\\
-75s_1s_9s_{10}-39s_5s_9s_{10}+378s_7s_9s_{10}+186s_8s_9s_{10}-144s_9^2s_{10}+21s_7s_{10}^2-21s_9s_{10}^2,
\end{multline*}
\begin{multline*}
19845s_1s_5^2+39690s_5^3+209186s_1s_7^2-49028s_5s_7^2+467400s_7^3+16254s_1s_5s_8+20601s_5^2s_8\\
+310092s_1s_7s_8-257748s_5s_7s_8+670146s_7^2s_8+112797s_1s_8^2-183672s_5s_8^2+212904s_7s_8^2\\
-25515s_8^3+25326s_1s_5s_9+102249s_5^2s_9-158744s_1s_7s_9+20084s_5s_7s_9-720918s_7^2s_9\\
-112962s_1s_8s_9+100308s_5s_8s_9-785172s_7s_8s_9-198729s_8^2s_9+33891s_1s_9^2+55176s_5s_9^2\\
+381852s_7s_9^2+219921s_8s_9^2-63129s_9^3+11970s_1s_5s_{10}+4095s_5^2s_{10}-52900s_1s_7s_{10}\\
-93860s_5s_7s_{10}-127746s_7^2s_{10}-35898s_1s_8s_{10}-49152s_5s_8s_{10}-236352s_7s_8s_{10}\\
-112455s_8^2s_{10}+32538s_1s_9s_{10}+89652s_5s_9s_{10}+62244s_7s_9s_{10}+71922s_8s_9s_{10}\\
+13527s_9^2s_{10}+2205s_1s_{10}^2-7560s_5s_{10}^2-31080s_7s_{10}^2-21105s_8s_{10}^2+20055s_9s_{10}^2\\
-2205s_{10}^3
\end{multline*}
The four linear invariants here are all given by theorems in the text,
but we have no theoretical explanation for the form of the quadratic
and 3 cubics.

\medskip

For the species tree $(((a,b),c),(d,e))$, the ideal of invariants for non-trivial split probabilities is generated by the trivial invariant and the following six polynomials:
$$s_2-s_5,\qquad\qquad s_3-s_6,\qquad\qquad s_3-s_4$$
$$s_6-s_7,\qquad\qquad s_8-s_9$$
\begin{multline*}
s_1^2+s_1s_5-2s_5^2+14s_1s_7+4s_5s_7+24s_7^2+4s_1s_9-22s_5s_9-12s_7s_9-12s_9^2-s_1s_{10}+s_5s_{10}\\
+6s_7s_{10}-6s_9s_{10}
\end{multline*}
The five linear polynomials all arise from clade-induced constructions
in the text, but the quadratic has not been explained.

\medskip

For the species tree $(((a,b),(d,e)),c)$, the ideal of invariants for
non-trivial split probabilities is generated by the trivial invariant
and the following six polynomials:
$$s_2-s_5,\qquad\qquad
s_3-s_6,\qquad\qquad
s_4-s_7$$
$$s_6-s_7,\qquad\qquad
s_8-s_9,\qquad\qquad s_1+2s_5-2s_9-s_{10}$$
Note all are linear, with
the first five given by the general clade-induced construction, and
the last being explained by Proposition \ref{prop:special5}.
 
 \section{Additional Proofs}\label{app:proofs}
 
The proofs we give of Propositions \ref{prop:acbc}, \ref{prop:abbc}, and \ref{prop:aabbc} depend on a careful analysis of probabilities under the coalescent. That of Proposition \ref{prop:acbc} is the simplest, and serves as a model for the others.

\subsection{Proof of Proposition \ref{prop:acbc}} \ 

The proof of \ref{prop:acbc} depends on several lemmas.

We begin with a definition. Consider a non-binary rooted species tree
$((x_1,x_2,\dots x_k)\tc L,y)$ formed by attaching a single outgroup
taxon $y$ to a claw tree with $k$ taxa $x_i$, with edge length $L>0$.
Under the multispecies coalescent model we will be interested in the
case where the gene lineages, one for each $x_i$, have coalesced
$\ell$ times, from $k$ to $k-\ell$ lineages, by the time they reach
the root of the tree, and then further coalescences occur with the $y$
lineage in the root population, until a single tree is formed. For
$\mathcal A\subset\mathcal X= \{x_1,x_2,\dots x_k,y\}$. We denote the
probability that a resulting gene tree displays a split $Sp(\mathcal
A_g)$ as
$$p(\mathcal A\mid k,\ell).$$
Note that this probability does not depend on branch lengths in the
species tree, since $L>0$ and we have conditioned on
$\ell$. Furthermore, since the $x_i$ lineages are exchangeable under
the coalescent model on this tree, $p(\mathcal A\mid k,\ell)$ actually
depends on $\mathcal A$ only through the number of $x_i\in\mathcal A$
and whether $y\in A$, but not on the particular $x_i\in \mathcal A$.

By an $m$-\emph{split}, we mean a split of taxa where one block of the partition has size $m$.
We now give recursions and base cases for 
 the probability of various 
$2$-splits for the above species tree.
\begin{lemma}\label{lem:recur} \ 
\begin{enumerate}
\item\label{case:acbck0} $p(x_1x_2\mid k,0)=p(x_1y\mid k, 0)$ for $k\ge 2$,
\item\label{case:3l} $p(x_1x_2\mid 3,\ell)=p(x_1y\mid 3, \ell)=\frac 13$ for $\ell=0,1,2$,
\item\label{case:bck0} $p(x_1y\mid k, 0)=\frac {1}{\binom{k+1}2} + \frac{\binom{k-1}2}{\binom{k+1}2}p(x_1y\mid k-1, 0)$ for $k\ge3$,
\item \label{case:ackl}$p(x_1x_2\mid k, \ell)=\frac {1}{\binom{k}2} + \frac{\binom{k-2}2}{\binom{k}2}p(x_1x_2\mid k-1, \ell-1)$, for $k\ge 4$, $k> \ell\ge 1$, 
\item\label{case:bckl} $p(x_1y\mid k, \ell)=\frac{\binom{k-1}2}{\binom{k}2} p(x_1y\mid k-1, \ell-1)$ for $k\ge 3$, $k>\ell\ge1$.
\end{enumerate}
\end{lemma}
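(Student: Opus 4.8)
The plan is to prove all five statements by a single conditioning strategy that exploits two standard features of the (multispecies) coalescent: within any one population the lineages present are exchangeable and the next pair to coalesce is uniform among all pairs currently present; and the combinatorial criterion that $Sp(\mathcal A_g)$ appears on the unrooted gene tree exactly when $\mathcal A_g$, or equivalently its complement, is a clade of the rooted gene tree. Throughout, the observation recorded before the lemma — that $p(\mathcal A\mid k,\ell)$ depends on $\mathcal A$ only through the number of $x_i$ it contains and whether $y\in\mathcal A$ — is what lets a reduced configuration be recognized as another member of the same family.

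First I would dispatch the two base identities \eqref{case:acbck0} and \eqref{case:3l}. For \eqref{case:acbck0}, conditioning on $\ell=0$ means no coalescence occurs on the edge of length $L$, so all $k$ lineages for the $x_i$ reach the root population together with the single $y$ lineage, where they undergo an ordinary coalescent on $k+1$ lineages that treats all $k+1$ labels symmetrically; hence $p(\mathcal A\mid k,0)$ depends only on $|\mathcal A|$, giving $p(x_1x_2\mid k,0)=p(x_1y\mid k,0)$. For \eqref{case:3l}, the tree has the four taxa $x_1,x_2,x_3,y$, so any gene tree is a resolved quartet displaying exactly one non-trivial split, and $Sp(x_1x_2)$ and $Sp(x_1y)$ are two of the three possible splits $x_1x_2|x_3y$, $x_1x_3|x_2y$, $x_2x_3|x_1y$; since $x_1,x_2,x_3$ are exchangeable (for every admissible $\ell$, including $\ell=2$, where the induced subtree on the three simply has a uniformly random cherry), the group $S_3$ permutes the three splits transitively, so each has probability $1/3$.

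For the three recursions I would condition on the \emph{first} coalescence in the relevant population — the root population for \eqref{case:bck0}, and the length-$L$ edge population for \eqref{case:ackl} and \eqref{case:bckl} — and sort the equally likely pairs into three types. A pair whose merger immediately forces the target block to become a clade contributes its probability times $1$; a pair lying entirely on the far side of the target split merges two irrelevant lineages, after which the merged lineage is interchangeable with an ordinary singleton and the remainder of the process is distributed exactly as the corresponding process one size smaller, contributing $p(\,\cdot\mid k-1,\cdot\,)$; and a pair straddling the split (one target taxon with one far-side taxon) permanently rules out both the target block and its complement as clades, contributing $0$. Counting then gives each formula: in \eqref{case:bck0} the one good pair $\{x_1,y\}$ yields $1/\binom{k+1}{2}$ and the $\binom{k-1}{2}$ far-side pairs among $x_2,\dots,x_k$ keep $\ell=0$; in \eqref{case:ackl} the good pair $\{x_1,x_2\}$ yields $1/\binom{k}{2}$ and the $\binom{k-2}{2}$ far-side pairs among $x_3,\dots,x_k$ lower the remaining edge-$L$ count to $\ell-1$; and in \eqref{case:bckl} there is \emph{no} good pair, because $y$ is absent from the length-$L$ population, leaving only the $\binom{k-1}{2}$ far-side pairs and hence no constant term. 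Checking that the pair counts in the three types sum to $\binom{k+1}{2}$ or $\binom{k}{2}$ is routine.

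The main obstacle, and the step I would write most carefully, is the reduction in the far-side case: that after two far-side lineages coalesce, the conditional law of the rest of the process (with one fewer lineage, and one fewer remaining edge-$L$ coalescence in cases \eqref{case:ackl} and \eqref{case:bckl}) is genuinely the defining process for $p(\,\cdot\mid k-1,\cdot\,)$. This rests on the Markov property of the coalescent together with the block-size/$y$-membership dependence noted above, which is exactly what guarantees that collapsing a far-side pair returns an instance of the same family. The conceptual payoff is the contrast between \eqref{case:ackl} and \eqref{case:bckl}: the presence versus absence of the constant term is precisely the statement that the cherry $\{x_1,x_2\}$ can form on the length-$L$ edge while the cherry $\{x_1,y\}$ cannot.
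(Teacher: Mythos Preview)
Your proposal is correct and takes essentially the same approach as the paper: exchangeability of lineages for the base identities \eqref{case:acbck0} and \eqref{case:3l}, and conditioning on the first coalescence (classifying the chosen pair as good, far-side, or straddling) for the recursions \eqref{case:bck0}--\eqref{case:bckl}. The paper's own proof is in fact just a sketch --- it spells out \eqref{case:acbck0} and \eqref{case:ackl} and leaves the rest to the reader --- so your write-up is more complete but follows the identical line of reasoning.
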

\begin{proof} These all follow directly from properties of the coalescent model. We give reasoning for several, leaving the rest to the reader.

  For claim \eqref{case:acbck0}, observe no coalescent events occur
  below the root of the tree, so exchangeability of lineages at the
  root implies the statement.

  For claim \eqref{case:ackl}, note that for the split $Sp(X_1X_2)$ to
  form, the first coalescent event must either be between the $x_1$
  and $x_2$ lineages, which occurs with probability $1/\binom k2$, or
  be between $x_i$ lineages with $i\ne 1,2$, which occurs with
  probability $\binom{k-2}2/\binom k2$, with the split forming
  subsequently.
\end{proof}

We next establish some probability bounds.

\begin{lemma} \label{lemESA:bounds}  For $k\ge 4$, $k > \ell \ge 0$,
$p(x_1y \mid k,\ell) < \frac 1k$.
\end{lemma}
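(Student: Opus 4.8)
The plan is to reduce everything to base cases via the recursions of Lemma \ref{lem:recur}, using a short induction on $k$ to dispatch $\ell=0$ and then an induction on $\ell$ for the rest.

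First I would settle the case $\ell=0$. Writing $q_k=p(x_1y\mid k,0)$ and substituting $\binom{k+1}{2}=(k+1)k/2$ and $\binom{k-1}{2}=(k-1)(k-2)/2$, recursion \eqref{case:bck0} reads $q_k=\frac{2}{(k+1)k}+\frac{(k-1)(k-2)}{(k+1)k}\,q_{k-1}$. I claim $q_k\le 1/k$ for all $k\ge 3$, by induction from the base value $q_3=1/3$ furnished by \eqref{case:3l}. Assuming $q_{k-1}\le 1/(k-1)$, the recursion gives $q_k\le \frac{2}{(k+1)k}+\frac{k-2}{(k+1)k}=\frac{1}{k+1}$; since $1/(k+1)<1/k$, this both closes the induction and delivers the strict bound $q_k<1/k$ for every $k\ge 4$.

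Next I would handle $\ell\ge 1$ by induction on $\ell$, with the just-proved $\ell=0$ statement as the base. Using \eqref{case:bckl} together with $\binom{k-1}{2}/\binom{k}{2}=(k-2)/k$, we have $p(x_1y\mid k,\ell)=\frac{k-2}{k}\,p(x_1y\mid k-1,\ell-1)$, where the reduced index still satisfies $k-1>\ell-1\ge 0$. I would split into two subcases. If $k\ge 5$, then $k-1\ge 4$, so either the inductive hypothesis (when $\ell-1\ge 1$) or the $\ell=0$ result (when $\ell-1=0$) gives $p(x_1y\mid k-1,\ell-1)<1/(k-1)$, whence $p(x_1y\mid k,\ell)<\frac{k-2}{k(k-1)}<\frac1k$, the last step because $(k-2)/(k-1)<1$. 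If $k=4$, then $\ell\in\{1,2,3\}$ and one application lands at $p(x_1y\mid 3,\ell-1)=1/3$ by \eqref{case:3l}, so $p(x_1y\mid 4,\ell)=\frac12\cdot\frac13=\frac16<\frac14$.

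The argument is elementary once the recursions are available; I expect the only real care to lie in bookkeeping the ranges of validity so that every reduction terminates at a legitimate base case (either $q_{k'}\le 1/k'$ for $k'\ge 3$, or the value $1/3$ from \eqref{case:3l} once $k$ drops to $3$) without ever invoking the recursion at $k=2$, where the split $x_1y\mid x_2$ degenerates into a trivial split. Choosing to induct on $\ell$, rather than iterating \eqref{case:bckl} into a telescoping product and evaluating it, is what keeps this bookkeeping minimal, since the latter route would force separate treatment of the boundary cases $k-\ell=1$ and $k-\ell=2$.
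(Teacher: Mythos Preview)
Your proof is correct and uses the same ingredients as the paper: the recursions and base cases of Lemma~\ref{lem:recur}, with the $\ell=0$ case handled by induction on $k$ (yielding $q_k\le 1/(k+1)$) exactly as in the paper. The only organizational difference is in handling $\ell\ge 1$: the paper iterates recursion~\eqref{case:bckl} into a telescoping product and splits into cases $k-\ell\in\{1,2,3\}$ versus $k-\ell\ge 4$, whereas you induct on $\ell$ and split into $k=4$ versus $k\ge 5$ --- precisely the alternative you anticipated in your final paragraph, and indeed your packaging is slightly tidier.
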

\begin{proof}   Lemma \ref{lem:recur} \eqref{case:bck0} and \eqref{case:3l} imply
$p(x_1y\mid 4,0)= \frac 15 < \frac 14$.
For $k>4$, $\ell = 0$,  Lemma \ref{lem:recur} \eqref{case:bck0}  and an inductive hypothesis then show
$$p(x_1y\mid k,0)<\frac 1{\binom{k+1}2} +\frac {\binom{k-1}2}{\binom{k+1}2}\frac 1{k-1}= \frac  1{k+1} <\frac 1{k}.$$
For $\ell \ge 1$, first consider the case that $k-\ell=1$, $2$, or $3$. The using 
Lemma  \ref{lem:recur}  \eqref{case:bckl}   repeatedly and 
Lemma  \ref{lem:recur}  \eqref{case:3l} shows
\begin{align*}p(x_1y\mid k, \ell)&=\frac{\binom{k-1}2}{\binom{k}2} \frac{\binom{k-2}2}{\binom{k-1}2}\cdots \frac{\binom{3}2}{\binom{4}2}p(x_1y \mid 3, \ell-k+3)\\
&=\frac {6}{k(k-1)} \cdot \frac 13=\frac 2{k(k-1)}< \frac 1k.
\end{align*}
If instead $k-\ell\ge 4$,  Lemma \ref{lem:recur}  \eqref{case:bckl} and what has already been established imply
\begin{align*}
p(x_1y\mid k, \ell)&=\frac{\binom{k-1}2}{\binom{k}2} \frac{\binom{k-2}2}{\binom{k-1}2}\cdots \frac{\binom{k-\ell}2}{\binom{k-\ell+1}2}p(x_1y \mid k-\ell, 0)\\
&\le\frac {(k-\ell)(k-\ell-1)}{k(k-1)}\frac 1{k-\ell}<\frac 1k.
\end{align*}
\end{proof}

Next, we obtain a key inequality.  
\begin{lemma}\label{lem:ineq} For $k=3$, $\ell=0,1,2$ and for $k > 3$, $\ell=0$,
$$p(x_1x_2\mid k,\ell)-p(x_1y\mid k,\ell)=0.$$
For $k\ge 4$ and $k>\ell\ge1$,
$$p(x_1x_2\mid k,\ell)-p(x_1y\mid k,\ell)>0.$$
\end{lemma}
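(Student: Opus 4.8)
The two equality cases are immediate: for $k=3$ both probabilities equal $\tfrac13$ by Lemma \ref{lem:recur} \eqref{case:3l}, and for $k>3$, $\ell=0$ they are equal by Lemma \ref{lem:recur} \eqref{case:acbck0}, so in each case the difference is $0$. The real work is the strict inequality. Writing $a_{k,\ell}=p(x_1x_2\mid k,\ell)$ and $b_{k,\ell}=p(x_1y\mid k,\ell)$, the plan is to prove $D(k,\ell):=a_{k,\ell}-b_{k,\ell}>0$ for $k\ge4$, $k>\ell\ge1$ by induction on $k$, exploiting that the recursions Lemma \ref{lem:recur} \eqref{case:ackl} and \eqref{case:bckl} both reduce the index $(k,\ell)$ to $(k-1,\ell-1)$. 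The induction bottoms out at either $k=3$ or $\ell=0$, both equality cases, so $D(k-1,\ell-1)\ge0$ is always available as inductive input.

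The first step is to fold the two recursions into a single recursion for the difference. Using $\binom{k-1}2-\binom{k-2}2=k-2$, the recursions for $a_{k,\ell}$ and $b_{k,\ell}$ combine to give
\begin{equation*}
D(k,\ell)=\frac{1-(k-2)\,a_{k-1,\ell-1}}{\binom{k}2}+\frac{\binom{k-1}2}{\binom{k}2}\,D(k-1,\ell-1).
\end{equation*}
The second summand is nonnegative by the inductive hypothesis together with the base equalities, so positivity of $D(k,\ell)$ follows once the numerator $1-(k-2)a_{k-1,\ell-1}$ is shown to be positive. This source term is the only real obstacle: a priori its sign is unclear, and without controlling it the recursion is inconclusive.

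The crux is therefore an auxiliary uniform upper bound, $a_{k,\ell}\le 1/k$ for all $k\ge3$ and $0\le\ell\le k-1$, which I would prove by a separate induction on $k$ mirroring the bound on $b_{k,\ell}$ in Lemma \ref{lemESA:bounds}. The case $k=3$ is the value $\tfrac13$, and the case $\ell=0$ follows from Lemma \ref{lem:recur} \eqref{case:acbck0} together with Lemma \ref{lemESA:bounds}; for $k\ge4$, $\ell\ge1$ the recursion Lemma \ref{lem:recur} \eqref{case:ackl} and the inductive bound $a_{k-1,\ell-1}\le 1/(k-1)$ yield
\begin{equation*}
a_{k,\ell}\le\frac{1}{\binom{k}2}+\frac{\binom{k-2}2}{\binom{k}2}\cdot\frac1{k-1}=\frac{k^2-3k+4}{k(k-1)^2}\le\frac1k,
\end{equation*}
where the last inequality reduces to $k\ge3$. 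Feeding $a_{k-1,\ell-1}\le 1/(k-1)$ back into the numerator above gives $1-(k-2)a_{k-1,\ell-1}\ge 1-\tfrac{k-2}{k-1}=\tfrac1{k-1}>0$, whence $D(k,\ell)\ge \tfrac{1}{(k-1)\binom{k}2}>0$, closing the induction (the base $k=4$ comes out of the same formula, since $D(3,\ell-1)=0$ and $a_{3,\ell-1}=\tfrac13$ give $D(4,\ell)=\tfrac1{18}$). The only genuinely delicate point is discovering and verifying the $1/k$ bound, as it is exactly sharp in the direction needed to force the source term positive; everything else is routine bookkeeping with the recursions.
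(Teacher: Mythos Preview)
Your proof is correct and follows the same inductive skeleton as the paper's: combine the two recursions into a single expression for $D(k,\ell)$ in terms of $D(k-1,\ell-1)$, then bound the leftover term. The difference is in how you regroup. The paper writes
\[
D(k,\ell)=\frac{1}{\binom{k}{2}}\Big(1+\binom{k-2}{2}D(k-1,\ell-1)-(k-2)\,p(x_1y\mid k-1,\ell-1)\Big),
\]
so that the ``source'' term involves $b_{k-1,\ell-1}=p(x_1y\mid k-1,\ell-1)$, which is already bounded by $1/(k-1)$ via Lemma~\ref{lemESA:bounds}. You instead factor out $\binom{k-1}{2}$, leaving the source term $1-(k-2)\,a_{k-1,\ell-1}$; this forces you to prove the parallel bound $p(x_1x_2\mid k,\ell)\le 1/k$, which is not in the paper but which you establish correctly by its own induction. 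Both routes work and give the same base value $D(4,\ell)=\tfrac{1}{18}$; the paper's grouping is slightly more economical since it recycles the existing Lemma~\ref{lemESA:bounds} rather than proving a second bound of the same type.
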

\begin{proof}
For $k=3$, $\ell=0,1,2$ and for $k>3$, $\ell=0$, the claimed equalities follow from 
Lemma \ref{lem:recur} \eqref{case:3l} and \eqref{case:acbck0}, respectively.

\smallskip

For the inequality when $k\ge 4$, $k>\ell\ge 1$, by Lemma \ref{lem:recur}  \eqref{case:ackl} and \eqref{case:bckl}, 
\begin{align}
p(x_1&x_2\mid k,\ell)-p(x_1y\mid k,\ell)\notag \\ &=\frac 1{\binom k2}\Big ( 1+\binom{k-2}2p(x_1x_2\mid k-1,\ell-1)-\binom{k-1}2p(x_1y\mid k-1,\ell-1)\Big )\notag\\
&=\frac 1{\binom k2}\Big ( 1+\binom{k-2}2\big( \, p(x_1x_2\mid k-1,\ell-1)-p(x_1y\mid k-1,\ell-1)\, \big)\label{eq:reduc}\\
&\qquad \qquad \qquad \qquad \qquad \qquad\ \ -(k-2)p(x_1y\mid k-1,\ell-1) \Big ).\notag
\end{align}
 
Using Lemma \ref{lem:recur}  \eqref{case:3l} in equation \eqref{eq:reduc} shows
$$p(x_1x_2\mid 4,\ell)-p(x_1y\mid 4,\ell) = \frac{1}{18} > 0$$
for $\ell=1,2,3$, establishing the $k=4$ case of the inequality.

For $k > 4$, 
by Lemma \ref{lem:recur} \eqref{case:acbck0} 
and Lemma \ref{lemESA:bounds}  
equation \eqref{eq:reduc} yields
$$
p(x_1x_2\mid k,1)-p(x_1y\mid k,1)>\frac 1{\binom k2}\Big ( 1+\binom{k-2}2 0-(k-2)\frac 1{k-1} \Big )>0.
$$
This shows the inequality holds for $\ell=1$, and provides base cases for an inductive proof for $\ell\ge1$.

Finally, equation \eqref{eq:reduc}, an inductive hypothesis, and  Lemma 
\ref{lemESA:bounds} show that for $\ell\ge 2$
$$
p(x_1x_2\mid k,\ell)-p(x_1y\mid k,\ell)>\frac 1{\binom k2}\Big ( 1+\binom{k-2}2 0 -(k-2)\frac 1{k-1} \Big) >0.
$$
\end{proof}

\begin{proof}[Proof of Proposition \ref{prop:acbc}]
  That the equality holds for species tree $(T,(a,b))$ is an instance
  of Theorem \ref{thm:cladesplitinv}.  It is enough to establish the
  inequality for $((T,a),b)$, since that for $((T,b),a)$ will follow
  by interchanging taxon names.

  On the species tree $((T,a),b)$, let $v$ denote the MRCA of $T$ and
  $a$.  Observe that for the splits $Sp(AC)$ or $Sp(BC)$ to form, it
  is necessary that the $c$ lineage not coalesce with any other below
  $v$.  In any such realization of the coalescent process below $v$,
  lineages from taxa on $T$ will have coalesced to $k-1$ lineages by
  $v$, where $k\ge 3$. There the lineage from $a$ enters, and $\ell$
  coalescent events, $k>\ell\ge 0$ occur on the edge immediately
  ancestral to $v$.

  To establish the inequality, we consider it conditioned on a number
  of disjoint and exhaustive events: For each possible $k, \ell$, let
  $\mathcal C = \mathcal C (k, \ell)$ denote the event that $k-1$
  agglomerated lineages from $T$ reach $v$, one of which is the
  lineage from $c$ alone, and that $\ell$ coalescent events occur in the
  population immediately ancestral to $v$.  Fixing $\mathcal C =
  \mathcal C (k, \ell)$, with $y=b$, $x_1=c$, $x_2=a$ we have
\begin{align*} \PP(Sp(ac)\mid \mathcal C) &= p( x_1x_2 \mid k,\ell),\\
\PP(Sp(bc)\mid \mathcal C)&=p(x_1y \mid k, \ell).
\end{align*} 
Lemma \ref{lem:ineq} thus shows $\PP(Sp(ac\mid \mathcal C)-\PP(Sp(bc\mid \mathcal C))$ is positive 
for $k\ge 4$, $k>\ell\ge 1$, and zero for other relevant cases. 
Multiplying by the probabilities of each $\mathcal C = \mathcal C (k, \ell)$ and summing, we obtain the desired unconditioned expression
$\PP(Sp(ac))-\PP(Sp(bc))$.
Because $T$ has at least three taxa, there are some positive summands 
from  $k \ge 4$, $\ell \ge 1$, so the desired inequality holds.
\end{proof}

\subsection{Proof of Proposition \ref{prop:abbc}} \ 

While the proof of Proposition \ref{prop:abbc} follows the same line
of reasoning as that of Proposition \ref{prop:acbc}, there are further
technical details.  We first extend some of the results from the
previous section to splits of size 3. These will be applied in
arguments for the species tree $((T,(b_1,b_2)),a)$.
 
\begin{lemma} \label{lem:recurbba} \  
\begin{enumerate}
\item \label{case:xxy0} $p(\mathcal A\mid k,0)=p(\mathcal B\mid k,0)$ for $|\mathcal A|=|\mathcal B|,$

\item \label{case:xxxk0} $p(x_1x_2x_3\mid k,0)=\frac 3{\binom{k+1}2} p(x_1x_2\mid k-1,0)+\frac{\binom{k-2}2}{\binom{k+1}2}p(x_1x_2x_3\mid k-1, 0)$ for $k\ge4$,

\item  \label{case:xxy3l}$p(x_1x_2x_3\mid 3,\ell)=p(x_1x_2y\mid 3,\ell)=1$, for $\ell=0,1,2$,

\item \label{case:xxx4l} $p(x_1x_2x_3\mid 4,\ell)=\frac 1{2}p(x_1x_2\mid 3,\ell-1)$ for $\ell = 1, 2, 3$, 

\item \label{case:xxxkl} $p(x_1x_2x_3\mid k,\ell)=\frac 3{\binom k2}p(x_1x_2\mid k-1,\ell-1)+\frac{\binom{k-3}2}{\binom k2}p(x_1x_2x_3\mid k-1,\ell-1)$ for $k\ge 5$, $k>\ell\ge 1$, 

\item \label{case:xxykl} $p(x_1x_2y\mid k,\ell)=\frac 1{\binom k2}p(x_1y\mid k-1,\ell-1)+\frac{\binom{k-2}2}{\binom k2}p(x_1x_2y\mid k-1,\ell-1)$ for $k\ge 4$, $k>\ell\ge 1.$ 
\end{enumerate}
\end{lemma}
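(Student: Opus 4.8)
The plan is to derive all six identities from the same two features of the coalescent used in the proof of Lemma~\ref{lem:recur}: the exchangeability of lineages that are simultaneously present in the infinite root population, and a first-step analysis conditioning on the first coalescent event. The key structural fact is that on the tree $((x_1,\dots,x_k)\tc L, y)$ the $\ell$ coalescences counted by the conditioning all occur among the $x_i$ lineages on the edge of length $L$ below the root, while $y$ enters only once the surviving $k-\ell+1$ lineages reach the root. This produces a clean dichotomy: when $\ell=0$ the first coalescence of interest happens at the root among all $k+1$ lineages (including $y$), whereas when $\ell\ge 1$ it happens among the $k$ claw lineages alone. I would organize the whole argument around this split.

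First I would dispose of the base cases that anchor the recursions. Item~\eqref{case:xxy0} is immediate from exchangeability: when $\ell=0$ all $k+1$ lineages reach the root together and coalesce in the infinite root population, where the induced distribution on gene-tree topologies is invariant under permuting lineages; hence a split probability depends only on its block sizes, and in particular does not distinguish whether $y$ lies in a block. Item~\eqref{case:xxy3l} holds because on four taxa a block of size three is the complement of a singleton, i.e.\ a trivial split, which is displayed on every binary gene tree and so has probability $1$ regardless of $\ell$.

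The recursions \eqref{case:xxxk0}, \eqref{case:xxxkl}, and \eqref{case:xxykl} all come from conditioning on the first relevant coalescence and sorting the eligible pairs into three types: those merging two lineages both inside the target block, those merging two lineages both outside it, and those merging one inside with one outside. A cross pair is fatal---once a target lineage fuses with a non-target lineage the target block can no longer appear---so it contributes $0$. A within-target merge shrinks the target block's lineage count by one, reducing a $3$-block probability to a $2$-block probability on one fewer claw lineage with one fewer coalescence; a within-complement merge leaves the target intact and reduces to the same type of probability on one fewer lineage. Counting pairs with binomial coefficients, and checking that the three counts sum to $\binom{k+1}{2}$ for item~\eqref{case:xxxk0} and to $\binom{k}{2}$ for items~\eqref{case:xxxkl} and \eqref{case:xxykl}, gives the stated coefficients. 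The distinction between \eqref{case:xxxkl} and \eqref{case:xxykl} is exactly the placement of $y$: when the target is $x_1x_2x_3$ all three targets are claw lineages, yielding $\binom{3}{2}=3$ within-target pairs; when the target is $x_1x_2y$ only $x_1,x_2$ are available to merge below the root, yielding a single within-target pair, and that merge produces the $2$-block $\{x_1x_2,y\}$, which is why the reduced term is $p(x_1y\mid k-1,\ell-1)$ rather than $p(x_1x_2\mid\cdot)$. Item~\eqref{case:xxx4l} is then just the degenerate $k=4$ instance of \eqref{case:xxxkl}, where there is only one non-target claw lineage, so $\binom{k-3}{2}=\binom{1}{2}=0$ kills the second term and leaves $\frac{1}{2}\,p(x_1x_2\mid 3,\ell-1)$.

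The main obstacle is bookkeeping rather than anything probabilistically deep: correctly recognizing, after a within-target merge, that the merged lineage behaves as a single exchangeable lineage so that a $3$-block genuinely reduces to a $2$-block of the appropriate type, and keeping straight which lineages are present below the root (where $y$ is absent) versus at the root (where it is present). Getting the pair counts and resulting binomial coefficients to close up---and verifying in each case that the three counts sum to the total number of pairs---is the one place an error could slip in, so I would check each coefficient identity explicitly before declaring a recursion complete.
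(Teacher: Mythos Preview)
Your proposal is correct and follows essentially the same approach as the paper: exchangeability at the root for item~\eqref{case:xxy0}, triviality of $3$-splits on four taxa for item~\eqref{case:xxy3l}, and a first-step analysis on the first coalescent event (classifying pairs as within-target, within-complement, or cross) for the recursions. The paper gives explicit detail only for items~\eqref{case:xxy0} and~\eqref{case:xxxk0} and leaves the remaining cases to the reader, so your write-up is in fact more complete than the paper's own proof.
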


\begin{proof}
  For claim \eqref{case:xxy0}, it suffices to note that $k+1$ lineages
  enter the population above the root, with no coalescent events
  having occurred below, so the probabilities of any two $m$-splits
  are the same by exchangeability of lineages under the coalescent
  model.

  For claim \eqref{case:xxxk0}, again $k+1$ lineages enter the root
  population, with no previous coalescence.  For $Sp(X_1X_2X_3)$ to
  form, the first coalescent event above the root must be between a
  pair of lineages chosen from $x_1, x_2, x_3$, or disjoint from them.
  It is between a pair chosen from them with probability
  $\frac{3}{\binom{k+1}{2}}$.  Then, for $Sp(X_1 X_2 X_3)$ to form,
  this pair's lineage must join with the remaining $x_i$ lineage.  By
  claim (1), this has probability $p(x_1 x_2\mid k-1,0)$.  Multiplying
  these probabilities, we obtain the first summand.  The first
  coalescent event not involving any of the $x_1,x_2,x_3$ lineages, and then the
  desired split forming with 1 less lineage present gives the second
  summand.

The remaining verifications are left to the reader.
\end{proof}

\begin{lemma}\label{lem:boundsbba} For $k\ge 4$, $k>\ell\ge 0$,
$$p(x_1y\mid k,\ell)+p(x_1x_2y\mid k,\ell)<\frac 1{k-2}.$$ 
\end{lemma}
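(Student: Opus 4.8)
The plan is to prove a sharper bound on the $3$-split probability by itself and then add the $2$-split bound already available. Concretely, I would first establish
$$p(x_1x_2y\mid k,\ell)<\frac{2}{k(k-2)}\qquad\text{for }k\ge 4,\ k>\ell\ge 0,$$
and then invoke Lemma \ref{lemESA:bounds}, which gives $p(x_1y\mid k,\ell)<1/k$. The target $\tfrac{2}{k(k-2)}$ is dictated by the desired conclusion: since
$$\frac1k+\frac{2}{k(k-2)}=\frac{(k-2)+2}{k(k-2)}=\frac1{k-2},$$
adding the two estimates yields the lemma at once. So everything reduces to the displayed auxiliary bound, which I would prove by induction on $\ell$, with a nested induction on $k$ for the base case.

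For the base case $\ell=0$ I would induct on $k$. The base $k=4$ is the direct evaluation $p(x_1x_2y\mid 4,0)=\tfrac15<\tfrac14$. For the inductive step I would use exchangeability (Lemma \ref{lem:recurbba}\eqref{case:xxy0}) to rewrite $p(x_1x_2y\mid k,0)=p(x_1x_2x_3\mid k,0)$ and apply the recursion of Lemma \ref{lem:recurbba}\eqref{case:xxxk0}. In it, $p(x_1x_2\mid k-1,0)=p(x_1y\mid k-1,0)<\tfrac1{k-1}$ by Lemma \ref{lem:recur}\eqref{case:acbck0} and Lemma \ref{lemESA:bounds}, while $p(x_1x_2x_3\mid k-1,0)<\tfrac{2}{(k-1)(k-3)}$ by the inductive hypothesis; substituting these, the expression telescopes to
$$p(x_1x_2x_3\mid k,0)<\frac{3}{\binom{k+1}{2}}\cdot\frac1{k-1}+\frac{\binom{k-2}{2}}{\binom{k+1}{2}}\cdot\frac{2}{(k-1)(k-3)}=\frac{2}{k(k-1)}<\frac{2}{k(k-2)}.$$

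The inductive step on $\ell\ge 1$ runs along the diagonal $(k,\ell)\mapsto(k-1,\ell-1)$ via the recursion of Lemma \ref{lem:recurbba}\eqref{case:xxykl}. For $k\ge 5$, I bound $p(x_1y\mid k-1,\ell-1)<\tfrac1{k-1}$ by Lemma \ref{lemESA:bounds} and $p(x_1x_2y\mid k-1,\ell-1)<\tfrac{2}{(k-1)(k-3)}$ by the inductive hypothesis; the same cancellation as above again produces $\tfrac{2}{k(k-1)}<\tfrac{2}{k(k-2)}$. The one index not reached this way is $k=4$, where $k-1=3$ lies outside the range of Lemma \ref{lemESA:bounds} and of the hypothesis; there the recursion bottoms out at the exact $3$-taxon base values $p(x_1y\mid 3,\cdot)=\tfrac13$ and $p(x_1x_2y\mid 3,\cdot)=1$ from Lemma \ref{lem:recur}\eqref{case:3l} and Lemma \ref{lem:recurbba}\eqref{case:xxy3l}, giving $p(x_1x_2y\mid 4,\ell)=\tfrac29<\tfrac14$ for $\ell=1,2,3$.

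I expect the main difficulty to be locating the correct auxiliary bound rather than any subtle probabilistic argument: the strengthening $p(x_1x_2y\mid k,\ell)<\tfrac{2}{k(k-2)}$ must be chosen so that it is simultaneously self-reproducing under Lemma \ref{lem:recurbba}\eqref{case:xxykl} (telescoping to $\tfrac{2}{k(k-1)}$) and complementary to the $1/k$ bound of Lemma \ref{lemESA:bounds}. Once this bound is in place, both induction steps close cleanly and the lemma follows by the single addition above; the only genuine bookkeeping is the boundary case $k=4$, where the recursions terminate at the $3$-taxon base values instead of the earlier bounding lemmas.
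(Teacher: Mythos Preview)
Your proof is correct and takes a genuinely different route from the paper's. The paper attacks the sum $p(x_1y\mid k,\ell)+p(x_1x_2y\mid k,\ell)$ directly: it combines the two recursions for these terms into a single recursion for the sum, then inducts (first on $k$ for $\ell=0$, then on $\ell$), invoking Lemma~\ref{lemESA:bounds} to control the stray $p(x_1y\mid k-1,\cdot)$ coefficient that appears after regrouping. You instead decouple the two summands: you strengthen Lemma~\ref{lemESA:bounds} by proving the standalone estimate $p(x_1x_2y\mid k,\ell)<\tfrac{2}{k(k-2)}$, chosen so that it is self-reproducing under the recursion of Lemma~\ref{lem:recurbba}\eqref{case:xxykl} and adds to $1/k$ to give exactly $1/(k-2)$. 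Your approach yields a sharper intermediate result (the bound on $p(x_1x_2y)$ alone) and the induction algebra telescopes more cleanly to $\tfrac{2}{k(k-1)}$; the paper's approach avoids the need to guess the auxiliary target but has slightly messier bookkeeping when the two recursions are merged. Both handle the boundary $k=4$ by direct evaluation at the $3$-taxon base values, and both are of comparable length.
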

\begin{proof} We first show the inequality for $\ell=0$, by induction on $k$.
From Lemmas \ref{lem:recur} and \ref{lem:recurbba},
$p(x_1y\mid 4,0)+p(x_1x_2y\mid 4,0)=\frac 25,$
establishing the base case of $k=4$.  
 
If $k\ge 5$,  an inductive hypothesis, 
Lemma \ref{lem:recur} \eqref{case:bck0}, Lemma \ref{lem:recurbba} \eqref{case:xxy0} and \eqref{case:xxxk0}, and
Lemma  \ref{lemESA:bounds} 
yield
\begin{align*}p(x_1&y\mid k,0)+p(x_1x_2y\mid k,0)\\
&=\frac 1{\binom{k+1}2}\left ( 1+ (k+1)p(x_1y\mid k-1,0) +\binom{k-2}2\left(p(x_1y\mid k-1,0)+p(x_1x_2y\mid k-1,0)\right) \right )\\
&<  \frac 1{\binom{k+1}2}\left (1+\frac {k+1}{k-1}+ \binom{k-2}2\frac 1{k-3} \right )
=\frac{k^2+k+2}{(k+1)k(k-1)}  <\frac 1{k-2}.
\end{align*}

Next observe that for $\ell=1,2,3$, Lemmas \ref{lem:recur} and \ref{lem:recurbba} imply
$$
p(x_1y\mid 4, \ell)+p(x_1x_2y\mid 4,\ell)=\frac 7{18}<\frac 1{4-2}.
$$
With the $k=4$, $\ell=1,2,3$ cases and the $k\ge 4$, $\ell=0$ cases already established,  we now proceed by induction on $\ell$.
For
 $k\ge 5$, $k>\ell\ge 1$ by Lemma \ref{lem:recur} \eqref{case:bckl}, 
 Lemma \ref{lem:recurbba} \eqref{case:xxykl}, Lemma \ref{lemESA:bounds}, and an inductive hypothesis, 
\begin{align*}p(x_1&y\mid k,\ell)+p(x_1x_2y\mid k,\ell)\\
&=\frac {\binom{k-1}2 +1}{\binom k2}p(x_1y\mid k-1,\ell-1) +\frac{\binom{k-2}2}{\binom k2}p(x_1x_2y\mid k-1,\ell-1)\\
&=\frac {k-1}{\binom k2}p(x_1y\mid k-1,\ell-1) +\frac{\binom{k-2}2}{\binom k2}\left (p(x_1y\mid k-1, \ell-1)+p(x_1x_2y\mid k-1,\ell-1)\right )\\
&<\frac {k-1}{\binom k2}\frac 1{k-1} +\frac{\binom{k-2}2}{\binom k2}\frac 1{k-3}=\frac 1{k-1}<\frac 1{k-2}.\end{align*}
\end{proof}

\begin{lemma} \label{lem:ineqbba} Let 
$$P(k,\ell)=p(x_1x_2\mid k,\ell)+p(x_1x_2x_3\mid k,\ell)-p(x_1y\mid k,\ell)-p(x_1x_2y\mid k,\ell).$$
Then for  $k=4$, $\ell=0,1,2,3$, and for $k\ge 5$, $\ell=0$, $P(k,\ell)=0.$
For $k\ge 5$, $k>\ell\ge1$,
$P(k,\ell)>0.$
\end{lemma}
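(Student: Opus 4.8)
The plan is to mirror the structure of the proof of Lemma \ref{lem:ineq}: establish the equalities by direct computation, and the inequality by a recursion-plus-induction argument, but with a more delicate final estimate. First I would dispatch the equalities. For $\ell=0$ and any $k$, no coalescences occur below the root, so exchangeability of the $x_i$ lineages gives $p(x_1x_2\mid k,0)=p(x_1y\mid k,0)$ and $p(x_1x_2x_3\mid k,0)=p(x_1x_2y\mid k,0)$ by Lemma \ref{lem:recur}\eqref{case:acbck0} and Lemma \ref{lem:recurbba}\eqref{case:xxy0}; hence $P(k,0)=0$ termwise. For $k=4$, $\ell=1,2,3$, I would compute the two differences separately: Lemma \ref{lem:ineq} already gives $p(x_1x_2\mid 4,\ell)-p(x_1y\mid 4,\ell)=\tfrac1{18}$, while Lemma \ref{lem:recurbba}\eqref{case:xxx4l},\eqref{case:xxykl}, together with the base values in Lemma \ref{lem:recur}\eqref{case:3l} and Lemma \ref{lem:recurbba}\eqref{case:xxy3l}, give $p(x_1x_2x_3\mid 4,\ell)=\tfrac16$ and $p(x_1x_2y\mid 4,\ell)=\tfrac29$, so the $3$-split difference is $-\tfrac1{18}$. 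The two differences cancel, yielding $P(4,\ell)=0$.

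For the inequality when $k\ge5$, $k>\ell\ge1$, I would substitute the five recursions of Lemmas \ref{lem:recur} and \ref{lem:recurbba} into the definition of $P(k,\ell)$, multiply by $\binom k2$, and collect terms at level $(k-1,\ell-1)$. The coefficient of $p(x_1x_2x_3\mid k-1,\ell-1)$ comes out to be exactly $\binom{k-3}2$, which lets me extract a copy of $\binom{k-3}2\,P(k-1,\ell-1)$ and, after the binomial identities $\binom{k-2}2-\binom{k-3}2=k-3$ and $\binom{k-1}2-\binom{k-3}2=2k-5$, arrive at
\begin{multline*}
\binom k2 P(k,\ell)=1+\binom{k-3}2 P(k-1,\ell-1)+k\,p(x_1x_2\mid k-1,\ell-1)\\
-(2k-4)\,p(x_1y\mid k-1,\ell-1)-(k-3)\,p(x_1x_2y\mid k-1,\ell-1).
\end{multline*}

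The crucial manipulation is then to replace $p(x_1x_2\mid k-1,\ell-1)$ by $p(x_1y\mid k-1,\ell-1)$ plus the nonnegative difference $p(x_1x_2\mid k-1,\ell-1)-p(x_1y\mid k-1,\ell-1)$ supplied by Lemma \ref{lem:ineq}; this shifts the coefficient of $p(x_1y\mid k-1,\ell-1)$ from $-(2k-4)$ to $-(k-4)$. Dropping the two nonnegative contributions (the extracted $P$ term, which is $\ge0$ by the induction hypothesis together with the already-proved equalities, and the Lemma \ref{lem:ineq} term) leaves $1-(k-4)\,p(x_1y)-(k-3)\,p(x_1x_2y)$ at level $(k-1,\ell-1)$, which I would regroup as $1-(k-4)\bigl(p(x_1y)+p(x_1x_2y)\bigr)-p(x_1x_2y)$ and bound below using $p(x_1y)+p(x_1x_2y)<\tfrac1{k-3}$ from Lemma \ref{lem:boundsbba} and $p(x_1x_2y)<\tfrac1{k-3}$ (which follows since $p(x_1y)\ge0$). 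The induction is on $\ell$, with the $\ell=0$ and $k=4$ equalities serving as base cases so that $P(k-1,\ell-1)\ge0$ is always available, and throughout one checks that $k-1\ge4$ keeps Lemmas \ref{lemESA:bounds} and \ref{lem:boundsbba} applicable.

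The main obstacle is that this final estimate is tight, so the naive approach fails: simply discarding the positive term $k\,p(x_1x_2\mid k-1,\ell-1)$ and bounding the two negative terms directly yields only $\binom k2 P(k,\ell)\ge-1$, which is useless. It is only after re-expressing $p(x_1x_2)$ through $p(x_1y)$ and the Lemma \ref{lem:ineq} difference that the arithmetic collapses to $1-\tfrac{(k-4)+1}{k-3}=0$, so that the \emph{strictness} of the two probability bounds is exactly what forces $\binom k2 P(k,\ell)>0$, and hence $P(k,\ell)>0$. Locating the grouping that makes this cancellation visible, rather than grinding through the individual probabilities, is the essential step.
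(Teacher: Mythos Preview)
Your argument is correct and follows the same overall scaffold as the paper's proof: establish the equalities first, then derive a recursion for $P(k,\ell)$ in terms of $P(k-1,\ell-1)$ and lower-level split probabilities, and bound the latter using Lemmas \ref{lem:ineq} and \ref{lem:boundsbba} inside an induction on $\ell$.

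There are two small differences worth noting. For the $k=4$ equalities the paper uses a one-line symmetry argument you may prefer: since $\mathcal X=\{x_1,x_2,x_3,x_4,y\}$, the events $Sp(x_1x_2)$ and $Sp(x_3x_4y)$ coincide, as do $Sp(x_1x_2x_3)$ and $Sp(x_4y)$; exchangeability of the $x_i$ then gives $p(x_1x_2\mid 4,\ell)=p(x_1x_2y\mid 4,\ell)$ and $p(x_1x_2x_3\mid 4,\ell)=p(x_1y\mid 4,\ell)$ directly, so $P(4,\ell)=0$ without any numerical computation. For the inequality, the paper groups the recursion as
\[
\binom k2 P(k,\ell)=1+(k-2)\bigl[p(x_1x_2)-p(x_1y)\bigr]-(k-2)\bigl[p(x_1y)+p(x_1x_2y)\bigr]+2p(x_1x_2)+p(x_1x_2y)+\tbinom{k-3}{2}P(k-1,\ell-1),
\]
which is algebraically identical to yours, and then drops the positive terms and invokes Lemma \ref{lem:boundsbba}. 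As written, the paper inserts the bound $\tfrac{1}{k-2}$, whereas Lemma \ref{lem:boundsbba} at level $k-1$ actually yields only $\tfrac{1}{k-3}$; your regrouping, which leads to $1-(k-4)\bigl[p(x_1y)+p(x_1x_2y)\bigr]-p(x_1x_2y)>1-\tfrac{k-4}{k-3}-\tfrac{1}{k-3}=0$, handles this index shift cleanly and makes the strictness transparent. So your final estimate is not merely an alternative grouping but the one that makes the argument go through with the bound that is actually available.
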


\begin{proof}
Note that for $k=4$, the event $Sp(x_1x_2)$ is the same as $Sp(x_3x_4y)$, and $Sp(x_1x_2x_3)$ is the same as $Sp(x_4y)$, so
using exchangeability of the $x_i$ lineages we have
\begin{align*}
p(x_1x_2\mid 4,\ell)&=p(x_1x_2y\mid 4,\ell),\\
p(x_1x_2x_3\mid k,\ell)&=p(x_1y\mid k,\ell).
\end{align*}
Thus $P(4,\ell)=0$ for $\ell = 0, 1, 2, 3$.  
For $k\ge 5$,  Lemma \ref{lem:recurbba} \eqref{case:xxy0} implies $P(k,0)=0$.

For $k\ge 5$, $\ell \ge 1$, by Lemmas \ref{lem:recur} \eqref{case:ackl}, \eqref{case:bckl} and 
\ref{lem:recurbba} \eqref{case:xxxkl}, \eqref{case:xxykl}  we find
\begin{align*}
P(k,\ell)&=\frac 1{\binom k2}\bigg [ 1+(k-2) \left(p(x_1x_2\mid k-1,\ell-1) -p(x_1y\mid k-1,\ell-1)\right)\\
& -(k-2) \left (p(x_1y\mid k-1,\ell-1)+p(x_1x_2y\mid k-1,\ell-1) \right )\\
&+2p(x_1x_2\mid k-1,\ell-1)
+p(x_1x_2y\mid k-1,\ell-1)+\binom {k-3}2 P(k-1,\ell-1)\bigg ].
\end{align*}
Using Lemmas \ref{lem:ineq} and \ref{lem:boundsbba}, the non-negativity of probabilities, and an inductive hypothesis
that $P(k-1,\ell-1)\ge 0$, it follows that
$$
P(k,\ell)>\frac 1{\binom k2}\Big ( 1+(k-2)\cdot 0 -(k-2) \frac 1{k-2}+2\cdot0
+0+\binom {k-3}2 0 \Big )=0.$$
\end{proof}

\begin{lemma} \label{lem:bba} Consider a species tree with topology
  $((T,(b_1,b_2)),a)$, where $T$ is a subtree on at least three taxa,
  one of which is $c$. Suppose the edge above $(T,(b_1,b_2))$ has
  positive length.  Then under the multispecies coalescent model,
$$\PP(Sp(ac))+\PP(Sp(ab_2c))-\PP(Sp(b_1c))- \PP(Sp(b_1b_2c))<0.$$
\end{lemma}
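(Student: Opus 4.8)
The plan is to mirror the proof of Proposition \ref{prop:acbc}: I would condition the coalescent process on its state at the node $v=\MRCA(T\cup\{b_1,b_2\})$, which is the child of the root other than the outgroup $a$, and reduce each conditional version of the target expression to the quantities controlled by Lemmas \ref{lem:ineq} and \ref{lem:ineqbba}. The first observation is that each of the four splits $Sp(ac)$, $Sp(ab_2c)$, $Sp(b_1c)$, $Sp(b_1b_2c)$ can form only when the lineage from $c$ reaches $v$ without coalescing with any other lineage from $T$ below $v$; indeed, $a$ enters only in the root population, so any clade containing $c$ and one of $a,b_1,b_2$ forces $c$ to travel alone up to $v$. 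Realizations in which $c$ is not alone at $v$ thus contribute $0$ to all four probabilities, and it suffices to work conditionally on $c$ arriving at $v$ as a single lineage.

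The main bookkeeping concerns the cherry $(b_1,b_2)$, so I would split into two exhaustive cases according to whether the $b_1$ and $b_2$ lineages coalesce on the edge below $v$. In the case they do \emph{not}, let $\mathcal C=\mathcal C(k,\ell)$ be the event that $k$ lineages reach $v$ --- namely $c$, $b_1$, $b_2$, and $k-3$ agglomerated lineages from the rest of $T$ --- and that $\ell$ further coalescences occur on the (positive-length) edge above $v$. Placing this in the claw-tree framework with $y=a$, $x_1=c$, $x_2=b_2$, $x_3=b_1$, exchangeability of the lineages above $v$ gives
\begin{align*}
\PP(Sp(ac)\mid\mathcal C)&=p(x_1y\mid k,\ell), & \PP(Sp(ab_2c)\mid\mathcal C)&=p(x_1x_2y\mid k,\ell),\\
\PP(Sp(b_1c)\mid\mathcal C)&=p(x_1x_2\mid k,\ell), & \PP(Sp(b_1b_2c)\mid\mathcal C)&=p(x_1x_2x_3\mid k,\ell),
\end{align*}
so the conditional value of the target expression is exactly $-P(k,\ell)$, which is $\le 0$ by Lemma \ref{lem:ineqbba}. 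In the case the lineages $b_1,b_2$ \emph{do} coalesce below $v$, the splits $Sp(ab_2c)$ and $Sp(b_1c)$ have conditional probability $0$; letting $x_2$ now denote the merged $b_1b_2$ lineage and conditioning on $k'$ lineages at $v$ with $\ell'$ coalescences above, the remaining two terms become $p(x_1y\mid k',\ell')$ and $p(x_1x_2\mid k',\ell')$, so the conditional value is $-\bigl(p(x_1x_2\mid k',\ell')-p(x_1y\mid k',\ell')\bigr)\le 0$ by Lemma \ref{lem:ineq}.

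To finish, I would multiply each conditional expression by the probability of its conditioning event and sum over all cases; since every summand is $\le 0$, the unconditioned expression is $\le 0$. For strictness, note that because $|T|\ge 3$ and the edge above $v$ has positive length, there is positive probability of the first case with $k\ge 5$ and $\ell\ge 1$ (all of $c,b_1,b_2$ together with at least two lineages of $T\smallsetminus\{c\}$ reaching $v$ distinct, and at least one coalescence above $v$); for such parameters Lemma \ref{lem:ineqbba} yields $P(k,\ell)>0$, contributing a strictly negative summand. Hence the total is strictly negative, as claimed. I expect the main obstacle to be precisely this cherry bookkeeping: recognizing that the case where $b_1,b_2$ coalesce below $v$ must be separated out and handled by Lemma \ref{lem:ineq} rather than Lemma \ref{lem:ineqbba}, verifying that the substitution $x_1=c$, $x_2=b_2$, $x_3=b_1$, $y=a$ turns the target into exactly $-P(k,\ell)$, and confirming that a strictly negative scenario genuinely occurs with positive probability. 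The analytic heart, namely the sign of $P(k,\ell)$, is already supplied by Lemma \ref{lem:ineqbba}.
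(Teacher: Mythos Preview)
Your proof is correct and follows essentially the same approach as the paper's: both condition on whether the $b_1,b_2$ lineages coalesce below $v$, then on $(k,\ell)$, identifying the conditional expression with $-P(k,\ell)$ via Lemma~\ref{lem:ineqbba} in the non-coalesced case and reducing to the two-split comparison in the coalesced case. The only cosmetic difference is that where the paper invokes Proposition~\ref{prop:acbc} for the coalesced case, you unpack that proposition and apply Lemma~\ref{lem:ineq} directly, which is equivalent.
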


\begin{proof}
Let $v$ denote the MRCA  on the species tree of the taxa on $T$ and the $b_i$.

To establish 
the claimed inequality, it is enough to show it holds when conditioned on whether
$b_1$ and $b_2$ lineages have coalesced before reaching $v$ or not. If they have coalesced before $v$ 
to form a single lineage, then the events $Sp(ab_2c)$ and $Sp(b_1c)$ have probability zero. 
Thus using $b$ for $b_1b_2$ we wish to show
$$\PP(Sp(ac))-\PP(Sp(bc))<0.$$
This follows immediately from Proposition \ref{prop:acbc}.

We henceforth condition on the two $b_i$ lineages being distinct at
$v$.  Noticing that all four probabilities in the expression of
interest are 0 if the $c$ lineage coalesces with any lineage below $v$, we further
condition on the $c$ lineage being distinct at $v$, where there are thus
$k\ge 4$ lineages entering the population above $v$, and $\ell$
coalescent events occuring between $v$ and the root.

Then, with $\mathcal C=\mathcal C(k,\ell)$ denoting the events we condition on, 
\begin{align*}
\PP(Sp(ac)\mid \mathcal C)&=p(x_1y\mid k,\ell),\\
\PP(Sp(ab_2c)\mid \mathcal C)&=p(x_1x_2y\mid k, \ell),\\
\PP(Sp(b_1c)\mid \mathcal C)&=p(x_1x_2\mid k,\ell),\\
\PP(Sp(b_1b_2c) \mid \mathcal C)&=p(x_1x_2x_3\mid k,\ell).
\end{align*}
From Lemma \ref{lem:ineqbba} we find conditioned on $\mathcal C$ that the expression 
is strictly negative for 
$k\ge 5$, $k > \ell\ge 1$, 
and zero for $k\ge 5$, $\ell=0$ and $k=4$, $\ell=0,1,2,3$.  
Thus weighting the conditioned expressions by the probabilities of the events 
$\mathcal C$ and summing, 
we see the full expression is 
negative,
as long as  $k \ge 5$ and  $\ell\ge 1$  is possible.
Since $T$ has at least $3$ taxa, this only requires 
that the edge above $v$ has positive length.
\end{proof}

\medskip

To handle the species tree $((T,a),(b_1,b_2))$ we proceed analogously,
but consider a rooted species tree $((x_1,x_2,\dots x_k)\tc
L,y_1,y_2)$ formed by attaching a trifurcating root to two outgroups
$y_1,y_2$ and a claw tree with $k$ taxa, with a positive edge length
$L$. We will be interested in the case where the gene lineages, one
for each $x_i$, have coalesced $\ell$ times, from $k$ to $k-\ell$
lineages, by the time they reach the root of the tree, and then
further coalescence occurs in the root population until a single tree
is formed. With $\mathcal X=\{x_1,\dots x_k,y_1,y_2\}$ and $\mathcal
A\subset \mathcal X$, let
$$r(\mathcal A\mid k,\ell)=\PP(Sp(\mathcal A )\mid k,\ell)$$
for this species tree. By exchangeability of lineages in the
coalescent model, $r(\mathcal A\mid k,\ell)$ depends on $\mathcal A$
only up to the number of $x_i$ and the number of $y_i$ it contains.

The reader who has followed previous arguments should be able to verify the following.

\begin{lemma}\label{lem:recurabb} \ 
\begin{enumerate}
\item\label{case:abb0} $r(\mathcal A \mid k,0)=r(\mathcal B\mid k, 0)$ for $|\mathcal A|=|\mathcal B|$,
\item\label{case:abbxx3l} $r(x_1 x_2 \mid 3, 0) = \frac 15$,   $r(x_1x_2\mid 3,\ell)=\frac 13$ for $\ell=1,2$,
\item\label{case:abbxyk} $r(x_1x_2\mid k, 0)=\frac {1}{\binom{k+2}2} + \frac{\binom{k}2}{\binom{k+2}2}r(x_1x_2\mid k-1, 0)$ for $k\ge3$,

\item \label{case:abbxxkl}$r(x_1x_2\mid k, \ell)=\frac {1}{\binom{k}2} + \frac{\binom{k-2}2}{\binom{k}2}r(x_1x_2\mid k-1, \ell-1)$, for $k\ge 4$, $k> \ell\ge 1$, 

\item\label{case:abbxy2} $r(x_1y_1 \mid 2,0)=\frac 13$, $r(x_1y_1 \mid 2,1)=0$,

\item\label{case:abbxykl} $r(x_1y_1\mid k, \ell)=\frac{\binom{k-1}2}{\binom{k}2} r(x_1y_1\mid k-1, \ell-1)$ for $k\ge 3$, $k>\ell\ge1$.

\item $r(x_1x_2x_3 \mid 4,\ell)= \frac{1}{2} r(x_1x_2\mid 3, \ell-1)$ for $\ell = 1, 2, 3$,  

\item $r(x_1x_2x_3 \mid k,\ell)=\frac 3{\binom k2} r(x_1x_2\mid k-1, \ell-1)+\frac {\binom{k-3}2}{\binom k2} r(x_1x_2x_3\mid k-1, \ell-1)$ for $k\ge 5$, $k>\ell\ge 1$,  

\item \label{case:abbxxy23} 
$r(x_1x_2y_1\mid 2,0)=r(x_1x_2y_1\mid 2,1) =1$, $r(x_1x_2y_1\mid 3,1)=\frac19$, $r(x_1x_2y_1\mid 3,2)=0$,

\item \label{case:abbxxyk0}$r(x_1x_2y_1\mid k, 0)=\frac 3{\binom{k+2}2}r(x_1x_2\mid k-1,0)+\frac {\binom{k-1}2}{\binom{k+2}2} r(x_1x_2y_1\mid k-1, 0)$ for $k\ge 3$,

\item \label{case:abbxxykl}$r(x_1x_2y_1\mid k, \ell)=\frac 1{\binom k2}r(x_1y_1\mid k-1,\ell-1)+\frac {\binom{k-2}2}{\binom k2}r(x_1x_2y_1\mid k-1, \ell-1)$ for $k\ge 4$, $k>\ell\ge1$,

\item 
$r(x_1 y_1 y_2 \mid 2, 0) =  r(x_1y_1y_2\mid 2, 1)=1$, 

\item \label{case:abbxyykl}$r(x_1y_1y_2\mid k, \ell)=\frac{\binom{k-1}2}{\binom k 2}r(x_1y_1y_2\mid k-1, \ell-1)$ for $k\ge 3$, $k>\ell\ge 1$.
\end{enumerate}
\end{lemma}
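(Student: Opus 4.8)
The plan is to verify each identity by the two techniques already used in Lemmas \ref{lem:recur} and \ref{lem:recurbba}: exchangeability of lineages under the coalescent, and conditioning on the first coalescent event. The only structural novelty is the trifurcating root, so that when $\ell=0$ exactly $k+2$ lineages (the $k$ claw lineages together with $y_1,y_2$) enter the root population; this is what replaces the denominator $\binom{k+1}{2}$ of the single-outgroup case by $\binom{k+2}{2}$ in the genuinely root-level recursions, namely parts \eqref{case:abbxyk} and \eqref{case:abbxxyk0}, as well as in the base cases of parts \eqref{case:abbxx3l}, \eqref{case:abbxy2}, \eqref{case:abbxxy23}, and the items giving $r(\cdot\mid 2,\cdot)$.

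First I would dispose of part \eqref{case:abb0}: when $\ell=0$ all $k+2$ lineages reach the root with no prior coalescence, hence are fully exchangeable, so the probability of a split depends only on its block sizes. This single fact is what later lets me relabel reduced problems freely. Next I would treat the $\ell\ge 1$ recursions --- parts \eqref{case:abbxxkl} and \eqref{case:abbxykl}, the $x_1x_2x_3$ recursion for $k\ge 5$, and parts \eqref{case:abbxxykl} and \eqref{case:abbxyykl} --- by conditioning on the very first coalescent event, which, since $\ell\ge1$, must occur below the root among the $k$ claw lineages. In each case one sorts the $\binom{k}{2}$ equally likely first pairs into those that immediately begin forming the target split, those that are irrelevant and merely shrink the claw, and those that destroy the split; peeling off this first event yields a recursion in $(k-1,\ell-1)$ whose coefficients match those of Lemmas \ref{lem:recur} and \ref{lem:recurbba} verbatim, since the root multiplicity plays no role in a sub-root event. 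A bookkeeping check that the three groups exhaust $\binom{k}{2}$ confirms no case is overlooked.

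The root-level recursions \eqref{case:abbxyk} and \eqref{case:abbxxyk0} require the same case analysis applied instead to the first event among the $k+2$ root lineages, where decrementing $k$ to $k-1$ lowers the root count from $k+2$ to $k+1$ consistently. The delicate point, and the step I expect to be the main obstacle, is the reduction of the three-element splits in \eqref{case:abbxxyk0} and \eqref{case:abbxxykl}: when the first event merges two members of the target block $\{x_1,x_2,y_1\}$, the remaining task is a \emph{two}-element split, but one whose two pieces need not both be claw lineages (for instance $\{(x_1x_2),y_1\}$). Identifying this with the standard quantity $r(x_1x_2\mid k-1,0)$ or $r(x_1y_1\mid k-1,\ell-1)$ is exactly where exchangeability from part \eqref{case:abb0}, or the sub-root exchangeability of the claw lineages, must be invoked to justify that only block sizes matter.

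Finally I would settle the small base cases by direct enumeration of the few relevant coalescent histories, using that a split whose complementary block is a singleton is trivial and so has probability $1$. The only one needing real care is $r(x_1x_2y_1\mid 3,1)=\tfrac19$, which arises as the product of the probability $\tfrac13$ that the single sub-root coalescence is the pair $x_1x_2$ with the probability $\tfrac13$ that the resulting four root lineages then realize the split $\{(x_1x_2),y_1\}\mid\{x_3,y_2\}$; the remaining vanishing cases, such as $r(x_1x_2y_1\mid 3,2)=0$, follow because all of $x_1,x_2,x_3$ have been forced into a single clade.
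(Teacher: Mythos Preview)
Your proposal is correct and is precisely the approach the paper intends: the paper's own ``proof'' consists solely of the sentence ``The reader who has followed previous arguments should be able to verify the following,'' so your plan of reusing exchangeability and first-coalescence conditioning from Lemmas~\ref{lem:recur} and~\ref{lem:recurbba}, adjusted for the $k+2$ root lineages, is exactly what is expected. Your identification of the one genuinely delicate step---relabeling the reduced $2$-split in parts \eqref{case:abbxxyk0} and \eqref{case:abbxxykl} via exchangeability (full root exchangeability from part \eqref{case:abb0} when $\ell=0$, claw-lineage exchangeability when $\ell\ge1$)---is on point, and your computation of $r(x_1x_2y_1\mid 3,1)=\tfrac19$ is correct.
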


\begin{lemma} \label{lem:boundsabb} \ 
\begin{enumerate}
\item\label{caseb:abbxyk0} $r(x_1y_1 \mid k,0)\le \frac 1{k+2}$ for $k\ge 3$,
\item\label{caseb:abbxyykESA} $r(x_1y_1 \mid k,\ell)+r(x_1y_1y_2 \mid k,\ell)< \frac 1{k-1}$ if $k \ge 3$ and $\ell = 0$,
or if $k\ge 4$ and $k> \ell\ge 1$.
\end{enumerate}
\end{lemma}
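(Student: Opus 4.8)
The plan is to prove part (1) first and then feed it into part (2), which I handle in the two cases $\ell=0$ and $\ell\ge1$ separately. For part (1), since $\ell=0$ means no coalescence occurs below the root, exchangeability (Lemma \ref{lem:recurabb} \eqref{case:abb0}) lets me replace $r(x_1y_1\mid k,0)$ by the $2$-split probability $r(x_1x_2\mid k,0)$, which obeys the clean recursion of Lemma \ref{lem:recurabb} \eqref{case:abbxyk}. The base case $k=3$ gives $r(x_1x_2\mid 3,0)=1/5=1/(3+2)$ by \eqref{case:abbxx3l}, so the bound holds with equality. For the inductive step I substitute $r(x_1x_2\mid k-1,0)\le 1/(k+1)$ into the recursion and compute $r(x_1x_2\mid k,0)\le (k^2+k+2)/\big((k+2)(k+1)^2\big)$; the claimed bound $\le 1/(k+2)$ then reduces to $k^2+k+2\le(k+1)^2$, i.e.\ $k\ge1$.

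For part (2) with $\ell=0$, I again use exchangeability to rewrite the left-hand side as $r(x_1x_2\mid k,0)+r(x_1x_2y_1\mid k,0)$ and apply the two recursions \eqref{case:abbxyk} and \eqref{case:abbxxyk0}. The key algebraic step is the regrouping $\binom{k}{2}+3=\binom{k-1}{2}+(k+2)$, which lets me organize the sum of the two recursions as $\binom{k-1}{2}/\binom{k+2}{2}$ times the full pair $r(x_1x_2\mid k-1,0)+r(x_1x_2y_1\mid k-1,0)$, plus a multiple of $r(x_1x_2\mid k-1,0)$ alone. Bounding the pair by the inductive hypothesis $<1/(k-2)$ and the lone $2$-split by part (1), $\le 1/(k+1)$, everything collapses to $1/(k+2)+2/(k+1)^2$, and the target $<1/(k-1)$ reduces to the polynomial inequality $2(k-1)(k+2)<3(k+1)^2$, valid for all $k$. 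The base case $k=3$ is $1/5+1/5=2/5<1/2$.

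For part (2) with $\ell\ge1$, the decisive observation is that \emph{both} summands obey the same recursion: by \eqref{case:abbxykl} and \eqref{case:abbxyykl}, each of $r(x_1y_1\mid k,\ell)$ and $r(x_1y_1y_2\mid k,\ell)$ equals $\binom{k-1}{2}/\binom{k}{2}=(k-2)/k$ times its value at $(k-1,\ell-1)$. Writing $S(k,\ell)$ for their sum gives $S(k,\ell)=\frac{k-2}{k}S(k-1,\ell-1)$, which I telescope downward. According to the value of $k-\ell$, the descent stops at $S(k-\ell,0)$ (covered by the $\ell=0$ case when $k-\ell\ge3$), at $S(2,0)=1/3+1=4/3$ when $k-\ell=2$, or at $S(2,1)=0+1=1$ when $k-\ell=1$, with the base values read off from \eqref{case:abbxy2} and Lemma \ref{lem:recurabb}. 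In each situation the telescoped product simplifies to $(k-\ell)(k-\ell-1)/\big(k(k-1)\big)$ (or $2/\big(k(k-1)\big)$ when $k-\ell=1$) times the base value, and the bound $<1/(k-1)$ follows from elementary estimates such as $(k-\ell)/k<1$.

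The main obstacle is the $\ell=0$ inductive step of part (2): it is the only place where two structurally different recursions must be combined, and where both the inductive hypothesis and the already-proved part (1) are needed at once. The whole argument there hinges on the regrouping $\binom{k}{2}+3=\binom{k-1}{2}+(k+2)$, which is exactly what makes the two recursions telescope against the inductive hypothesis. By comparison the $\ell\ge1$ case is routine once the shared recursion is spotted, and part (1) is a direct induction.
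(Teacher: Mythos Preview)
Your argument is correct and follows essentially the same route as the paper: part (1) is the same induction via the recursion \eqref{case:abbxyk}, the $\ell=0$ case of part (2) uses the identical regrouping $\binom{k}{2}+3=\binom{k-1}{2}+(k+2)$ together with part (1) and the inductive hypothesis, and the $\ell\ge1$ case telescopes the shared recursion \eqref{case:abbxykl}/\eqref{case:abbxyykl} just as the paper does. The only cosmetic difference is that for small $k-\ell$ you telescope one extra step down to $k=2$ whereas the paper stops at $k=3$; both are valid.
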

\begin{proof} 
For claim \eqref{caseb:abbxyk0} first note that  Lemma \ref{lem:recurabb} \eqref{case:abb0} and \eqref{case:abbxx3l} establish the $k=3$ case.
Then using  Lemma \ref{lem:recurabb} \eqref{case:abbxyk} one sees inductively that for $k>3$,
\begin{align*}r(x_1y_1\mid k, 0)&\le\frac {1}{\binom{k+2}2} + \frac{\binom{k}2}{\binom{k+2}2}\frac 1{k+1}
=\frac{k^2+k+2}{(k+2)(k+1)^2}\\
&\le\frac{(k+1)^2}{(k+2)(k+1)^2}=\frac 1{k+2}.
\end{align*}

For claim \eqref{caseb:abbxyykESA}  when $\ell = 0$, note that by 
Lemma \ref{lem:recurabb} \eqref{case:abb0}, \eqref{case:abbxx3l}, \eqref{case:abbxy2}, \eqref{case:abbxxy23}, and \eqref{case:abbxxyk0}, 
$$r(x_1y_1 \mid 3,0)+r(x_1y_1y_2\mid 3, 0)=\frac 15 +\frac 3{10}\cdot\frac 13 +\frac 1{10}\cdot 1=\frac 2{5}<\frac 12,$$
so the base case of $k=3$ holds.
Then for $k> 3$, using Lemma \ref{lem:recurabb} \eqref{case:abb0}, \eqref{case:abbxyk}, and \eqref{case:abbxxyk0}  we have
\begin{align*}
r(&x_1y_1\mid k, 0)+r(x_1y_1y_2\mid k, 0)\\
&=\frac 1{\binom{k+2}2} +\frac{\binom k2+3}{\binom{k+2}2} r(x_1y_1\mid k-1, 0)
+\frac {\binom{k-1}2}{\binom{k+2}2}r(x_1y_1y_2\mid k-1, 0)\\
&=\frac 1{\binom{k+2}2} +\frac{k+2}{\binom{k+2}2} r(x_1y_1\mid k-1, 0)
 +\frac {\binom{k-1}2}{\binom{k+2}2} \bigg( r(x_1y_1\mid k-1, 0)+
r(x_1y_1y_2\mid k-1, 0)\bigg).
\end{align*}
Using an inductive hypothesis and claim \eqref{caseb:abbxyk0} of this proposition yields
\begin{align*}
r(x_1y_1\mid k, 0)+r(x_1&y_1y_2\mid k, 0)\\
&<\frac 2{(k+2)(k+1)} +\frac{2}{k+1} \cdot \frac 1{k+1}
+\frac {(k-1)(k-2)}{(k+2)(k+1)} \cdot \frac 1{k-2}\\
&=\frac {1}{k+2} +\frac{2}{(k+1)^2} <\frac 1{k+1} +\frac{2}{(k+1)^2}<\frac 1{k-1}.\\
\end{align*}

Assume now $k \ge 4$ and $k > \ell \ge 1$, and consider first 
the case that $k-\ell=1$ or $2$. Applying Lemma \ref{lem:recurabb} \eqref{case:abbxykl} and \eqref{case:abbxyykl} repeatedly we have
$$r(x_1y_1\mid k, \ell)+r(x_1y_1y_2\mid k, \ell)=\frac 3{\binom k2}\big (r(x_1y_1\mid 3, \ell-k+3)+r(x_1y_1y_2\mid 3, \ell-k+3)\big ).$$
From Lemma \ref{lem:recurabb} 
\begin{align*}r(x_1y_1\mid 3,1)+r(x_1y_1y_2\mid 3,1)&=\frac 19+ \frac 13=\frac 49,\\
r(x_1y_1\mid 3,2)+r(x_1y_1y_2\mid 3,2)&=0+\frac 13=\frac 13,
\end{align*}
so for $k\ge 4$,
$$
r(x_1y_1\mid k, \ell)+r(x_1y_1y_2\mid k, \ell)<\frac 6{k(k-1)} \cdot \frac 49<  
\frac 1{k-1}.$$ 
If $k-\ell\ge 3$, then applying Lemma \ref{lem:recurabb} \eqref{case:abbxykl} and \eqref{case:abbxyykl} repeatedly gives
$$
r(x_1y_1\mid k, \ell)+r(x_1y_1y_2\mid k, \ell)=\frac {\binom{k-\ell}2}{\binom k2}\left (r(x_1y_1\mid k-\ell, 0)+r(x_1y_1y_2\mid k-\ell,0)\right ).$$
Using what we proved above, this shows
$$r(x_1y_1\mid k, \ell)+r(x_1y_1y_2\mid k, \ell)<\frac {\binom{k-\ell}2}{\binom k2}\cdot \frac 1{k-\ell-1}
<\frac 1{k-1}.$$
\end{proof}

\begin{lemma}\label{lem:ineqabb} 
Let 
$$R(k, \ell) = r(x_1x_2\mid k,\ell)+r(x_1x_2y_1\mid k,\ell)-r(x_1y_1\mid k,\ell)-r(x_1y_1y_2\mid k,\ell).$$
Then for $k=3$, $\ell=0,1,2$, and for $k\ge 4$, $\ell=0$, $R(k,\ell)=0.$ For $k\ge 4$ and $k>\ell\ge1$, $R( k,\ell)>0.$
\end{lemma}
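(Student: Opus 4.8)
The plan is to mirror the proof of Lemma \ref{lem:ineqbba}, splitting into the vanishing cases and the strict-positivity cases and using the recursions of Lemma \ref{lem:recurabb} together with the bound of Lemma \ref{lem:boundsabb}(\ref{caseb:abbxyykESA}). First I would dispatch the cases where $R(k,\ell)=0$. For $\ell=0$ and any $k\ge 3$, Lemma \ref{lem:recurabb}(\ref{case:abb0}) says $r(\mathcal A\mid k,0)$ depends only on $|\mathcal A|$, so the two $2$-sets $x_1x_2,\,x_1y_1$ contribute equal terms, as do the two $3$-sets $x_1x_2y_1,\,x_1y_1y_2$; the four terms cancel in pairs and $R(k,0)=0$. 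For $k=3$, $\ell=1,2$ I would instead use complementation: on this $5$-taxon tree $Sp(x_1x_2)=Sp(x_3y_1y_2)$ and $Sp(x_1x_2y_1)=Sp(x_3y_2)$, so by the count-only dependence of $r$ one gets $r(x_1x_2\mid 3,\ell)=r(x_1y_1y_2\mid 3,\ell)$ and $r(x_1x_2y_1\mid 3,\ell)=r(x_1y_1\mid 3,\ell)$, whence $R(3,\ell)=0$.

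For the strict inequality I would set up a single reduction valid for all $k\ge 4$, $k>\ell\ge 1$. Writing $r_{xx},r_{xxy},r_{xy},r_{xyy}$ for the four relevant probabilities at level $(k-1,\ell-1)$, I substitute the recursions (\ref{case:abbxxkl}), (\ref{case:abbxxykl}), (\ref{case:abbxykl}), (\ref{case:abbxyykl}) of Lemma \ref{lem:recurabb}, all of which apply in this range, and regroup to isolate $R(k-1,\ell-1)=r_{xx}+r_{xxy}-r_{xy}-r_{xyy}$. Using the identity $\binom{k-1}{2}-\binom{k-2}{2}=k-2$, this yields the clean recursion
$$\binom{k}{2}\,R(k,\ell)=1+\binom{k-2}{2}\,R(k-1,\ell-1)+r_{xy}-(k-2)\bigl(r_{xy}+r_{xyy}\bigr).$$

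The strategy is then an induction along the reduction $(k,\ell)\mapsto(k-1,\ell-1)$, with $R(k-1,\ell-1)\ge 0$ as the hypothesis (it equals $0$ when $\ell=1$, and is a $k-1=4$ base case or covered by induction when $\ell\ge 2$). For $k\ge 5$ the decisive input is $r_{xy}+r_{xyy}<\tfrac1{k-2}$, which is Lemma \ref{lem:boundsabb}(\ref{caseb:abbxyykESA}) applied at level $(k-1,\ell-1)$ (one checks its hypotheses hold throughout this range); since $r_{xy}\ge 0$ and $\binom{k-2}{2}R(k-1,\ell-1)\ge 0$, the right side exceeds $1-(k-2)\cdot\tfrac1{k-2}=0$, giving $R(k,\ell)>0$. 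The base cases $k=4$, $\ell=1,2,3$ I would verify from the same recursion, feeding in $R(3,\ell-1)=0$ and the explicit values of $r(x_1y_1\mid 3,\cdot)$ and $r(x_1y_1y_2\mid 3,\cdot)$ recorded in Lemma \ref{lem:recurabb}, which give $R(4,1)=\tfrac1{15}$, $R(4,2)=\tfrac1{27}$, $R(4,3)=\tfrac1{18}$.

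The main obstacle is the bookkeeping around this bound. Lemma \ref{lem:boundsabb}(\ref{caseb:abbxyykESA}) supplies the denominator $k-2$ at level $(k-1,\ell-1)$ \emph{exactly} so that $(k-2)(r_{xy}+r_{xyy})<1$ cancels the leading constant; but that bound is unavailable at $(m,j)=(3,\,\ge 1)$, which is precisely why $k=4$ with $\ell\ge 2$ must be treated as explicit base cases rather than through the inductive bound. Getting the regrouping and the combinatorial identity right, and confirming that the ranges of validity of every invoked recursion and of the probability bound line up with the induction, is the delicate part; the remainder is the same routine coalescent bookkeeping as in Lemma \ref{lem:ineqbba}.
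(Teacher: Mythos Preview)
Your proof is correct and follows essentially the same approach as the paper: the same vanishing arguments for $k=3$ and $\ell=0$, the same recursion for $R(k,\ell)$ in terms of $R(k-1,\ell-1)$, and the same inductive use of Lemma \ref{lem:boundsabb}. You are in fact slightly more careful than the paper in treating $k=4$ via explicit computation, since Lemma \ref{lem:boundsabb}(\ref{caseb:abbxyykESA}) as stated does not cover the level $(3,\ell)$ for $\ell\ge 1$, whereas the paper invokes the bound uniformly for all $k\ge 4$.
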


\begin{proof}
For $k=3$, the events $Sp(x_1x_2)$ and $Sp(x_3y_1y_2)$ are the same, as are $Sp(x_1x_2y_1)$ and $Sp(x_3y_2)$, so using exchangability of the $x_i$ and of the $y_i$ lineages
\begin{align*}
r(x_1x_2\mid k, \ell)&=r(x_1y_1y_2\mid k, \ell),\\
r(x_1x_2y_1\mid k, \ell)&=r( x_1y_1 \mid k, \ell).
\end{align*}
Thus $R(3,\ell)=0$ for $\ell=0,1,2$. For $k\ge 3$, Lemma \ref{lem:recurabb} \eqref{case:abb0} implies $R(k,0)=0$.

Now consider $k\ge 4$, $k>\ell\ge 1$. By Lemma \ref{lem:recurabb} \eqref{case:abbxxkl}, \eqref{case:abbxykl}, \eqref{case:abbxxykl}, and
\eqref{case:abbxyykl} we find
\begin{multline*}R(k,l)=
\frac 1{\binom k2}\Big (  1 +r(x_1y_1 \mid k-1,\ell-1) +\binom{k-2}2 R(k-1,\ell-1) \\ 
-(k-2)\big (  r(x_1y_1\mid k-1, \ell-1) +r(x_1y_1y_2\mid k-1,\ell-1)    \big)
 \Big ).
\end{multline*}
An inductive hypothesis that $R(k-1,\ell-1)\ge 0$, Lemma \ref{lem:boundsabb}, and the positivity of $r(x_1y_1 \mid k-1,\ell-1)$
  then show
$$R(k,l) > \frac 1{\binom k2}\left (  1 + 0   +\binom{k-2}2 0 -(k-2)\frac 1{k-2}\right )  =  0. $$
\end{proof}

\begin{lemma} \label{lem:abb} Consider a species tree with topology $((T,a),(b_1,b_2))$,
where $T$ is a subtree on at least three taxa, one of which is $c$. Suppose the edge above $(T,a)$
has positive length. Then under the multispecies coalescent model,
$$\PP(Sp(ac))+\PP(Sp(ab_2c))-\PP(Sp(b_1c))-\PP(Sp(b_1b_2c))>0.$$
\end{lemma}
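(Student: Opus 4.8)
The plan is to mirror the proof of Lemma \ref{lem:bba}, conditioning the coalescent process on events at $v=\MRCA(T\cup\{a\})$, the root of the $(T,a)$ subtree, whose ancestral edge (to the species tree root) carries the given positive length $L$. First I observe that if the $c$ lineage coalesces with any other lineage below $v$, then all four splits in the expression have conditional probability $0$, since $c$ cannot then lie in a partition block as small as those required; such realizations contribute nothing and may be ignored. So it suffices to treat realizations in which $c$ reaches $v$ as a distinct lineage.

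Next I split into two cases according to whether the lineages from $b_1$ and $b_2$ have coalesced by the time they reach the species tree root. If they have coalesced into a single lineage $b$, then $Sp(ab_2c)$ and $Sp(b_1c)$ have conditional probability $0$, and the conditioned expression reduces to $\PP(Sp(ac))-\PP(Sp(bc))$ for the effective species tree $((T,a),b)$, with $b$ a single outgroup lineage entering the root population. Since the edge above $(T,a)$ retains length $L>0$, Proposition \ref{prop:acbc} applies and shows this conditioned value is strictly positive.

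In the complementary case, $b_1,b_2$ enter the root population as the two distinct outgroup lineages $y_1,y_2$ of the auxiliary species tree $((x_1,\dots,x_k)\tc L,y_1,y_2)$ used to define the probabilities $r(\cdot\mid k,\ell)$. Here the $x_i$ are the lineages surviving to $v$ from $T\cup\{a\}$, one of which is $c$ and one of which is $a$; I condition further on the number $k$ of these lineages and on the number $\ell$ of coalescences occurring in the edge of length $L$ above $v$, calling the resulting event $\mathcal C=\mathcal C(k,\ell)$. Taking $x_1=c$, $x_2=a$, the four splits are identified with
\begin{align*}
\PP(Sp(ac)\mid\mathcal C)&=r(x_1x_2\mid k,\ell), & \PP(Sp(ab_2c)\mid\mathcal C)&=r(x_1x_2y_1\mid k,\ell),\\
\PP(Sp(b_1c)\mid\mathcal C)&=r(x_1y_1\mid k,\ell), & \PP(Sp(b_1b_2c)\mid\mathcal C)&=r(x_1y_1y_2\mid k,\ell),
\end{align*}
where exchangeability of the two outgroup lineages makes it immaterial whether $b_1$ or $b_2$ plays the role of $y_1$. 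Thus the conditioned expression is exactly $R(k,\ell)$, and Lemma \ref{lem:ineqabb} gives $R(k,\ell)\ge 0$, with strict inequality whenever $k\ge 4$ and $\ell\ge 1$.

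Finally I weight each conditioned value by the probability of the event on which it is conditioned and sum. Every contribution is nonnegative, so it remains only to produce strict positivity of the total. Because $T$ has at least three taxa, with positive probability no coalescence occurs within $T$ below $v$, so that $k=|T|+1\ge 4$ lineages (including $a$ and $c$) reach $v$; and because $L>0$, at least one coalescence ($\ell\ge1$) occurs above $v$ with positive probability. When this $k\ge4,\ \ell\ge1$ event has positive probability it forces strict positivity through $R(k,\ell)>0$; in the degenerate situation where $b_1,b_2$ coalesce before the root with probability one, strict positivity comes instead from the Case~1 contribution via Proposition \ref{prop:acbc}. I expect the only delicate points to be verifying that conditioning on $b_1,b_2$ having merged yields precisely the coalescent distribution on $((T,a),b)$ (so that Proposition \ref{prop:acbc} may legitimately be invoked), and the bookkeeping matching each split to the correct $r$-value; the substantive inequality work has already been carried out in Lemmas \ref{lem:recurabb}--\ref{lem:ineqabb}.
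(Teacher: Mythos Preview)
Your proposal is correct and follows essentially the same approach as the paper's proof: condition on whether the $b_1,b_2$ lineages have coalesced before the root (reducing to Proposition~\ref{prop:acbc} if so), and otherwise condition on the number $k$ of lineages at $v$ and the number $\ell$ of coalescences above $v$, identify the conditioned expression with $R(k,\ell)$, and invoke Lemma~\ref{lem:ineqabb}. The only cosmetic difference is that you discuss the $c$-coalesces-below-$v$ case up front rather than inside Case~2, and you add an explicit remark about the degenerate situation where $b_1,b_2$ almost surely coalesce; neither changes the substance.
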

\begin{proof}
Let $\rho$ denote the root of the species tree, and $v$ the MRCA of the taxa on $T$ and $a$.

To establish the claimed inequality, it is enough to show it holds when conditioned on 
whether the $b_1$ and $b_2$ lineages have coalesced before reaching $\rho$ or not.  If they have 
coalesced below $\rho$ to form a single lineage, then the events $Sp(ab_2c)$ and 
$Sp(b_1c)$ have probability zero. Thus using $b$ for $b_1b_2$ we wish to show
$$\PP(Sp(ac))-\PP(Sp(bc))>0.$$
This follows immediately from Proposition \ref{prop:acbc}.

We henceforth condition on the event that the lineages from $b_1$ and
$b_2$ are distinct at $\rho$.  Noticing that all four probabilities in
the expression of interest are 0 if the $c$ lineage coalesces with any lineage
below $v$, we further condition on the event that the $c$ lineage is distinct at $v$, so there are $k\ge 3$ distinct lineages at $v$, and
that $\ell$ coalescent events occur on the edge above $v$.  Calling
this event $\mathcal C = \mathcal C(k, \ell)$,
\begin{align*}
\PP(Sp(ac)\mid \mathcal C)&=r(x_1x_2\mid k,\ell),\\
\PP(Sp(ab_2c)\mid \mathcal C)&=r(x_1x_2y_1\mid k, \ell),\\
\PP(Sp(b_1c)\mid \mathcal C)&=r(x_1y_1\mid k,\ell),\\
\PP(Sp(b_1b_2c)\mid \mathcal C)&=r(x_1y_1y_2\mid k,\ell).
\end{align*}
From Lemma \ref{lem:ineqabb} we find that conditioned on $\mathcal C$
the expression of interest is strictly positive for $k\ge 4$, $k >
\ell \ge 1$, and zero for $k=3$, $\ell=0,1,2$ and $k\ge 4$,
$\ell=0$. Weighting the conditioned expressions by the probabilities
of the $\mathcal C$ and summing we get the unconditioned expression.
Since $T$ has at least 3 taxa and the branch length above $v$ has
positive length, some of the summands corresponds to the event
$\mathcal C(k,\ell)$ with $k\ge 4$, $k>\ell\ge 1$; thus the full
expression is positive.
\end{proof}

Finally, Proposition \ref{prop:abbc} follows from Theorem
\ref{thm:cladesplitinv}, Lemma \ref{lem:bba} and Lemma \ref{lem:abb}.

\subsection{Proof of Proposition \ref{prop:aabbc}} \

To establish Proposition \ref{prop:aabbc}, we first extend the results
of Lemma \ref{lem:recurabb}, and those that follow it, to splits of
size 4.

A proof of the following is left to the reader.

\begin{lemma} \label{lem:recuraabb}\ 
\begin{enumerate}
\item $r(x_1x_2x_3y_1\mid 3, 0)=1$,
\item $r(x_1x_2x_3y_1\mid k, 0)=\frac 6{\binom {k+2}2} r(x_1x_2y_1\mid k-1, 0)+\frac {\binom{k-2}2}{\binom {k+2}2}r(x_1x_2x_3y_1\mid k-1,0)$ for $k\ge 4$,
\item $r(x_1x_2x_3y_1\mid 4,\ell)=\frac 1{2}r(x_1x_2y_1\mid 3,\ell-1)$ for $\ell = 1, 2, 3$,
\item $r(x_1x_2x_3y_1\mid k,\ell)=\frac 3{\binom k2}r(x_1x_2y_1\mid k-1,\ell-1) +\frac {\binom{k-3}2}{\binom k2}r(x_1x_2x_3y_1\mid k-1,\ell-1)$ for $k\ge 5$, $k>\ell\ge 1$,
\item $r(x_1x_2y_1y_2\mid 3, \ell)=1$ for $\ell=0,1,2$,
\item $r(x_1x_2y_1y_2\mid k, \ell)=\frac 1{\binom k2}r(x_1y_1y_2\mid k-1,\ell-1) +\frac {\binom{k-2}2}{\binom k2}r(x_1x_2y_1y_2\mid k-1,\ell-1)$ for $k\ge 4$, $k>\ell\ge 1$,
\end{enumerate}
\end{lemma}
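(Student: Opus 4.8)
The plan is to verify all six identities by conditioning on the first coalescent event, exactly as in the proofs of Lemmas~\ref{lem:recur}, \ref{lem:recurbba}, and \ref{lem:recurabb}, and to read off the base cases from trivial-split considerations. The organizing principle is that a split $Sp(\mathcal A_g)$ is displayed on a gene tree precisely when $\mathcal A_g$ or its complement is a clade, and that clades only grow as lineages coalesce. Consequently the target block $\mathcal A$ can remain eligible to form a clade after a coalescent event only if that event merges two lineages both lying inside $\mathcal A$, or both lying inside $\overline{\mathcal A}$; any coalescence straddling the two blocks forces the probability of $Sp(\mathcal A_g)$ to be $0$ thereafter. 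Every recursion will arise by partitioning the possible first coalescences into these three types.

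For the base cases (1) and (5) I would simply note that with $k=3$ the relevant block has size $4$ while its complement is a single taxon ($y_2$ in (1), $x_3$ in (5)); since a singleton block yields a trivial split displayed on every binary gene tree, the probability equals $1$ regardless of $\ell$.

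For the recursions with $\ell\ge 1$, namely (3), (4), and (6), I would condition on the first of the $\ell$ coalescences, which occurs among the $k$ claw lineages on the edge of length $L$; there are $\binom{k}{2}$ equally likely pairs. By the straddling observation only two kinds of pair keep the split attainable. In (4) there are $\binom{3}{2}=3$ pairs internal to the $x$-part of the target block; merging such a pair gives a reduced tree on $k-1$ claw lineages with $\ell-1$ coalescences remaining, in which the target has shrunk to a size-$3$ block, contributing $r(x_1x_2y_1\mid k-1,\ell-1)$, while the $\binom{k-3}{2}$ pairs of claw lineages internal to the complement preserve the size-$4$ block and contribute $r(x_1x_2x_3y_1\mid k-1,\ell-1)$; this yields (4), and its $k=4$ instance, where the complement term $\binom{1}{2}$ vanishes, is exactly (3). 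Claim (6) is the same argument with $\binom{2}{2}=1$ target pair and $\binom{k-2}{2}$ complement pairs, reducing to $r(x_1y_1y_2\mid k-1,\ell-1)$ and $r(x_1x_2y_1y_2\mid k-1,\ell-1)$.

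For the $\ell=0$ recursion (2), all $k+2$ lineages enter the root population with no prior coalescence, so I would condition on the first coalescence among these $k+2$ fully exchangeable lineages, giving denominator $\binom{k+2}{2}$. The point requiring the most care is that, thanks to exchangeability at the root, all $\binom{4}{2}=6$ pairs internal to the size-$4$ target block reduce to a single value, whether the merging pair is of type $x$--$x$ or $x$--$y_1$: each leaves a size-$3$ block among $k+1$ lineages, which by Lemma~\ref{lem:recurabb}\eqref{case:abb0} equals $r(x_1x_2y_1\mid k-1,0)$. The $\binom{k-2}{2}$ pairs internal to the complement $\{x_4,\dots,x_k,y_2\}$ preserve the target and give $r(x_1x_2x_3y_1\mid k-1,0)$, and all straddling pairs contribute $0$, producing (2). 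I expect the only real obstacle to be bookkeeping: for each reduction one must correctly identify which first coalescences keep the target block able to become a clade, and which previously established quantity the surviving configuration matches.
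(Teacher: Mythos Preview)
Your proposal is correct and follows exactly the pattern the paper intends: the paper states only that a proof is left to the reader, and your argument---reading off the trivial-split base cases and deriving the recursions by conditioning on the first coalescent event, with Lemma~\ref{lem:recurabb}\eqref{case:abb0} handling the exchangeability at the root in~(2)---is precisely the verification the earlier lemmas model. The care you flag in~(2), that all $\binom{4}{2}=6$ internal pairs reduce to the same value regardless of whether the merge is $x$--$x$ or $x$--$y_1$, is the only subtle point and you handle it correctly.
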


\begin{lemma}\label{lem:boundsaabb}
Let $U(k,\ell)=$
$$r(x_1y_1\mid k, \ell)+r(x_1x_2y_1\mid k, \ell)
+r(x_1y_1y_2\mid k, \ell)+r(x_1x_2y_1y_2\mid k,\ell).
$$
Then
$U(k,\ell)< \frac 1{k-2}$ for $k\ge 4$, $k>\ell \ge 0$.
\end{lemma}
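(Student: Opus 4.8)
The plan is to prove the bound by a two-parameter induction, reducing the pair $(k,\ell)$ to $(k-1,\ell-1)$ for $\ell\ge1$ via the coalescent recursions of Lemmas \ref{lem:recurabb} and \ref{lem:recuraabb}, and feeding in the bounds already established in Lemma \ref{lem:boundsabb}. Since the reduction lowers both indices together, I would take $\ell=0$ (all $k\ge4$) and $k=4$ (all admissible $\ell$) as base cases, and carry out the inductive step for $k\ge5$, $k>\ell\ge1$, inducting on $\ell$.

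For the step, I would add the four recursions governing the summands of $U(k,\ell)$: Lemma \ref{lem:recurabb}\eqref{case:abbxykl}, \eqref{case:abbxxykl}, \eqref{case:abbxyykl} for $r(x_1y_1)$, $r(x_1x_2y_1)$, $r(x_1y_1y_2)$, and the last recursion of Lemma \ref{lem:recuraabb} for $r(x_1x_2y_1y_2)$. After collecting terms the coefficients collapse to only two values, giving
\begin{equation*}
U(k,\ell)=\frac{\binom{k-1}2+1}{\binom k2}\,A+\frac{\binom{k-2}2}{\binom k2}\,B,
\end{equation*}
where $A=r(x_1y_1\mid k-1,\ell-1)+r(x_1y_1y_2\mid k-1,\ell-1)$ and $B=r(x_1x_2y_1\mid k-1,\ell-1)+r(x_1x_2y_1y_2\mid k-1,\ell-1)$, so that $A+B=U(k-1,\ell-1)$. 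Because the two coefficients differ by exactly $\tfrac2k$, I rewrite this as $\tfrac{\binom{k-2}2}{\binom k2}U(k-1,\ell-1)+\tfrac2k A$, bound $U(k-1,\ell-1)<\tfrac1{k-3}$ by the inductive hypothesis and $A<\tfrac1{k-2}$ by Lemma \ref{lem:boundsabb}\eqref{caseb:abbxyykESA}, and obtain $U(k,\ell)<\tfrac{k^2-2k+2}{k(k-1)(k-2)}\le\tfrac1{k-2}$, the last inequality holding since $k^2-2k+2\le k^2-k$ for $k\ge2$.

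The base case $k=4$, $\ell\in\{1,2,3\}$, is a finite computation: the same recursions express each summand through the explicit $k=3$ values tabulated in Lemmas \ref{lem:recurabb} and \ref{lem:recuraabb} (for instance $U(4,1)=\tfrac1{10}+\tfrac1{15}+\tfrac1{10}+\tfrac15=\tfrac7{15}<\tfrac12$), and similarly for $\ell=2,3$. For the base case $\ell=0$ I would run a separate induction on $k$. Exchangeability (Lemma \ref{lem:recurabb}\eqref{case:abb0}) makes the two $3$-split summands equal, and the $\ell=0$ recursions Lemma \ref{lem:recurabb}\eqref{case:abbxyk}, \eqref{case:abbxxyk0} together with the second recursion of Lemma \ref{lem:recuraabb} let me write
\begin{equation*}
U(k,0)=\frac1{\binom{k+2}2}\Big[1+\binom{k-2}2U(k-1,0)+(2k+3)\,r(x_1y_1\mid k-1,0)+(2k+2)\,r(x_1x_2y_1\mid k-1,0)\Big].
\end{equation*}
Bounding the last two terms together by $(2k+3)\big(r(x_1y_1\mid k-1,0)+r(x_1y_1y_2\mid k-1,0)\big)<\tfrac{2k+3}{k-2}$ via Lemma \ref{lem:boundsabb}\eqref{caseb:abbxyykESA}, and $U(k-1,0)<\tfrac1{k-3}$ by induction, reduces the claim to $k^2+2k+6\le(k+2)(k+1)$, i.e.\ $k\ge4$; the base $k=4$ follows from the direct evaluation $U(4,0)=\tfrac{34}{75}<\tfrac12$.

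The main obstacle here is structural rather than computational. In Lemma \ref{lem:boundsabb} the relevant probabilities obeyed single-term recursions with the common factor $\tfrac{\binom{k-1}2}{\binom k2}$, so the recursion could simply be iterated down to a base case. Here the $3$- and $4$-element splits satisfy two-term recursions that mix split sizes, so no such iteration is available and a genuine two-parameter induction is forced. The delicate point is that the induction closes only because the coefficient gap is exactly $\tfrac2k$ and the bound on $A$ supplied by Lemma \ref{lem:boundsabb} is just strong enough to absorb it; checking this tightness, and separately organizing the $\ell=0$ induction where the exchangeability collapse and the sum of two split sizes must be controlled at once, is where the care will be needed.
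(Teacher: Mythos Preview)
Your proposal is correct and follows essentially the same route as the paper's proof: both derive the aggregate recursion $U(k,\ell)=\tfrac1{\binom k2}\big[\binom{k-2}2U(k-1,\ell-1)+(k-1)A\big]$ (your $\tfrac2k$ coefficient gap is exactly $(k-1)/\binom k2$), bound $A$ via Lemma~\ref{lem:boundsabb}\eqref{caseb:abbxyykESA}, handle $\ell=0$ by a separate induction on $k$ with the same $\tfrac{k^2+2k+6}{(k+2)(k+1)(k-2)}$ estimate, and verify the $k=4$ base cases numerically (the paper gets $U(4,1)=\tfrac7{15}$, $U(4,2)=\tfrac{13}{27}$, $U(4,3)=\tfrac7{18}$, $U(4,0)=\tfrac{34}{75}$). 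The organization and key inequalities are the same.
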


\begin{proof}
We first take up the case that $\ell = 0$, and 
observe by Lemmas \ref{lem:recurabb} and  \ref{lem:recuraabb} that for $k\ge 4$,
\begin{align*}
U(k,0)&=\frac 1{\binom{k+2}2} \Big ( 1+(2k+3) \left (r(x_1y_1\mid k-1,0)+r(x_1y_1y_2 \mid k-1,0)\right)\\ 
&\qquad - r(x_1y_1y_2 \mid k-1,0)+\binom {k-2}2 U(k-1,0) \Big ).
\end{align*}
Since 
$$U(3,0)=\frac 15 +\frac15+\frac 15+1=\frac 85,$$ 
we see
$$U(4,0)=\frac1{15}\left(1+11\left( \frac 15 +\frac 15\right)-\frac 15+1\cdot \frac 85 \right )=\frac{34}{75}<\frac 1{4-2},$$
establishing the $k=4, \ell = 0$ case.
Proceeding inductively for $k\ge 5$, and using Lemma \ref{lem:boundsabb} \eqref{caseb:abbxyykESA}, we have
\begin{align*}
U(k,0)&<\frac 1{\binom{k+2}2} \left ( 1+(2k+3)\frac 1{k-2}  -0+\binom {k-2}2 \frac 1{k-3} \right )\\
&= \frac{k^2 + 2k + 6}{k^2+3k+2} \cdot \frac{1}{k-2}<\frac 1{k-2}.
\end{align*}

For $\ell > 0$, 
if $k\ge 4$, $k>\ell\ge 1$,  Lemmas \ref{lem:recurabb} and  \ref{lem:recuraabb}
show
\begin{multline}
U(k,\ell)=\frac 1{\binom k2}\bigg ( \binom{k-2}2 U(k-1,\ell-1)\\
+(k-1) \left(r(x_1y_1\mid k-1,\ell-1)+r(x_1y_1y_2\mid k-1,\ell-1)\right)  \Big).\label{eq:aabbtRkl}
\end{multline}
In particular, since
\begin{align*} 
 U(3,1)&=\frac 19+\frac 19 +\frac 13+1=\frac {14}9,\\
U(3,2)&=0+ 0+\frac 13+1=\frac 43,
\end{align*}
then
\begin{align*}
U(4,1)&=\frac 16\left(\frac 85+3\left (\frac 15+\frac 15\right )\right )=\frac {7}{15}<\frac 1{4-2},\\
U(4,2)&=\frac 16\left(\frac {14}9+3\left (\frac 19+\frac 13\right )\right )=\frac {13}{27}<\frac 1{4-2},\\
U(4,3)&=\frac 16\left(\frac 43+3\left (0+\frac 13\right )\right )=\frac 7{18}<\frac 1{4-2},
\end{align*}
providing, along with the cases with $\ell = 0$, the base cases for induction.
Now for $k\ge 5$, $k>\ell\ge 1$,
we see from equation \eqref{eq:aabbtRkl}, Lemma \ref{lem:boundsabb} \eqref{caseb:abbxyykESA},
and an inductive hypothesis that
\begin{align*}
U(k,\ell) &<\frac 1{\binom k2}\Big ( \binom{k-2}2\frac 1{k-3}+(k-1)\frac 1{k-2} \Big ) \\
&=\frac{k^2-2k+2}{k(k-1)(k-2)}<\frac 1{k-2}.
\end{align*}
\end{proof}

\begin{lemma}\label{lem:ineqaabb}
Let 
\begin{multline*}
S(k,\ell)=r(x_1x_2\mid k,\ell)+r(x_1x_2x_3\mid k,\ell)+r(x_1x_2x_3y_1\mid k, \ell)\\
-r(x_1y_1\mid k,\ell)-r(x_1y_1y_2\mid k,\ell)-r(x_1x_2y_1y_2\mid k,\ell).
\end{multline*}
Then  for $k=4$, $\ell=0, 1,2,3$ and for $k\ge 5$, $\ell=0$, $S(k,\ell)= 0$.
For $k\ge 5$, $k>\ell\ge 1$, $S(k,\ell)>0$.
\end{lemma}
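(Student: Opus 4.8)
The plan is to prove both the equalities and the inequality by induction, using the recursions of Lemmas \ref{lem:recurabb} and \ref{lem:recuraabb} to reduce each of the six terms defining $S(k,\ell)$ to index $(k-1,\ell-1)$, and then feeding in the two inequalities already available: the inductive hypothesis $S(k-1,\ell-1)\ge 0$ and $R(k-1,\ell-1)\ge 0$ from Lemma \ref{lem:ineqabb}, together with the upper bound of Lemma \ref{lem:boundsaabb}. This mirrors exactly the strategy used to prove Lemmas \ref{lem:ineqbba} and \ref{lem:ineqabb}.

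First I would dispose of the base cases. For $k=4$ there are only six taxa, so the block complementary to $\{x_1,x_2\}$ is $\{x_3,x_4,y_1,y_2\}$, and by exchangeability of the $x_i$ and of the $y_j$ lineages $r(x_1x_2\mid 4,\ell)=r(x_1x_2y_1y_2\mid 4,\ell)$; likewise $r(x_1x_2x_3\mid 4,\ell)=r(x_1y_1y_2\mid 4,\ell)$ and $r(x_1x_2x_3y_1\mid 4,\ell)=r(x_1y_1\mid 4,\ell)$. Hence the six terms of $S(4,\ell)$ cancel in pairs, giving $S(4,\ell)=0$ for $\ell=0,1,2,3$. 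For $k\ge 5$ and $\ell=0$, Lemma \ref{lem:recurabb} \eqref{case:abb0} makes all splits of a common size equiprobable, so the size-$2$, size-$3$ and size-$4$ terms again cancel in pairs and $S(k,0)=0$.

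For the inductive step ($k\ge 5$, $k>\ell\ge 1$) I would substitute the six recursions and clear the common denominator $\binom k2$, writing $\binom k2 S(k,\ell)$ as an explicit integer combination of the seven quantities $r(x_1x_2\mid k-1,\ell-1)$, $r(x_1x_2x_3\mid k-1,\ell-1)$, $r(x_1x_2x_3y_1\mid k-1,\ell-1)$, $r(x_1x_2y_1\mid k-1,\ell-1)$, $r(x_1y_1\mid k-1,\ell-1)$, $r(x_1y_1y_2\mid k-1,\ell-1)$, $r(x_1x_2y_1y_2\mid k-1,\ell-1)$, plus the constant $1$. I would then regroup this combination so that it reads as $\binom{k-3}2 S(k-1,\ell-1)$ (non-negative by induction), plus a non-negative multiple of $R(k-1,\ell-1)$ (non-negative by Lemma \ref{lem:ineqabb}), plus the leftover positive probability terms and the constant $1$, minus a multiple of the $y$-heavy sum $U(k-1,\ell-1)=r(x_1y_1\mid k-1,\ell-1)+r(x_1x_2y_1\mid k-1,\ell-1)+r(x_1y_1y_2\mid k-1,\ell-1)+r(x_1x_2y_1y_2\mid k-1,\ell-1)$. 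Lemma \ref{lem:boundsaabb}, applied at index $k-1$, bounds this last sum above by $1/(k-3)$, and the regrouping is arranged so that the constant $1$ strictly dominates the bounded negative contribution, yielding $S(k,\ell)>0$.

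The main obstacle, just as in Lemmas \ref{lem:ineqbba} and \ref{lem:ineqabb}, is this final balancing. The extraction of $\binom{k-3}2 S(k-1,\ell-1)$ only reduces the negative coefficients on $r(x_1y_1\mid k-1,\ell-1)$ and $r(x_1y_1y_2\mid k-1,\ell-1)$ from order $k^2$ down to order $k$, and the residual order-$k$ negative mass is still slightly too large for a direct application of Lemma \ref{lem:boundsaabb}; so the positive terms and the $R\ge 0$ inequality must be retained carefully (the analogue of keeping the extra positive summands in the proof of Lemma \ref{lem:ineqbba}) rather than discarded. Verifying that the regrouped residual is provably dominated by the constant $1$ — and checking the genuinely small cases by the explicit recursive values, where at $\ell-1=0$ the exchangeability identity $r(x_1x_2\mid k-1,0)=r(x_1y_1\mid k-1,0)$ produces the exact cancellations that drive positivity — is the delicate bookkeeping that carries the argument, after which Proposition \ref{prop:aabbc} follows by conditioning as in Lemmas \ref{lem:bba} and \ref{lem:abb}.
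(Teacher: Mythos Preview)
Your proposal is correct and matches the paper's proof essentially step for step: the same base-case cancellations via exchangeability, and the same inductive decomposition of $\binom k2 S(k,\ell)$ into $\binom{k-3}2 S(k-1,\ell-1)$, a positive multiple of $R(k-1,\ell-1)$, the constant $1$, leftover nonnegative probability terms, and $-(k-3)U(k-1,\ell-1)$, with Lemma \ref{lem:boundsaabb} supplying the bound. Your final paragraph slightly overstates the difficulty: in the paper's regrouping the coefficient on $U$ is exactly $k-3$, so $(k-3)U(k-1,\ell-1)<(k-3)\cdot\tfrac1{k-3}=1$ already balances the constant $1$ strictly, and the remaining positive terms (including $kR(k-1,\ell-1)$) can simply be dropped to $0$ rather than ``retained carefully.''
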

\begin{proof}
Since for $k=4$, the events $Sp(x_1x_2)=Sp(x_3x_4y_1y_2)$, $Sp(x_1x_2x_3)=Sp(x_4y_1y_2)$, and 
$Sp(x_1x_2x_3y_1)=Sp(x_4y_2)$, so using exchangeability of the $x_i$ and of the $y_i$ lineages
we have
\begin{align*}
r(x_1x_2\mid 4,\ell)&=r(x_1x_2y_1y_2\mid 4,\ell),\\
r(x_1x_2x_3\mid 4,\ell)&=r(x_1y_1y_2\mid 4,\ell),\\
r(x_1x_2x_3y_1\mid 4,\ell)&=r(x_1y_1\mid 4,\ell),
\end{align*}
so
$S(4,\ell)=0$ for $\ell=0,1,2,3$.  
For $k\ge 5$, Lemma \ref{lem:recurabb} \eqref{case:abb0}  implies $S(k,0)=0$.

For $k\ge 5$, $k>\ell\ge 1$, using Lemmas \ref{lem:recurabb} and \ref{lem:recuraabb} we find
\begin{align*}S(k,\ell)&=
\frac 1{\binom k2}\Big ( \binom{k-3}2 S(k-1,\ell-1) -(k-3)U(k-1,\ell-1)+kR(k-1,\ell-1)\\
& \qquad  + 1+  2r(x_1y_1\mid k-1,\ell-1)+r(x_1y_1y_2\mid k-1,\ell-1)) \Big ).
\end{align*}
Using an inductive hypothesis, Lemmas \ref{lem:boundsaabb}, and \ref{lem:ineqabb} 
and non-negativity of probabilities, this implies
$$S(k,\ell)> \frac 1{\binom k2} \left (\binom{k-3}2\cdot 0-(k-3)\frac 1{k-3}+k\cdot 0+1+ 2 \cdot 0+0\right )=0.$$
\end{proof}

\begin{proof}[Proof of Proposition \ref{prop:aabbc}]
  On the species tree $((T,(a_1,a_2)),(b_1,b_2))$ let $\rho$ denote
  the root, $v$ the MRCA of the taxa on $T$ and the $a_i$, and let $c$
  be a taxon on $T$.  We first show that since the edge above $v$ has
  positive length, then
\begin{multline}
\PP(Sp(a_1c))+\PP(Sp(a_1a_2c))+\PP(Sp(a_1b_2c))+\PP(Sp(a_1a_2b_2c))\label{eq:aabbexp}\\
-\PP(Sp(b_1c))- \PP(Sp(b_1a_2 c))- \PP(Sp(b_1b_2c))- \PP(Sp( b_1a_2 b_2c))>0.
\end{multline}

 To establish this, it is enough to show it holds when conditioned on whether  or not the 
 $a_1$ and $a_2$ lineages have coalesced before reaching $v$, and whether  or not the
 $b_1$ and $b_2$ lineages have coalesced before reaching 
$\rho$. If both pairs have coalesced 
 in this way, then the events $Sp(a_1c)$, $Sp(a_1b_2c)$, $Sp(a_1a_2b_2c)$, $Sp(b_1c)$, 
 $Sp(b_1a_2 c)$, and $Sp(b_1 a_2  b_2c)$ all have probability zero. Using 
 $a$ for $a_1a_2$ and $b$ for $b_1b_2$ we need only show
$$\PP(Sp(ac))-\PP(Sp(bc))>0.$$
This follows immediately from Proposition \ref{prop:acbc}. Similarly, the cases in 
which exactly one of  the pairs of $a_1, a_2$ lineages or $b_1, b_2$ lineages have coalesced in the 
population immediately ancestral to their respective MRCAs follow from Proposition \ref{prop:abbc}.

We henceforth condition on the event that the $a_i$ lineages are distinct at $v$ and 
the $b_i$ lineages are distinct at $\rho$.
Noticing that all eight probabilities in the expression of interest are 0 if the $c$ lineage coalesces with any 
lineage below $v$, we further condition on the $c$ lineage being distinct at $v$ (so
there are $k\ge 4$ lineages in total entering the population above $v$) and $\ell$ coalescent 
events occur between $v$ and $\rho$. 

Then, with $\mathcal C = \mathcal C (k,\ell)$ denoting the event that these conditioning requirements are met, 
\begin{align*}
\PP(Sp(a_1c)\mid \mathcal C)&=r(x_1x_2\mid k,\ell),\\
\PP(Sp(a_1a_2c)\mid \mathcal C)&=r(x_1x_2x_3\mid k,\ell),\\
\PP(Sp(a_1b_2c)\mid \mathcal C)&=r(x_1x_2y_1\mid k,\ell),\\
\PP(Sp(a_1a_2b_2c)\mid \mathcal C)&=r(x_1x_2x_3y_1\mid k,\ell),\\
\PP(Sp(b_1c)\mid \mathcal C)&=r(x_1y_1\mid k, \ell),\\
\PP(Sp(b_1a_2  c)\mid \mathcal C)&=r(x_1x_2y_1\mid k,\ell),\\
\PP(Sp(b_1b_2c)\mid \mathcal C)&=r(x_1y_1y_2\mid k,\ell),\\
\PP(Sp(b_1a_2 b_2c)\mid \mathcal C)&=r(x_1x_2y_1y_2\mid k,\ell).\\
\end{align*}
After substituting these in to the expression in \eqref{eq:aabbexp},
from Lemma \ref{lem:ineqaabb} we see that when conditioned on $\mathcal C$ it is strictly 
positive for $k\ge 5$, $k\ge \ell\ge 1$ and zero for $k=4$, $\ell=0,1,2,3$ and for $k\ge 5$, $\ell=0$. 
Thus weighting the conditioned expressions by the 
probabilities of the $\mathcal C$ and summing
over all relevant $k$ and $\ell$, we see the unconditioned inequality \eqref{eq:aabbexp} holds 
since  $T$ has at least 3 taxa so  summands with $k\ge 5$, $\ell\ge 1$ are present. 

Interchanging the $a_i$ and $b_i$ in inequality \eqref{eq:aabbexp} shows the negativity 
of the expression on the tree $((T,(b_1,b_2)),(a_1,a_2))$. Since its vanishing on the tree 
$(T,((a_1,a_2),(b_1,b_2)))$ 
was shown in Theorem \ref{thm:cladesplitinv}, the proof is complete.
\end{proof}

\bibliographystyle{alpha}
\bibliography{Split_probs}

\end{document}